\documentclass[10pt,reqno]{amsart}
\usepackage{amsmath}
\usepackage{latexsym}
\usepackage{amsfonts}
\usepackage{amssymb}
\usepackage{color}
\usepackage{bbm,dsfont}
\usepackage{graphicx}
\usepackage{hyperref}


 

\newtheorem{proposition?}{Proposition?}
\newtheorem{theorem}{Theorem}
\newtheorem{lemma}{Lemma}
\newtheorem{corollary}{Corollary}
\newtheorem{conjecture}{Conjecture}

\theoremstyle{definition}

\newtheorem{remark}{Remark}

\newtheorem{definition}{Definition}




\newcommand{\real}{\mathbb R} 
\newcommand{\complex}{\mathbb C} 
\newcommand{\nat}{\mathbb N} 

\newcommand{\hi}{\mathcal{H}} 
\newcommand{\hik}{\mathcal{K}} 
\newcommand{\lh}{\mathcal{L(H)}} 
\newcommand{\id}{\mathbbm{1}} 


\usepackage[left=2cm,top=2cm,right=2cm, bottom=1.95cm, a4paper]{geometry}

\setlength{\parindent}{0 pt}
\setlength{\parskip}{\baselineskip}

\begin{document}
\title[]{Equivalence classes and local asymptotic normality in system identification for quantum Markov chains}

\author[Guta]{Madalin Guta}
\email{madalin.guta@nottingham.ac.uk}
\author[Kiukas]{Jukka Kiukas}
\email{jukka.kiukas@nottingham.ac.uk}
\address{School of Mathematical Sciences, University of Nottingham, University Park,
Nottingham, NG7 2RD, UK}

\date{}
 
\begin{abstract} {
We consider the problems of identifying and estimating dynamical parameters of an ergodic quantum Markov chain, when only the stationary output is accessible for measurements. On the identifiability question, we show that the knowledge of the output state completely fixes the dynamics up to a `coordinate transformation' consisting of a multiplication by a phase and a unitary conjugation of the Kraus operators. 
%
When the dynamics depends on an unknown parameter, we show that the latter can be estimated at the `standard' rate $n^{-1/2}$, and give an explicit expression of the (asymptotic) quantum Fisher information of the output, which is proportional to the Markov variance of a certain `generator'. More generally, we show that the output is locally asymptotically normal, i.e. it can be approximated by a simple quantum Gaussian model consisting of a coherent state whose mean is related to the unknown parameter. As a consistency check we prove that a parameter related to the `coordinate transformation' unitaries, has zero quantum Fisher information.
}
\end{abstract}


\maketitle

\section{Introduction}
Quantum system identification has recently received significant attention due to its relevance in understanding complex quantum dynamical systems and the development of quantum technologies \cite{Dowling&Milburn}. Among the different  setups under investigation, we mention channel tomography \cite{Fujiwara}, Hamiltonian identification \cite{Burgarth,Cole}, and the estimation of the Lindblad generator of an open dynamical system  \cite{Howard}.  

In this paper we study two system identification problems set in a context which is particularly relevant for quantum control engineering applications \cite{Mabuchi&Khaneja,Wiseman&Milburn}, namely the input-output formalism of 
quantum Markov dynamics, which in the continuous time framework has been studied in \cite{Gardiner&Zoller}. Our setup is that of a discrete time quantum Markov chain consisting of a quantum system (or `memory') $\mathcal{H} = \mathbb{C}^D$ interacting successively with  input ancillas (or `noise units') 
$\mathcal{K}= \mathbb{C}^k$ which are identically prepared in a known pure state $|\chi\rangle\in \mathcal{K}$ 
(see Figure \ref{fig.markov}). The goal is to learn about the dynamics encoded in the unitary operator $U:\hi\otimes \hik\to \hi\otimes\hik$, by measuring the quantum output consisting of the noise units after the interaction. We will assume that we do not have direct access to the `memory' system, and we do not control the input state, both assumptions corresponding to realistic experimental constraints, e.g. in atom maser experiments \cite{HarocheRaimond}. We will also assume 
that the system's transition operator is irreducible and aperiodic, so that in the long time limit the dynamics has forgotten the initial state and  has reached stationarity.

\begin{figure}[h]
\includegraphics[width=10cm]{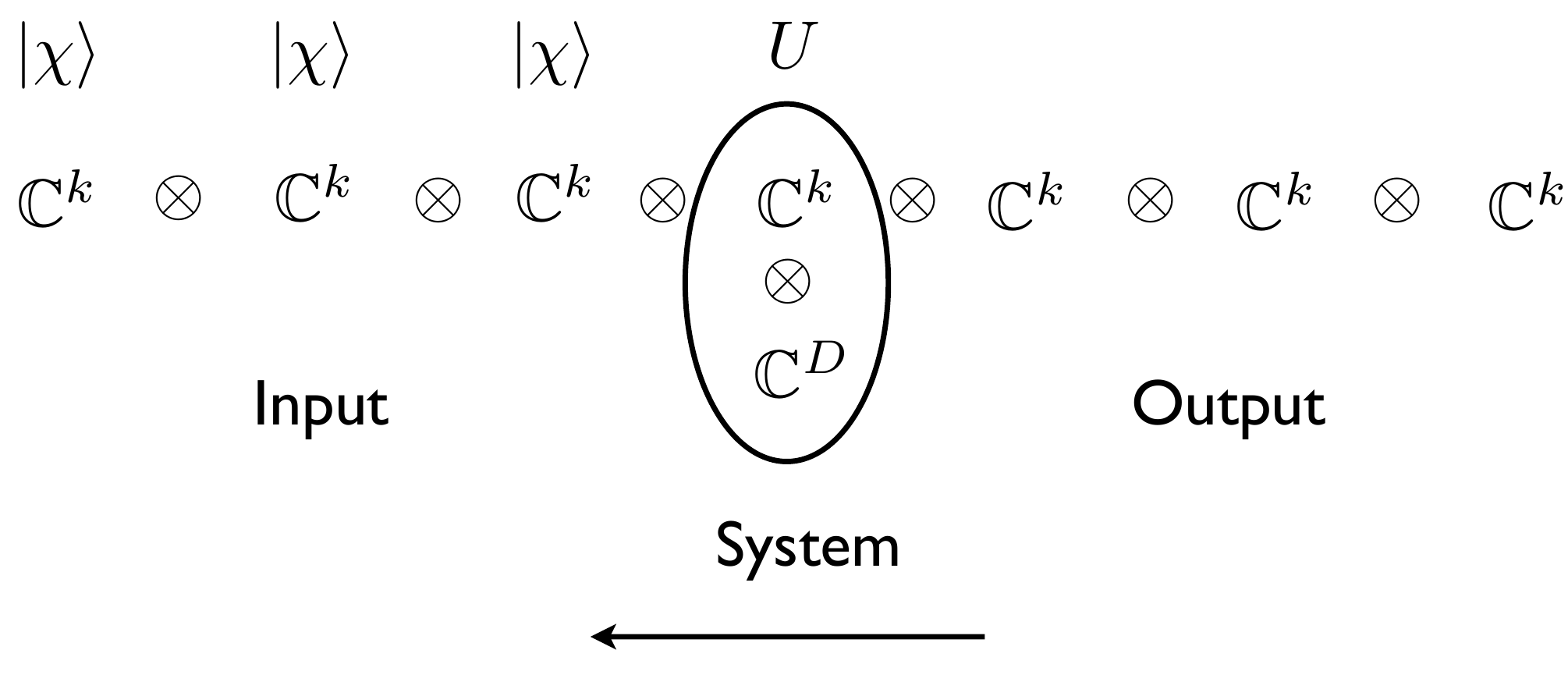}
\caption{Illustration of a quantum Markov chain. A sequence of identically prepared input ancillas (or `noise units') with space $\mathbb{C}^k$ interact successively (from right to left) with a system(or `memory') $\mathbb{C}^D$ via the unitary $U$. After $n$ interactions the output `noise units' are correlated with each other and with the system, and carry information about the unitary $U$.
}\label{fig.markov}
\end{figure}

In the stationary regime, the output is in a finitely correlated state \cite{FannesNachtergaeleWerner}  denoted $\rho^{out}_V$  which is completely determined by the isometry $V:\hi\to\hi\otimes\hik$ mapping the system into a system-ancilla space in one evolution step 
$$
V:|\varphi\rangle \mapsto  U (|\varphi\rangle \otimes |\chi\rangle).
$$
In particular, the restriction of the output state to the first $n$ noise units in the stationary regime is given by 
$\rho^{out}_V(n):={\rm tr}_{\rm \hi} [V(n)\rho_{ss} V(n)^*]$, where $\rho_{ss}$ is the system's stationary state and 
$V(n):\hi\to\hi\otimes \hik^{\otimes n}$ is the $n$-steps iteration of the isometry $V$ (see below for the precise definition). 

The first problem we address is that of finding the \emph{equivalence classes} of isometries with identical output states. In Theorem \ref{characterization} we show that $\rho^{out}_{V_1} = \rho^{out}_{V_2}$ if and only if $V_1$ and $V_2$ are related by an arbitrary complex phase $c$ and a conjugation with a local unitary $W$ on the system $\mathcal{H}$
$$
V_2=c (W^*\otimes \id_{\hik})V_1W.
$$   
We point out that a similar result holds in the following `classical' setup. 
A (finite) hidden Markov chain is a discrete time stochastic process $(X_n, Y_n)_{n\geq 0}$ consisting of an `underlying' Markov chain 
$(X_n)_{n\geq 0}$, with state space $\mathcal{X}:= \{1,\dots ,D\}$ and transition matrix $[T_{i,j}]_{i,i=1}^D$, and a sequence of random variables $(Y_n)_{n\geq 0}$ with values in $\mathcal{Y}:=\{1,\dots, k\}$, which `depend' on the underlying Markov dynamics. More precisely, $(Y_n)_{n\geq 0}$ are independent  when conditioned on $(X_n)_{n\geq 0}$, and $Y_n$ depends on the Markov chain only through $X_n$, with conditional distribution $R_{a,i} = \mathbb{P}(Y_n =a | X_n= i)$. We consider that $(X_n)_{n\geq 0}$ is `hidden' (not accessible to observations) and we observe the sequence $(Y_n)_{n\geq 0}$ whose marginal distribution depends only on the matrices $T$ and $R$ and the initial state $\pi$ of the Markov chain
$$
\mathbb{P}(Y_1= a_i, \dots ,Y_n= a_n)= \sum_{i_0,\dots, i_n} \pi(i_0) \prod_{j=0}^{n-1} T_{i_j, i_{j+1}} 
\prod_{j=0}^{n} R_{i_j, a_j}. 
$$
Under ergodicity conditions similar to ours, and certain additional generic conditions, Petrie \cite{Petrie} has shown that by observing $(Y_n)_{n\geq 0}$ in the stationary regime, we can identify the matrices $(T,R)$ up to a permutation of the labels of the hidden states in $\mathcal{X}$. Since the output process of a quantum Markov chain can be interpreted as a quantum analogue of a hidden Markov process, our result can be seen as a quantum extension of \cite{Petrie}.

After tackling the identifiability question, the second problem we consider is that of estimating the isometry $V$, or more precisely finding how the output state $\rho^{out}_V(n)$ varies with $V$, and how much statistical information it contains. Our approach is based on asymptotic statistics, and in particular on the concept of local asymptotic normality \cite{vanderVaart}. To illustrate this in the case of the  hidden Markov model, we assume that the matrices $(T,R)$ depend smoothly on an unknown real parameter $\theta$ which we would like to estimate, and denote by $\mathbb{P}^n_\theta$ the probability distribution of the data $(Y_1, \dots ,Y_n)$.  For large $n$ the unknown parameter can be `localised' in a neighbourhood of size $n^{-1/2}$  of a fixed value $\theta_0$ (e.g. a rough estimator), such that $\theta= \theta_0+ u/\sqrt{n}$ with $u$ a `local parameter' to be estimated. Local asymptotic normality means that the statistical model 
$\mathbb{P}^n_{\theta_0+u/\sqrt{n}} $ can be approximated by a simple Gaussian model consisting of a single sample from the normal distribution $N(u,I^{-1}_{\theta_0})$ with mean $u$ and fixed variance \cite{BickelRitovRyden2}. Closely related to this, and perhaps more relevant for the practitioner, is the fact that the maximum likelihood estimator 
$\hat{\theta}_n=\hat{\theta}_n(Y_1, \dots ,Y_n) $ is asymptotically normal \cite{Douc,BickelRitovRyden}, i.e.
$$
\sqrt{n} (\hat{\theta}_n-\theta) \overset{\mathcal{D}}{\longrightarrow} N(0,I^{-1}_{\theta_0})
$$ 
where the convergence holds in distribution, when $n\to\infty$. In particular, the mean square error 
$\mathbb{E}\left[(\hat{\theta}_n-\theta)^2\right]$ scales as $1/n $ with the best possible constant $I^{-1}_{\theta_0}$ which is the inverse limiting Fisher information per sample of the hidden Markov chain, at $\theta_0$.

Returning to our quantum system identification problem, we note that in order to find good estimators we need to deal with the optimisation problem of finding the `most informative' output measurement. In quantum state tomography this problem has been approached by developing quantum analogues of several key statistical notions such as quantum Fisher information \cite{Holevo,Helstrom}, and quantum local asymptotic normality \cite{KahnGuta,Guta&Jencova}. Equipped with these mathematical tools it is possible to solve the optimal state estimation problem in an asymptotic regime, by transforming it into a simpler Gaussian estimation one \cite{GillGuta}.  We will therefore apply the same strategy for the Markov system identification problem, but we note that since the output `noise units' are not independent, we cannot use the existing state tomography theory in a straightforward fashion. Instead, we developed a notion of convergence of \emph{pure states} statistical models based on the simple idea that two models are close to each other if the inner products of all corresponding pairs of vectors are very similar. To illustrate this, let $|\Psi_{u,\varphi}(n)\rangle: = V_{\theta_0 +u/\sqrt{n}}(n) |\varphi\rangle$ be the joint system and output state after $n$ steps, with initial condition $|\varphi\rangle\in \mathcal{H}$. In Theorem \ref{th.qlan} we show that 
$$
\lim_{n\to \infty} \left\langle \exp(-iau^2) \Psi_{u,\varphi}(n) |\exp(-iav^2) \Psi_{v,\varphi}(n)\right\rangle = \langle \sqrt{F/2} u | \sqrt{F/2}v\rangle =  \exp(- F(u-v)^2/8)
$$
where the right side is the inner product of two coherent states of the limit Gaussian model $|\sqrt{F/2}v\rangle$, with quantum Fisher information $F$. The latter can be computed explicitly and can be interpreted as a certain Markov variance of the `generator' $G= -iV\dot{V}^*$, similarly to the well known formula for unitary rotation families of states. Building on this result, we show that the convergence to the Gaussian model can be extended to the mixed stationary output state itself, and can be formulated in a strong, operational way by means of quantum channels connecting the models in both directions \cite{KahnGuta}. In this statistical picture, the local unitary conjugation corresponding to changes inside an equivalence class has quantum Fisher information equal to zero, in agreement with the fact that such parameters are not identifiable.

The paper is organised as follows. In section \ref{notations} we review the formalism of quantum Markov chains, discuss the ergodicity assumptions and introduce the `Markov covariance' inner product, which will be used later in interpreting the quantum Fisher information. In section \ref{sec.equivalence} we prove Theorem \ref{characterization} which characterises the equivalence classes of chains with identical outputs. Section \ref{sec.intermezzo} is a brief introduction to the statistical background of the paper, centred around the notion of convergence of quantum statistical models, and local asymptotic normality. The main result here is Lemma \ref{th.weak.implies.strong} which can be used to convert weak convergence of pure states models into strong convergence. Section \ref{sec.lan} contains several local asymptotic normality results for different versions of the output state. In all cases, the quantum model described by the output state can be approximated by a quantum Gaussian model consisting of a one parameter family of coherent states. 
Some of the more technical proofs are collected in section \ref{proof.LAN}.

A special case of the present local asymptotic normality result has been obtained in \cite{Guta}. A detailed study comparing the quantum Fisher information with classical Fisher informations of various counting statistics for the atom maser model can be found in \cite{CatanaGutavanHorssen,CatanaGutaKypraios}.  The continuous time version of the local asymptotic normality result will appear in a forthcoming publication \cite{Catana&Bouten&Guta}.
\section{Quantum Markov chains}\label{notations}

We begin by describing the general framework quantum Markov chains, and establishing some of the notations used in the paper. In the second part of this section we introduce a positive inner product describing the variance of Markov fluctuation operators, which will be used later for interpreting the limiting quantum Fisher information.


\subsection{Output state, Markov transition operator and the quantum Perron-Frobenius Theorem}\label{sec.Markov.dynamics}

A discrete time quantum Markov chain consists of a `system' with Hilbert space $\hi:= \complex^D$ which interacts successively 
with `noise units' (or ancillas, quantum coins) with identical Hilbert spaces $\hik:= \complex^k$, cf. Figure \ref{fig.markov}. The noise units are initially prepared independently, in the same state $|\chi\rangle\in \hik$, and the system has initial state $|\varphi\rangle\in \hi$.  
The interaction is described by a unitary operator $U:\hi\otimes \hik\to \hi\otimes\hik$, such that after one step, the joint state of system and the first unit  is $U(|\varphi\rangle\otimes |\chi\rangle)$, while the remaining units are still in the initial state. After the interaction with the first noise unit, the system moves one step to the \emph{left} and the same operation is repeated between system and the second unit, and so on. Alternatively, one can think that the system is fixed and the 
chain is shifted to the right. After $n$ steps the state of the system and the $n$ noise units is therefore
$$
|\Psi(n)\rangle:=  U(n)  |\varphi \otimes \chi^{\otimes n}\rangle \in \hi\otimes \hik^{\otimes n}
$$
where $U(n)$ is the product of unitaries $U^{(n)}\dots U^{(1)}$ with $U^{(i)}$ denoting the copy of $U$ which acts on system and the $i$-th noise unit, counting from \emph{right to left} according to the dynamics of Figure \ref{fig.markov}. 

Our goal is to investigate how the output state (reduced state of the noise units after the interaction) depends on the unitary $U$ and in particular which dynamical parameters can be identified by performing measurements on the output. With this in mind we note that the state $|\Psi(n)\rangle$ can be expressed as
$$
|\Psi(n)\rangle=  V(n)  |\varphi\rangle  
$$
where $V(1)\equiv V:\hi\to\hi\otimes\hik$ is the isometry given by $V|\varphi\rangle:= U|\varphi \otimes \chi\rangle$, and 
$V(n) :\hi\to\hi\otimes\hik^{\otimes n} $ is obtained iteratively from $V(k+1):=(V\otimes \id_{\hik^{\otimes k}})V(k)$. 
Therefore, since the input state is fixed and known, the output state depends on $U$ only through the isometry $V$ which will be the focus of our attention for most of the paper.

Let us fix orthonormal bases $\{|e_j \rangle\}_{j=1}^D$ and $\{|i\rangle\}_{i=1}^k$ in $\hi$ and respectively $\hik$, and let ${\bf K}:=\{K_i\}_{i=1}^k$ be the collection of Kraus operators acting on $\hi$, uniquely defined by the equation
\begin{equation}\label{VK}
V|\varphi\rangle = \sum_{i=1}^k K_i |\varphi\rangle\otimes |i\rangle.
\end{equation}
which satisfy the normalisation condition
\begin{equation}\label{normalisation}
\sum_{i=1}^k K_i^*K_i =\id
\end{equation}
Conversely, any set ${\bf K}:=\{K_i\}_{i=1}^k$ satisfying \eqref{normalisation} has a unique associated isometry $V:\hi\to \hi\otimes \hik$ given by \eqref{VK}. The system-output state can be written more explicitly in a matrix product form 
$$
|\Psi(n)\rangle= \sum_{{\bf i}\in I^{(m)}} {\bf K}_{\bf i}^{(m)}|\varphi\rangle \otimes |{\bf i}\rangle,
$$
where $I^{(m)}$ denotes the set of multi-indices ${\bf i}=(i_1,\ldots,i_m)$, and for each ${\bf i}\in I^{(m)}$ we denote by ${\bf K}_{\bf i}^{(m)}$ the $m$-step Kraus operator (note the backwards ordering)
$$
{\bf K}_{\bf i}^{(m)}:=K_{i_m}K_{i_{m-1}}\cdots K_{i_2}K_{i_1}.
$$
For later purposes  we define the concatenation of multi-indices ${\bf j}\in I^{(k)}$ and ${\bf i}\in I^{(m)}$ via $${\bf j}{\bf i}:= (i_1,\ldots, i_m,j_1,\ldots j_k)\in I^{(k+m)},$$
so that ${\bf K}_{\bf ji}^{(k+m)}={\bf K}_{\bf j}^{(k)}{\bf K}_{\bf i}^{(m)}$.

The properties of the system-output state $|\Psi(n)\rangle$ depend crucially on the irreducibility of the restricted system dynamics. 
Let $\mathcal{A}:= \mathcal{B}(\mathcal{H})$ and $\mathcal{B}:= \mathcal{B}(\mathcal{K})$ be the algebras of system and noise observables, and let us denote by 
$\mathcal{A}_*, \mathcal{B}_*$ their preduals (linear spans of density matrices). In the Schr\"{o}dinger picture the reduced evolution of the system is obtained by iterating the transition operator (quantum channel)
\begin{eqnarray*}
T_* &:& \mathcal{A}_* \to  \mathcal{A}_*\\
T_* &:&\rho                   \mapsto  V\rho V^* = \sum_{i=1}^k K_i \rho K_i^*
\end{eqnarray*}
as it can be seen from the identity
$$
V(n)  \rho V(n)^{*} = \sum_{{\bf i} \in I^{(n)}} {\bf K}_{\bf i}^{(n)}\rho {\bf K}_{\bf i}^{(n)*}=  T^n(\rho).
$$
The dual (Heisenberg) evolution is given by the unit preserving CP map 
\begin{eqnarray}
T&:& \mathcal{A} \to  \mathcal{A}\nonumber\\
T &:&X                   \mapsto  V^*X\otimes \id_{\hik}  V =  \sum_{i=1}^k K_i^* XK_i \label{eq.T&V}.
\end{eqnarray}

As for classical Markov chains, the system's space may possess non-trivial invariant subspaces, and have multiple stationary states. 
We will restrict our attention to such `building blocks' defined by the following properties.

\begin{definition}
A channel $T:\mathcal{A}\to \mathcal{A}$ is called 
\begin{itemize}
\item[(i)]
\emph{irreducible} if there exists $n\in \mathbb N$ such that $({\rm Id}+T)^n$ is strictly positive, 
i.e.  
$
({\rm Id}+T)^n(X) >0 
$
for all positive operators $X$. 

\item[(ii)] \emph{primitive}, if there exists an $n\in \mathbb N$, such that
$
T^n 
$
is strictly positive. 
\end{itemize}
We call an isometry $V:\hi\to\hi\otimes\hik$ irreducible (primitive), if the associated channel defined in \eqref{eq.T&V} is irreducible (primitive).
\end{definition}

Clearly primitivity is a stronger requirement than irreducibility. The following theorem (see \cite{Evans,Wolf}) collects the essential properties of irreducible (primitive) quantum transition operators needed in this paper.

\begin{theorem}[{\bf quantum Perron-Frobenius}]
Let  $T:\mathcal{A}\to \mathcal{A}$  be a completely positive map and let $r:= \max_i |\lambda_i| $ be its spectral radius, where 
$\{ \lambda_1, \dots ,\lambda_{D^2} \}$  are the (complex) eigenvalues of $T$ arranged in decreasing order of magnitude. Then 
\begin{itemize}
\item[(i)]
$r$ is an eigenvalues of $T$, and it has a positive eigenvector.

\vspace{1mm}

\item[(ii)]
If additionally, $T$ is unit preserving then $r=1$ with eigenvector $\id$.

\vspace{1mm}

\item[(iii)]
If additionally, $T$ is irreducible then $r$ is a non-degenerate eigenvalue for $T$ and $T_*$, and both corresponding eigenvectors are strictly positive.

\vspace{1mm}
 
\item[(iv)] If additionally, $T$ is primitive,  then $|\lambda_i| <r$ for all other eigenvalues than $r$.
\end{itemize}

\end{theorem}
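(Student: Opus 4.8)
The plan is to derive all four parts from the single structural fact that $T$, being completely positive, preserves the cone $\mathcal{A}_+$ of positive semidefinite operators, which is a proper (closed, convex, pointed, solid) cone inside the real space $\mathcal{A}_{sa}$ of self-adjoint operators; since $T$ is $*$-preserving it restricts to a real-linear operator on $\mathcal{A}_{sa}$ whose complexification is $T$ on $\mathcal{A}$, so $r$ and the eigenvalues $\lambda_i$ may be computed on either space. For (i) I would invoke the Perron--Frobenius--Krein--Rutman theorem for a cone-preserving map on a finite-dimensional ordered vector space: as $T(\mathcal{A}_+)\subseteq\mathcal{A}_+$, the spectral radius $r$ is an eigenvalue with an eigenvector in $\mathcal{A}_+$, i.e. a positive eigenvector. (A hands-on alternative applies Brouwer's fixed point theorem to $\rho\mapsto T_*(\rho)/\tr{T_*(\rho)}$ on the state space.) For (ii) the identity $T(\id)=\id$ already exhibits $1$ as an eigenvalue; and for self-adjoint $X$ the bounds $-\no{X}\id\le X\le\no{X}\id$ are preserved by the positive map $T$, giving $\no{T(X)}\le\no{X}$, so the spectral radius obeys $r\le 1$ and hence $r=1$.

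For (iii) I would first upgrade positivity to strict positivity and then prove simplicity. Irreducibility of $T$ is equivalent to that of $T_*$ (the trace pairing makes $\mathcal{A}_+$ self-dual, and $({\rm Id}+T)^n$ is strictly positive iff $({\rm Id}+T_*)^n$ is), so both $({\rm Id}+T)^n$ and $({\rm Id}+T_*)^n$ are strictly positive for a common $n$. If $X_0\ge0$, $X_0\neq0$ and $T(X_0)=rX_0$, then $({\rm Id}+T)^n(X_0)=(1+r)^nX_0$ is strictly positive, forcing $X_0>0$; the same argument gives a strictly positive eigenvector $\rho_0>0$ of $T_*$. For geometric simplicity, any self-adjoint eigenvector $h$ with $T(h)=rh$ (one may take eigenvectors self-adjoint since $r$ is real) that is not proportional to $X_0$ yields, at the largest $t$ with $X_0-th\ge0$, a nonzero positive eigenvector lying on the boundary of the cone, contradicting strict positivity; hence the $r$-eigenspace is one-dimensional. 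Algebraic simplicity then follows from the left/right pairing: a generalized eigenvector $\xi$ with $(T-r)\xi=X_0$ would give $0=\langle(T_*-r)\rho_0,\xi\rangle=\tr{\rho_0X_0}>0$, so $r$ carries no Jordan block.

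For (iv) I expect the peripheral spectrum to be the real obstacle. I would normalise to the unital case using the strictly positive $X_0$ from (iii): the map $\hat T(Y):=r^{-1}X_0^{-1/2}T(X_0^{1/2}YX_0^{1/2})X_0^{-1/2}$ is unital, completely positive and primitive, has the spectrum of $T/r$, and admits a faithful invariant state $\hat\rho_0>0$. If $e^{i\theta}$ is a peripheral eigenvalue with eigenvector $Z$, the Kadison--Schwarz inequality for the unital CP map $\hat T$ gives $\hat T(Z^*Z)\ge\hat T(Z)^*\hat T(Z)=Z^*Z$; pairing the nonnegative difference with the faithful invariant state and using its invariance forces $\hat T(Z^*Z)=Z^*Z$, i.e. equality in Kadison--Schwarz. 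The equality case places $Z$ in the multiplicative domain of $\hat T$, so $Z$ intertwines $\hat T$ multiplicatively; iterating this relation and invoking strict positivity of some power $\hat T^m$ (primitivity) collapses the phase, $\theta\equiv0$, so that $r$ is the unique eigenvalue of maximal modulus. The delicate point is precisely this last step---turning equality in Kadison--Schwarz plus primitivity into triviality of the peripheral phase by controlling the structure of the peripheral eigenvectors; a cleaner but heavier alternative is Birkhoff's theorem, showing that a strictly positive map is a strict contraction in the Hilbert projective metric on the interior of the cone, which delivers simplicity and strict spectral dominance simultaneously.
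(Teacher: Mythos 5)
First, a point of comparison: the paper does not prove this theorem at all --- it is quoted as known, with references to \cite{Evans} and \cite{Wolf} --- so your proposal has to be judged on its own merits rather than against an internal argument. Parts (i)--(iii) of your proposal hold up: the reduction to the real space of self-adjoint elements, the finite-dimensional Krein--Rutman (Vandergraft) theorem for the invariant cone, the estimate $-\no{X}\id\le X\le\no{X}\id$ giving $r\le 1$ in the unital case, the self-duality argument identifying irreducibility of $T$ with that of $T_*$, the boundary-of-the-cone argument for geometric simplicity, and the pairing $\tr{\rho_0 X_0}>0$ excluding a Jordan block are all correct and are essentially the standard proofs in the cited literature. (One cosmetic repair: for "the largest $t$ with $X_0-th\ge 0$" to exist you must first replace $h$ by $-h$ when $h\le 0$; irreducibility then makes $-h>0$ and the supremum is finite.)

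Part (iv), however, is not a proof as written, and the missing step is exactly where the content of (iv) lies --- as you yourself flag. You correctly reduce to a unital primitive $\hat T$ with faithful invariant state and correctly derive $\hat T(Z^*Z)=Z^*Z$ together with membership of $Z$ in the multiplicative domain, but "iterating this relation and invoking strict positivity of some power $\hat T^m$ collapses the phase" is an assertion, not an argument: iteration alone only gives $\hat T(Z^n)=e^{in\theta}Z^n$, which in finite dimension forces at most $\theta\in 2\pi\rational$ (and gives nothing if $Z$ is nilpotent). The concrete missing chain is: (a) since the fixed-point space of $\hat T$ is $\complex\,\id$ by the non-degeneracy you proved in (iii) --- a dependence your sketch never invokes --- the fixed point $Z^*Z$ equals $c\id$, so $Z$ is proportional to a unitary $U$; (b) writing $U=\sum_j e^{i\phi_j}P_j$ and taking a unit vector $\psi$ in the range of some $P_k$, the number $e^{im\theta}e^{i\phi_k}=\langle\psi|\hat T^m(U)|\psi\rangle=\sum_j e^{i\phi_j}\langle\psi|\hat T^m(P_j)|\psi\rangle$ is a convex combination of points on the unit circle whose weights are strictly positive (strict positivity of $\hat T^m$) and sum to one (unitality); since it has modulus one, all $e^{i\phi_j}$ must coincide, so $U\propto\id$ and hence $e^{i\theta}=1$. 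With (a) and (b) inserted, your route closes; your alternative via Birkhoff's projective-metric contraction would also work, but as stated it is an appeal to another theorem rather than a proof.
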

We mainly need the following corollary of this theorem: If a channel $T$ is irreducible then it has a unique full rank stationary state, and if $T$ is also primitive then any state 
converges to the stationary state $\rho_{ss}$ in the long run (mixing or ergodicity property);
$$
\lim_{n\to\infty} (T_*)^n (\rho) = \rho_{ss},  
$$
or in the Heisenberg picture
\begin{equation}\label{statlimit}
\lim_{n\rightarrow \infty} T^n(X)= {\rm tr}[X\rho_{ss}]\id.
\end{equation}
Moreover, the speed of convergence is exponential, the rate depending on the second largest eigenvalue of $T$.
\subsection{The Markov covariance inner product}\label{sec.markov.covariance}
In this section we introduce an inner product playing the role of a `Markov covariance', whose relevance will be become apparent when computing the quantum Fisher information of the output. On a deeper level, we conjecture that the covariance is accompanied by a Central Limit Theorem, but this topic will not be pursued here (see \cite{Guta,Guta&vanHorssen} for the special case of output observables).

%

In this section we assume that $V$ is a primitive isometry, and we denote by $\langle Z \rangle^{in}_{ss} $ the expectation with respect to the input state $\rho_{ss} \otimes |\chi\rangle \langle\chi|^{\otimes{n}}$. Let $\mathcal{A}\otimes \mathcal{B}^{(j)}\subset\mathcal{A}\otimes \mathcal{B}^{\otimes n}$ be the subalgebra of observables of the 
system together with the $j$-th (from right to left) copy of the noise unit $\mathcal{B}$, and for any 
$X\in\mathcal{A}\otimes \mathcal{B} $ we denote by $X^{(j)}$ its version in $\mathcal{A}\otimes \mathcal{B}^{(j)}$. 
If $X\in\mathcal{A}\otimes \mathcal{B} $ we define the Heisenberg evolved operator after $i$ steps by
$$
X(i)= U^{(1)*} \dots U^{(i)*} X^{(i)} U^{(i)}\dots U^{(1)} \in \mathcal{A}\otimes\mathcal{B}^{\otimes n},
$$ 
and note that its stationary mean does not depend on $i$ and is given by
$$
\langle X(i)\rangle^{in}_{ss}=  
\langle \mathcal{E} (X) \rangle^{in}_{ss} 
= {\rm Tr( \rho_{ss} \mathcal{E}(X)})
$$
where $\mathcal{E}$ is the conditional expectation 
\begin{eqnarray*}
\mathcal{E}: \mathcal{A}\otimes \mathcal{B} &\to &\mathcal{A}\\
X&\mapsto &\langle \chi |U^* XU|\chi\rangle = V^* X V.
\end{eqnarray*} 
For all mean zero operators $ X\in\mathcal{A}\otimes\mathcal{B}$ we define the  associated \emph{fluctuations operator} by 
$$
\mathbb{F}_n(X) = \frac{1}{\sqrt{n}} \sum_{i=1}^n X(i)  \in \mathcal{A}\otimes \mathcal{B}^{\otimes n}.
$$
Finally, let $\mathcal{A}_0:= \{ A\in \mathcal{A}: {\rm Tr}[\rho_{ss}A]=0 \}$ and define $\mathcal{R}:\mathcal{A}_0\to \mathcal{A}_0$ to be the inverse of the restriction of the map ${\rm Id}- T$ to $\mathcal{A}_0$. Note that $\mathcal{R}$ is well defined, as it follows from the fact that $T$ is primitive and has a unique eigenvector $\id$ with eigenvalue $1$.


\begin{lemma}
Let $\mathcal{F}$ be the linear space of mean zero operators 
$\{ X\in \mathcal{A}\otimes \mathcal{B} : {\rm Tr}(\rho_{ss} \mathcal{E}(X))=0 \}$. For any  $X,Y\in\mathcal{F}$ and any $|\varphi\rangle\in \mathcal{K}$ the following limit exists 
$$
(X,Y)_V:=  \lim_{n\to\infty} \frac{1}{n}   
\left\langle \varphi \otimes \chi^{\otimes n}|  \mathbb{F}_n(X^*) \mathbb{F}_n(Y)  | \varphi \otimes \chi^{\otimes n}\right\rangle
$$
and defines a positive inner product on $\mathcal{F}$.

Moreover $(X,Y)_V$ has the explicit expression
$$
(X,Y)_V = {\rm Tr}\left\{ \rho_{ss} \mathcal{E}\left[
 X^*Y+ 
X^*  \left( \mathcal{R} \circ\mathcal{E}(Y)   \otimes \id \right) + 
\left(  \mathcal{R} \circ\mathcal{E}(X^*)  \otimes \id\right) Y  
\right]\right\}.
$$

\end{lemma}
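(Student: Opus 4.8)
The plan is to expand the fluctuation operators into a normalised double sum of two-point correlation functions, evaluate each correlation in closed form through the transition operator $T$ and the conditional expectation $\mathcal{E}$, and then recover the resolvent $\mathcal{R}$ and the stated formula by summing a geometric-type series whose convergence is guaranteed by the spectral gap of the primitive channel $T$. Concretely, using $\mathbb{F}_n(X)^*=\mathbb{F}_n(X^*)$ and $\mathbb{F}_n(X^*)\mathbb{F}_n(Y)=\tfrac1n\sum_{i,j=1}^n X^*(i)Y(j)$, the expression under the limit is an average of the two-point functions $\langle X^*(i)Y(j)\rangle$ over $1\le i,j\le n$ in the pure initial state $|\varphi\otimes\chi^{\otimes n}\rangle$. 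Positivity of the form is then immediate, since $(X,X)_V$ is a limit of the nonnegative numbers $\tfrac1n\langle\varphi\otimes\chi^{\otimes n}|\mathbb{F}_n(X)^*\mathbb{F}_n(X)|\varphi\otimes\chi^{\otimes n}\rangle=\tfrac1n\|\mathbb{F}_n(X)\,|\varphi\otimes\chi^{\otimes n}\rangle\|^2\ge 0$, while sesquilinearity and Hermitian symmetry follow from the linearity of $X\mapsto\mathbb{F}_n(X)$; so the real content is the existence of the limit and its explicit value.

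The heart of the argument is the closed form for $\langle X^*(i)Y(j)\rangle$. First I would cancel the telescoping unitaries in the product $X^*(i)Y(j)$ and use cyclicity of the trace to rewrite the expectation against the correlated state $U(j)\sigma U(j)^*=V(j)\rho V(j)^*\otimes|\chi\rangle\langle\chi|^{\otimes(n-j)}$, where $\sigma=|\varphi\rangle\langle\varphi|\otimes|\chi\rangle\langle\chi|^{\otimes n}$. Tracing out the noise units one at a time, the units in state $|\chi\rangle$ lying strictly between the two observation sites each contribute one application of $T$, the two sites contribute the conditional expectation $\mathcal{E}$, and the units preceding site $i$ propagate the initial state by $T_*$. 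This should yield, for $j>i$,
\[
\langle X^*(i)Y(j)\rangle = {\rm Tr}\!\left[T_*^{\,i-1}(|\varphi\rangle\langle\varphi|)\,\mathcal{E}\!\left(X^*\,(T^{\,j-i-1}\mathcal{E}(Y)\otimes\id)\right)\right],
\]
with the diagonal $i=j$ giving ${\rm Tr}[T_*^{\,i-1}(|\varphi\rangle\langle\varphi|)\,\mathcal{E}(X^*Y)]$ because $X^*(i)Y(i)=(X^*Y)(i)$, and the case $i>j$ giving the mirror expression in which $T^{\,i-j-1}\mathcal{E}(X^*)$ multiplies $Y$ from the left. I expect the careful operator-algebra bookkeeping here — the correct cancellation pattern of the $U^{(m)}$, the placement of $\mathcal{E}(X^*)$ and $\mathcal{E}(Y)$, and especially the noncommutative ordering inside $\mathcal{E}(\cdot)$ (the \emph{later} observation site being the one whose conditional expectation is propagated back) — to be the main obstacle.

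Finally I would replace the initial state by $\rho_{ss}$ and sum. By the mixing property, $T_*^{\,i-1}(|\varphi\rangle\langle\varphi|)=\rho_{ss}+\delta_{i-1}$ with $\|\delta_{i-1}\|\le C\gamma^{\,i-1}$, where $\gamma<1$ is set by the second-largest eigenvalue of $T$; inserting $\rho_{ss}$ produces the stationary correlations, and the $\delta_{i-1}$ errors are controlled since $\mathcal{E}(Y)\in\mathcal{A}_0$ forces $\|T^{\,j-i-1}\mathcal{E}(Y)\|\le C'\gamma^{\,j-i-1}$, so their averaged contribution is bounded by $\tfrac1n\sum_{i<j}C''\gamma^{\,i-1}\gamma^{\,j-i-1}=O(1/n)\to 0$. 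The $n$ diagonal terms average to ${\rm Tr}[\rho_{ss}\mathcal{E}(X^*Y)]$; grouping the off-diagonal pairs by gap $m=|i-j|$ (there are $n-m$ of each) and using $\tfrac{n-m}{n}\to1$ together with the dominating summable bound $C'\gamma^{\,m-1}$, dominated convergence gives for the upper-triangular part $\sum_{m\ge1}{\rm Tr}[\rho_{ss}\mathcal{E}(X^*(T^{\,m-1}\mathcal{E}(Y)\otimes\id))]$, and since $\sum_{m\ge1}T^{\,m-1}=\sum_{m\ge0}T^{\,m}=\mathcal{R}$ on $\mathcal{A}_0$ this collapses to ${\rm Tr}[\rho_{ss}\mathcal{E}(X^*(\mathcal{R}\mathcal{E}(Y)\otimes\id))]$, with the lower-triangular part yielding ${\rm Tr}[\rho_{ss}\mathcal{E}((\mathcal{R}\mathcal{E}(X^*)\otimes\id)Y)]$. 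Thus the limit exists, is independent of $\varphi$, and equals the claimed sum of three terms; throughout, it is the exponential contraction of $T$ on $\mathcal{A}_0$ that licenses both the interchange of limit and infinite sum and the vanishing of the initial-state error.
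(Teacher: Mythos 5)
Your proposal is correct and takes essentially the same route as the paper's own proof: expand the fluctuation operators into two-point correlators, express each correlator through $T_*^{\,i-1}$, the conditional expectation $\mathcal{E}$ and $T^{\,j-i-1}$, and resum the resulting geometric series into $\mathcal{R}$ using the spectral gap of the primitive channel. Your explicit dominated-convergence bookkeeping (grouping pairs by the gap $m$, with the bound $C'\gamma^{m-1}$) and your remark on positivity are somewhat more careful than the paper's presentation, but they implement the same argument rather than a different one.
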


\begin{proof}
For simplicity we denote by $\langle\cdot \rangle_{\varphi}^{in}$ the expectation with respect to the state 
$ | \varphi \otimes \chi^{\otimes n}\rangle$. By expanding the fluctuation operators we obtain
\begin{eqnarray*}
&&\frac{1}{n}  
\left\langle \varphi \otimes \chi^{\otimes n}|  \mathbb{F}_n(X^*) \mathbb{F}_n(Y)  | \varphi \otimes \chi^{\otimes n}\right\rangle =
\frac{1}{n}  
\sum_{i,j=1}^n 
\langle   X(i)^* Y(j) \rangle_{\varphi}^{in} \\
&=&\frac{1}{n} \sum_{i=1}^n \langle (X^*Y)(i)\rangle_{\varphi}^{in}+ 
\frac{1}{n}\sum_{i\neq j =1}^n \langle X(i)^*Y(j)\rangle_{\varphi}^{in}\\
&=& \frac{1}{n} \sum_{i=1}^n \langle \varphi | T^{i-1} \circ \mathcal{E}(X^*Y)| \varphi \rangle + 
\frac{1}{n}\sum_{1\leq i<j \leq n}^n \langle X(i)^*Y(j)\rangle_{\varphi}^{in}+  
\frac{1}{n}\sum_{1\leq j<i \leq n}^n \langle X(i)^*Y(j)\rangle_{\varphi}^{in}\\
&=& \frac{1}{n} \sum_{i=1}^n \langle \varphi | T^{i-1} \circ \mathcal{E}(X^*Y)| \varphi \rangle +
\frac{1}{n}\sum_{1\leq i<j \leq n}^n \langle \varphi  |T^{i-1}\circ\mathcal{E}\left[ X^*  \left( \id \otimes T^{j-i-1} \circ\mathcal{E}(Y) \right)\right] |
\varphi \rangle \\
&&+\frac{1}{n}\sum_{1\leq j<i \leq n}^n \langle \varphi  |T^{j-1}\circ\mathcal{E}\left[  \left( \id \otimes T^{i-j-1} \circ\mathcal{E}(X^*) \right) Y \right] |\varphi \rangle. \\
\end{eqnarray*}
Now, since $ T_*^{n} (|\varphi\rangle\langle \varphi|) $ converges to $\rho_{ss}$ exponentially fast as $n\to \infty$, the first term in the last equality converges to 
$ {\rm Tr}(\rho_{ss} \mathcal{E}(X^*Y))$. Similarly, the one of the summation indices in the second term can be changed to 
$k:= j-i-1$ such that it can be written as
$$
S_n:= \sum_{k=0}^{n-2} \frac{1}{n}\sum_{ i=1}^{n-k-1} \langle \varphi  |T^{i-1}\circ\mathcal{E}\left[ X^*  \left( \id \otimes T^{k} \circ\mathcal{E}(Y) \right)\right] |
\varphi \rangle.
$$
For any fixed $k$ the inner sum is dominated by terms with large $i$ and by the same stationarity argument as above, it 
converges to  
$$
{\rm Tr}\left( \rho_{ss} \mathcal{E}\left[ X^*  \left( \id \otimes T^{k} \circ\mathcal{E}(Y) \right)\right]\right).
$$
Now, by assumption, $\mathcal{E}(Y)$ belongs to the domain of $\mathcal{R}$ so that the sum converges
$$
\sum_{k\geq 0} T^k\circ\mathcal{E}(Y) = \mathcal{R}\circ\mathcal{E}(Y) .
$$
The two facts together imply that  
$$
\lim_{n\to\infty} S_n= {\rm Tr}\left\{ \rho_{ss} \mathcal{E}\left[
X^*  \left( \mathcal{R} \circ\mathcal{E}(Y)   \otimes \id \right) 
\right]\right\}.
$$

A similar reasoning applies to the third term of the sum.
\end{proof}

The previous lemma follows from the next conjecture which will not be investigated in this paper.  
\begin{conjecture}[Central Limit for quantum Markov chains]
Let $X\in \mathcal{A}\otimes \mathcal{B}$ be selfadjoint operator with $\langle X\rangle_{ss}^{out} =0$ The fluctuations operator 
$\mathbb{F}_n(X)$ satisfy the Central Limit Theorem
$$
\mathbb{F}_n(X) \overset{\mathcal{D}}{\longrightarrow} N(0, V_X)
$$ 
where $N(0, V_X)$ is the centred Gaussian distribution with variance $V_X = (X,X)_V$, and  the convergence holds as $n\to \infty$ in distribution with respect to the state $| \varphi \otimes \chi^{\otimes n}\rangle$ 
\end{conjecture}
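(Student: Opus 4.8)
The plan is to establish the central limit theorem by the method of moments, exploiting the mixing (spectral gap) of the primitive transition operator $T$. For each fixed $n$ the operator $\mathbb{F}_n(X)$ is bounded and self-adjoint, so its spectral distribution in the input state $|\varphi\otimes\chi^{\otimes n}\rangle$ is a genuine probability measure on $\mathbb{R}$ with finite moments of all orders, and the limit law $N(0,V_X)$ is determined by its moments. By the Fréchet--Shohat moment convergence theorem it therefore suffices to show that for every $m\geq 1$
$$
\langle \mathbb{F}_n(X)^m\rangle_\varphi^{in} \longrightarrow
\begin{cases} 0 & m \text{ odd},\\ (m-1)!!\, V_X^{m/2} & m \text{ even},\end{cases}
$$
where $V_X=(X,X)_V$ is the Markov covariance computed in the preceding Lemma; the case $m=2$ is exactly that Lemma.

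First I would expand the $m$-th moment as
$$
\langle \mathbb{F}_n(X)^m\rangle_\varphi^{in} = n^{-m/2} \sum_{i_1,\dots,i_m=1}^n \langle X(i_1)\cdots X(i_m)\rangle_\varphi^{in},
$$
and reduce each time-ordered correlator to a nested composition of the conditional expectation $\mathcal{E}$ with powers of $T$, exactly by the telescoping mechanism used in the proof of the Lemma. A product $X(i_1)\cdots X(i_m)$ whose sites are grouped into blocks separated by gaps $k_1,k_2,\dots$ collapses to an expression of the form $\mathrm{Tr}(\rho_{ss}\,\mathcal{E}[\cdots T^{k_1}\mathcal{E}[\cdots]\cdots])$, up to an exponentially small error coming from replacing $T_*^{\,i}(|\varphi\rangle\langle\varphi|)$ by $\rho_{ss}$.

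The analytic heart is a clustering estimate derived from the spectral decomposition $T = P + T'$, where $P(\,\cdot\,)=\mathrm{Tr}(\rho_{ss}\,\cdot\,)\,\id$ is the projection onto the stationary eigenvector and $\|(T')^k\|$ decays exponentially, at the rate set by the second largest eigenvalue of $T$ (which is strictly subdominant by primitivity and the quantum Perron--Frobenius theorem). Two blocks separated by a gap $k$ thus factorise up to a term of order $\lambda_2^{\,k}$, and the mean-zero hypothesis $\langle X\rangle_{ss}^{out}=\mathrm{Tr}(\rho_{ss}\mathcal{E}(X))=0$ forces $\mathcal{E}(X)\in\mathcal{A}_0$, so that any site whose operator is not contracted against a neighbour contributes zero in the stationary limit. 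Consequently a configuration $(i_1,\dots,i_m)$ survives at leading order only when its indices organise into $m/2$ internally close pairs that are mutually far apart. Counting: for each of the $(m-1)!!$ pair partitions of the $m$ factors, the $m/2$ well-separated pairs can be located in $\sim n^{m/2}$ ways, while the summation over the gap inside each pair reconstructs the resolvent $\mathcal{R}=(\mathrm{Id}-T)^{-1}|_{\mathcal{A}_0}$ and produces one factor of $V_X=(X,X)_V$; after cancelling the $n^{-m/2}$ normalisation this yields $(m-1)!!\,V_X^{m/2}$, whereas for odd $m$ no pairing exists and the moment vanishes.

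The main obstacle will be making this combinatorial bookkeeping rigorous in the non-commutative setting. Because the $X(i_j)$ do not commute, the two members of a pair occur in a definite order, so one must sum the \emph{ordered} two-point function over all intra-pair gaps and check that it still reproduces the symmetric form $(X,X)_V$ of the Lemma. More seriously, one must prove that all crossing pairings and all genuine higher-order clusters (triples and beyond) are suppressed by at least one additional power of $n^{-1/2}$ once the mean-zero constraint on intermediate operators is taken into account; this is where the exponential decay of $(T')^k$ and the uniform bound $\|X(i)\|=\|X\|$ must be combined into a dominated-convergence argument controlling the sum over the remaining free indices. Establishing this clustering and suppression estimate --- in effect a quantum cumulant expansion governed by the spectral gap --- is the delicate part; once it is in hand, the remaining Wick combinatorics is routine.
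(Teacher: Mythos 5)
You should note at the outset that the paper contains no proof of this statement to compare against: it is presented as a Conjecture which ``will not be investigated in this paper'', with only the special case $X\in\mathcal{B}$ (pure output observables) proved elsewhere \cite{Guta}. Your overall strategy --- moment method via Fr\'echet--Shohat, reduction of correlators to transfer-operator form, the spectral splitting $T=P+T'$ with $\|(T')^k\|$ decaying exponentially by primitivity, and Wick/pair-partition combinatorics --- is the natural route and is consistent with how the known special case and classical mixing CLTs are handled. So the plan is sound in outline.

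However, the proposal has a genuine gap, and it is exactly the step that constitutes the theorem. First, your claim that each correlator $\langle X(i_1)\cdots X(i_m)\rangle_\varphi^{in}$ collapses ``exactly by the telescoping mechanism used in the proof of the Lemma'' understates the problem: the Lemma's telescoping only treats products of \emph{two} Heisenberg operators, where for either time-order the later operator can be conditionally expected and inserted into the slot of the earlier one. For $m\geq 3$ factors whose product order does not coincide with the time order, contracting the noise units produces \emph{two-sided} insertions of the form $Z\mapsto\mathcal{E}\left[A\left(Z\otimes\id\right)B\right]$, because the operator sitting at a given time can be multiplied by later-time material from both the left and the right of the product; the resulting nested expression is not a composition of $T^k$ and $\mathcal{E}$ alone, and the exponential clustering bound must be established for these sandwiched maps uniformly over the inserted operators (and over collisions of indices). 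Second, granting that reduction, the assertion that isolated sites, triples, and crossing pairings are suppressed by additional powers of $n^{-1/2}$ is stated as an expectation but never argued: this suppression estimate, combined with a dominated-convergence control of the sums over free gap variables, is the analytic core, and you explicitly defer it (``once it is in hand, the remaining Wick combinatorics is routine''). Since the deferred estimate is the entire mathematical content of the conjecture, what you have written is a credible research plan rather than a proof.
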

The special case where $X\in \mathcal{B}$ has been proven in \cite{Guta}. Physically, it means that time averages of measurements of the observable $X$ are asymptotically Gaussian.   
\section{The equivalence class of quantum Markov chains with identical outputs}\label{sec.equivalence}

In Theorem \ref{characterization} of this section we answer the first question set in the introduction: which (mixing) quantum Markov chains have the same output states in the stationary regime? 
\begin{definition}\label{equiv_def} Let $V_l:\hi_l\to\hi_l\otimes\hik$, $l=1,2$ be two primitive isometries, where $\hi_l, \hik$ are finite-dimensional Hilbert spaces. Let $\rho_{ss,l}$ denote the respective stationary states. We call $V_1$ and $V_2$ \emph{equivalent}, if they have the same output states in the stationary regime, that is,
$$
{\rm tr}_{\hi_1}[V_1(n)\rho_{ss,1}V_1^*(n)]={\rm tr}_{\hi_2}[V_2(n)\rho_{ss,2} V_2^*(n)], \quad \text{ for all }n\in \nat.
$$
\end{definition}

We begin with a straightforward observation that for any given primitive isometry $V_1:\hi_1\to\hi_1\otimes\hik$, a number $c\in \complex$ with $|c|=1$, and a unitary $U:\hi_2\to\hi_1$, the isometry $V_2:=c(U^*\otimes I)V_1U$ is equivalent to $V_1$, and is primitive (with the stationary state $\rho_{ss,2}=U^*\rho_{ss,1}U$). The following Lemma characterises the case where two given primitive isometries are related this way. The proof is inspired by a similar argument from \cite{BaumgartnerNarnhofer}.

\begin{lemma}\label{firstlemma} Let $V_l:\hi_l\to\hi_l\otimes\hik$, $l=1,2$, be two primitive isometries, and define the maps
\begin{align*}
T_{ll'}:&\mathcal B(\hi_{l'},\hi_{l})\to \mathcal B(\hi_{l'},\hi_{l}), & T_{ll'}(X)&=V_l^*(X\otimes\id_{\mathcal K})V_{l'}.
\end{align*}
for $l,l'=1,2$. Then the following conditions are equivalent:
\begin{itemize}
\item[(i)] $T_{12}$ has an eigenvalue of modulus one;
\item[(ii)] $T_{21}$ has an eigenvalue of modulus one;
\item[(iii)] there exists a unitary operator $U:\hi_2\to\hi_1$, and $c\in \complex$ with $|c|=1$, such that
$$
V_2 = c(U^*\otimes\id_{\hik})V_1U.
$$
\end{itemize}
In that case, $T_{12}(U)=cU$ and $T_{21}(U^*)=\overline{c}U^*$.
\end{lemma}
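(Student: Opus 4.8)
The plan is to reduce everything to the peripheral spectrum of a single unital completely positive map built from both chains, and then to use a Kadison--Schwarz argument together with primitivity to promote an eigenvector into the intertwining unitary of (iii). Writing $V_l=\sum_i K_i^{(l)}\otimes|i\rangle$ for the Kraus decomposition of $V_l$, one has $T_{ll'}(X)=\sum_i K_i^{(l)*}XK_i^{(l')}$; in particular $T_{11},T_{22}$ are the primitive unital channels of the two chains, with faithful stationary states $\rho_{ss,1},\rho_{ss,2}$. On $\widehat{\mathcal H}:=\hi_1\oplus\hi_2$ I would introduce the block Kraus operators $\widehat K_i:=K_i^{(1)}\oplus K_i^{(2)}$ and the associated unital CP map $\widehat T(X)=\sum_i\widehat K_i^*X\widehat K_i=\widehat V^*(X\otimes\id_{\hik})\widehat V$, whose four blocks are precisely the $T_{ll'}$. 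A peripheral eigenvector $T_{12}(U)=cU$ with $|c|=1$ is then the same as $\widehat T(\widehat X)=c\widehat X$ for $\widehat X=\left(\begin{smallmatrix}0&U\\0&0\end{smallmatrix}\right)$, and this is the object I will analyse.

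For (i)$\Rightarrow$(iii) the first step is to show $U$ is a scalar multiple of a unitary. Kadison--Schwarz for the unital CP map $\widehat T$ gives $\widehat T(\widehat X^*\widehat X)\ge \widehat T(\widehat X)^*\widehat T(\widehat X)=|c|^2\widehat X^*\widehat X=\widehat X^*\widehat X$, and reading off the lower-right block yields $T_{22}(U^*U)\ge U^*U$. Pairing this inequality with the faithful invariant state $\rho_{ss,2}$ (so that $\mathrm{Tr}[\rho_{ss,2}T_{22}(\,\cdot\,)]=\mathrm{Tr}[\rho_{ss,2}\,\cdot\,]$) forces $\mathrm{Tr}[\rho_{ss,2}(T_{22}(U^*U)-U^*U)]=0$ with a nonnegative integrand, hence $T_{22}(U^*U)=U^*U$; since $T_{22}$ is primitive its only fixed points are scalars, so $U^*U=c_2\id_2$. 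The symmetric computation with $\widehat X\widehat X^*$ and $\rho_{ss,1}$ gives $UU^*=c_1\id_1$. Thus $U\ne0$ is simultaneously an isometry and a co-isometry up to positive scalars, so after rescaling I may take $U$ unitary (which also forces $\dim\hi_1=\dim\hi_2$).

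The second step promotes the scalar eigenvalue equation to a per-index relation. Since $\widehat V$ is an isometry and $\widehat T(X)=\widehat V^*(X\otimes\id)\widehat V$, the Kadison--Schwarz defect equals $\widehat V^*(\widehat X^*\otimes\id)(\id-\widehat V\widehat V^*)(\widehat X\otimes\id)\widehat V$, which is $W^*W$ for $W=(\id-\widehat V\widehat V^*)(\widehat X\otimes\id)\widehat V$. Having just shown the defect vanishes, $W=0$, i.e. $(\widehat X\otimes\id_{\hik})\widehat V=\widehat V\,\widehat T(\widehat X)=c\,\widehat V\widehat X$. Comparing the coefficients of $|i\rangle$ in this identity gives $\widehat X\widehat K_i=c\widehat K_i\widehat X$, whose off-diagonal block reads $UK_i^{(2)}=cK_i^{(1)}U$, equivalently $K_i^{(2)}=cU^*K_i^{(1)}U$ for every $i$. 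This is exactly $V_2=c(U^*\otimes\id_{\hik})V_1U$, which is (iii). This combined unitarity-plus-bootstrapping step, in which the equality case of Kadison--Schwarz is activated by faithfulness of the stationary state and by primitivity, is the crux of the argument and the main obstacle; everything else is bookkeeping.

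The remaining implications are routine. For (iii)$\Rightarrow$(i) I would substitute $K_i^{(2)}=cU^*K_i^{(1)}U$ directly into $T_{12}(U)=\sum_iK_i^{(1)*}UK_i^{(2)}=c\sum_iK_i^{(1)*}(UU^*)K_i^{(1)}U=c\sum_iK_i^{(1)*}K_i^{(1)}U=cU$, using $UU^*=\id_1$ and the normalisation $\sum_iK_i^{(1)*}K_i^{(1)}=\id_1$; in particular $|c|=1$ and $T_{12}(U)=cU$ as claimed. Finally, the equivalence (i)$\Leftrightarrow$(ii) and the identity $T_{21}(U^*)=\overline c\,U^*$ follow from the elementary relation $T_{21}(X)=T_{12}(X^*)^*$ (valid for $X\in\mathcal B(\hi_1,\hi_2)$): applying it with $X=U^*$ gives $T_{21}(U^*)=T_{12}(U)^*=\overline c\,U^*$, an eigenvalue of modulus one for $T_{21}$, while the reverse direction is identical after swapping the labels $1\leftrightarrow2$.
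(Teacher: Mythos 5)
Your proof is correct, and its skeleton matches the paper's: establish $T_{22}(U^*U)\geq U^*U$ from positivity of an isometry defect, use ergodicity plus the full-rank stationary state to force $U^*U$ and $UU^*$ to be positive scalars (hence $U$ unitary after rescaling), then exploit the equality case to extract the intertwining relation. The execution differs in two places worth noting. First, to pass from $T_{22}(U^*U)\geq U^*U$ to $U^*U\propto\id_{\hi_2}$, the paper iterates the inequality and uses the mixing limit $T_{22}^n(X)\to{\rm tr}[\rho_{ss,2}X]\id$ together with the projection onto the top eigenspace of $U^*U$; you instead pair once with the faithful stationary state to obtain the exact fixed-point equation $T_{22}(U^*U)=U^*U$ and then invoke non-degeneracy of the peripheral eigenvalue $1$ (Perron--Frobenius). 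Your version is shorter and needs only irreducibility rather than full primitivity at this step. Second, for the final step the paper argues vector-by-vector: $U^*T_{12}(U)=c\,\id$ forces equality in Cauchy--Schwarz for the unit vectors $V_1U\varphi$ and $(U\otimes\id_{\hik})V_2\varphi$, hence proportionality; you instead note that the (now vanishing) Kadison--Schwarz defect equals $W^*W$ with $W=(\id-\widehat V\widehat V^*)(\widehat X\otimes\id)\widehat V$, so $W=0$ yields the operator identity $(U\otimes\id_{\hik})V_2=cV_1U$ globally, from which the Kraus-level relations $UK_i^{(2)}=cK_i^{(1)}U$ follow. These are equivalent mechanisms: the paper's inequality $T_{22}(F^*F)\geq T_{21}(F^*)T_{12}(F)$, derived from $V_1V_1^*\leq\id$, is exactly the $(2,2)$ block of your Kadison--Schwarz defect for the direct-sum channel $\widehat T$ --- the same block channel the paper introduces only later, in the proof of Theorem \ref{characterization}. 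Your packaging makes that structure explicit and buys a mild generalization (irreducibility suffices); the paper's argument is more elementary and self-contained, avoiding any appeal to Kadison--Schwarz or to the fixed-point characterization.
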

\begin{proof} Conditions (i) and (ii) are clearly equivalent: if $T_{ll'}(F)=cF$ with some $c\in \complex$ and $F\in \mathcal B(\hi_{l'},\hi_{l})$, then $T_{l'l}(F^*)=\overline{c}F^*$.
Assuming (iii) we have
$$
T_{12}(U)= V_1^*(U\otimes \id_{\hik})V_2 = cV_1^*(U\otimes \id_{\hik})(U^*\otimes\id_{\hik})V_1U=cU,
$$
i.e. (i) holds, with $c$ the corresponding eigenvalue. Thus, the only nontrivial implication is (i)$\implies$ (iii).

Assume (i), let $c$ be an eigenvalue of $T_{12}$ modulus one, and $F$ such that $T_{12}(F)=cF$. Then from the definition of $T_{21}$ it follows immediately that $T_{21}(F^*)=\overline{c}F^*$. Since $V_1$ is an isometry, we have $V_1V_1^*\leq \id_{\hi_1\otimes\hik}$ (in fact, $V_1V_1^*$ is a projection). Hence,
\begin{align*}
T_{22}(F^*F)&=V_2^*(F^*\otimes \id_{\hik})(F\otimes \id_{\hik})V_2\\
&\geq V_2^*(F^*\otimes \id_{\hik})V_1V_1^*(F\otimes \id_{\hik})V_2=T_{21}(F^*)T_{12}(F) =|c|^2F^*F =F^*F,
\end{align*}
so by positivity of $T_{22}$,
$$
T_{22}^n(F^*F)\geq F^*F \quad \text{ for all } n\in \nat.
$$
Let $P$ be the projection onto the eigenspace of $F^*F$ corresponding to its largest eigenvalue $\|F^*F\|$. Now $\lim_{n\rightarrow\infty} T_{22}^n(X) ={\rm tr}[\rho_{ss,2} X]\id_{\hi_2}$ by the primitivity of $V_2$. Hence,
$$
{\rm tr}[\rho_{ss,2}F^*F]=\lim_{n\rightarrow\infty} {\rm tr}[P]^{-1}{\rm tr}[PT_{22}^n(F^*F)]\geq {\rm tr}[P]^{-1}{\rm tr}[PF^*F]=\|F^*F\|.
$$
This implies that ${\rm tr}[\rho_{ss,2}F^*F]=\|F^*F\|$, i.e. $\rho_{ss,2}$ is supported in the projection $P$. But $\rho_{ss,2}$ has full rank in $\hi_2$, so $P=\id_{\hi_2}$, and, consequently, $F^*F=\|F^*F\|\id_{\hi_2}$. By proceeding in exactly the same way using the primitive channel $T_{11}$, we show that $FF^*=\|FF^*\|\id_{\hi_1}$. Denote $\alpha:= \|FF^*\|=\|F^*F\|$, and $U:=\alpha^{-\frac 12} F$. Then $U:\hi_2\to\hi_1$ is a unitary operator between the two Hilbert spaces and in particular, $\dim\hi_1=\dim\hi_2$. Moreover,
\begin{equation*}
U^*V_1^*(U\otimes\id_{\hik})V_2=U^*T_{12}(U)=cU^*U = c\id_{\hi_2}.
\end{equation*}
For a unit vector $|\varphi\rangle\in \hi_2$, this implies
\begin{equation}\label{identity2}
\langle V_1U\varphi|(U\otimes\id_{\hik})V_2\varphi\rangle =c.
\end{equation}
But $\|V_1U\varphi\|=\|(U\otimes\id_{\hik})V_2\varphi\|=1$ by isometry, so there is actually \emph{equality} in the Cauchy-Schwartz inequality
$$
|\langle V_1U\varphi|(U\otimes\id_{\hik})V_2\varphi\rangle|\leq \|V_1U\varphi\|\|(U\otimes\id_{\hik})V_2\varphi\|.
$$
This is possible only if the two vectors are linearly dependent, i.e. there is a constant $\lambda\in \mathbb C$ such that
$$
|(U\otimes\id_{\hik})V_2\varphi\rangle=\lambda|V_1U\varphi\rangle.
$$
Putting this back in \eqref{identity2}, we see that $\lambda=c$. Since $\varphi$ was arbitrary, we have (ii), and the proof is complete.
\end{proof}

The following lemma deals with the case where all eigenvalues of the map $T_{12}$ of the preceding lemma have modulus strictly less than one.

\begin{lemma}\label{secondlemma} Let $V_l:\hi_l\to\hi_l\otimes\hik$, $l=1,2$, be two primitive isometries, and define $T_{ll'}$ as in Lemma \ref{firstlemma}. Let $\rho_{l}^{out}(n)$ be the associated output states.
Then the limits
$$
\lim_{n\rightarrow\infty} {\rm tr}[\rho_{1}^{out}(n)^2],\quad {\rm and}\quad \lim_{n\rightarrow\infty} {\rm tr}[\rho_{2}^{out}(n)^2]
$$
exist and are strictly positive. If
$\lim_{n\rightarrow\infty} T_{12}^n=0$, or, equivalently,
$\lim_{n\rightarrow\infty} T_{21}^n=0$, then
\begin{equation}\label{asymptotic_ortho}
\lim_{n\rightarrow\infty} {\rm tr}[\rho_{1}^{out}(n)\rho_{2}^{out}(n)]=0.
\end{equation}
\end{lemma}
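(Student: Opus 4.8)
The plan is to reduce both the purities and the overlap to the iteration of a single \emph{doubled transfer operator}, whose spectral radius is governed by that of $T_{12}$. Writing $K^{(l)}_i$ for the Kraus operators of $V_l$ and ${\bf K}^{(l)}_{\bf i}=K^{(l)}_{i_n}\cdots K^{(l)}_{i_1}$ for the associated $n$-step operators, the matrix elements of the output state in the product basis are $\langle {\bf i}|\rho_l^{out}(n)|{\bf j}\rangle={\rm tr}[{\bf K}^{(l)}_{\bf i}\rho_{ss,l}{\bf K}^{(l)*}_{\bf j}]$. Expanding the Hilbert--Schmidt pairing then gives
$$
{\rm tr}[\rho_1^{out}(n)\rho_2^{out}(n)]=\sum_{{\bf i},{\bf j}}{\rm tr}[{\bf K}^{(1)}_{\bf i}\rho_{ss,1}{\bf K}^{(1)*}_{\bf j}]\,{\rm tr}[{\bf K}^{(2)}_{\bf j}\rho_{ss,2}{\bf K}^{(2)*}_{\bf i}]={\rm tr}\big[\Phi_{12}^n(\rho_{ss,1}\otimes\rho_{ss,2})\big],
$$
where $\Phi_{12}:\mathcal B(\hi_1\otimes\hi_2)\to\mathcal B(\hi_1\otimes\hi_2)$ is defined by $\Phi_{12}(Y)=\sum_{i,j}(K^{(1)}_i\otimes K^{(2)}_j)\,Y\,(K^{(1)}_j\otimes K^{(2)}_i)^*$. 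The crucial feature is the index \emph{crossing}: the ket carries the pair $(i,j)$ while the bra carries $(j,i)$, which is precisely what encodes the two separate system traces. The purities are the special case $V_1=V_2=V_l$, giving ${\rm tr}[\rho_l^{out}(n)^2]={\rm tr}[\Phi_{ll}^n(\rho_{ss,l}\otimes\rho_{ss,l})]$.

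The key step, and the main obstacle, is to compute the spectrum of $\Phi_{12}$ in terms of $T_{12}$. I would vectorise $\mathcal B(\hi_1\otimes\hi_2)\cong \hi_1\otimes\hi_2\otimes\overline{\hi}_1\otimes\overline{\hi}_2$ and observe that, because of the crossing, the operators indexed by $i$ act only on the factors $\hi_1\otimes\overline{\hi}_2$ while those indexed by $j$ act only on $\hi_2\otimes\overline{\hi}_1$. After the reshuffling $\hi_1\otimes\hi_2\otimes\overline{\hi}_1\otimes\overline{\hi}_2\cong(\hi_1\otimes\overline{\hi}_2)\otimes(\hi_2\otimes\overline{\hi}_1)\cong \mathcal B(\hi_2,\hi_1)\otimes\mathcal B(\hi_1,\hi_2)$, the two groups decouple and $\Phi_{12}$ becomes the tensor product $T_{12}^\dagger\otimes T_{21}^\dagger$, where $\dagger$ denotes the Hilbert--Schmidt adjoint. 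Since the spectral radius $r(\cdot)$ is invariant under the adjoint, this yields $r(\Phi_{12})=r(T_{12}^\dagger)\,r(T_{21}^\dagger)=r(T_{12})\,r(T_{21})$. Tracking the conjugations and the reshuffling carefully is the delicate part; it is the same bookkeeping that underlies the transfer-operator computation of overlaps of matrix product states. (Note that $\Phi_{12}$ is not completely positive, but only Hermiticity-preserving, so the argument must proceed through the spectral radius rather than through positivity.)

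Granting this, the two statements follow quickly. Because $V_1,V_2$ are isometries, $\|T_{12}\|\leq 1$, hence $r(T_{12})\leq 1$, and (the spectrum being finite) $\lim_n T_{12}^n=0$ is equivalent to $r(T_{12})<1$. Moreover the correspondence $T_{12}(F)=cF\Leftrightarrow T_{21}(F^*)=\overline{c}F^*$ from Lemma \ref{firstlemma} shows that $\mathrm{spec}(T_{21})=\overline{\mathrm{spec}(T_{12})}$, so $r(T_{21})=r(T_{12})$ and the conditions $\lim_n T_{12}^n=0$ and $\lim_n T_{21}^n=0$ are indeed equivalent. If they hold, then $r(\Phi_{12})=r(T_{12})^2<1$, hence $\|\Phi_{12}^n\|\to 0$ and ${\rm tr}[\rho_1^{out}(n)\rho_2^{out}(n)]\to 0$, which is \eqref{asymptotic_ortho}.

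For the purities I would apply the same analysis to $\Phi_{ll}$, whose decoupled factors are now $T_{ll}^\dagger=T_*$, the Schr\"odinger channel. Primitivity of $V_l$ (quantum Perron--Frobenius) gives that $T$, hence $T_*$, has $1$ as a \emph{simple} eigenvalue with all others of modulus strictly less than $1$; since $|\mu\nu|=1$ with $|\mu|,|\nu|\leq 1$ forces $\mu=\nu=1$, the map $\Phi_{ll}$ has $1$ as a simple, strictly dominant eigenvalue. Consequently $\Phi_{ll}^n$ converges to the rank-one spectral projection $\mathcal P=|R\rangle\!\rangle\langle\!\langle \mathbb S|/\langle\!\langle \mathbb S|R\rangle\!\rangle$, so the limit $\lim_n{\rm tr}[\rho_l^{out}(n)^2]$ exists. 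Here the left eigenvector is the swap $\mathbb S$ on $\hi_l\otimes\hi_l$ (one checks $\Phi_{ll}^\dagger(\mathbb S)=\mathbb S$ using $\sum_i K_i^{(l)*}K_i^{(l)}=\id$), and the right eigenvector, obtained by reshuffling the fixed point $\rho_{ss,l}\otimes\rho_{ss,l}$ of $T_*\otimes T_*$ back into $\mathcal B(\hi_l\otimes\hi_l)$, is $R=\mathbb S(\rho_{ss,l}\otimes\rho_{ss,l})$. Evaluating the three pairings with the swap trick gives $\langle\!\langle\id|R\rangle\!\rangle={\rm tr}[\rho_{ss,l}^2]$, $\langle\!\langle \mathbb S|\rho_{ss,l}^{\otimes 2}\rangle\!\rangle={\rm tr}[\rho_{ss,l}^2]$ and $\langle\!\langle \mathbb S|R\rangle\!\rangle=1$, whence
$$
\lim_{n\to\infty}{\rm tr}[\rho_l^{out}(n)^2]=\big({\rm tr}[\rho_{ss,l}^2]\big)^2>0,
$$
the strict positivity being immediate since $\rho_{ss,l}$ is a genuine (full rank) density operator. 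The only genuinely technical point in the whole argument is the vectorisation identity of the second paragraph; everything else is spectral bookkeeping together with the two standard swap-trick evaluations.
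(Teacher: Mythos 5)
Your proposal is correct, but it takes a genuinely different route from the paper's proof. The paper works directly with the mixture decomposition of the output states into unnormalised pure conditional states: writing $\rho_{ss,l}=\sum_i \Lambda_{l,i}|e_{l,i}\rangle\langle e_{l,i}|$, it expresses $\rho_l^{out}(n)=\sum_{i,j}\Lambda_{l,i}|\psi_{l,ji}(n)\rangle\langle\psi_{l,ji}(n)|$, computes the overlaps in closed form, $\langle\psi_{l,ji}(n)|\psi_{l',j'i'}(n)\rangle=\langle e_{l,i}|T_{ll'}^n(|e_{l,j}\rangle\langle e_{l',j'}|)|e_{l',i'}\rangle$, and then reads off both claims at once: for $l=l'$ primitivity gives $T_{ll}^n(X)\rightarrow{\rm tr}[\rho_{ss,l}X]\id$, hence the purity limit $\sum_{i,i'}\Lambda_{l,i}^2\Lambda_{l,i'}^2=({\rm tr}[\rho_{ss,l}^2])^2>0$, while for $l\neq l'$ the hypothesis $T_{12}^n\rightarrow 0$ annihilates every term in the finite sum. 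Your crossed doubled operator $\Phi_{12}$ is in essence a basis-free repackaging of the same information: the paper's squared moduli $|\langle e_{1,i}|T_{12}^n(\cdot)|e_{2,i'}\rangle|^2$ are each a product of a $T_{12}^n$ matrix element with a $T_{21}^n$ matrix element, which is exactly your factorisation $\Phi_{12}\cong T_{12}^\dagger\otimes T_{21}^\dagger$ written out in components. I checked your key steps and they hold: the trace identity ${\rm tr}[\rho_1^{out}(n)\rho_2^{out}(n)]={\rm tr}[\Phi_{12}^n(\rho_{ss,1}\otimes\rho_{ss,2})]$ is consistent with the paper's backwards-ordered ${\bf K}_{\bf i}^{(n)}$; the reshuffling does decouple the $i$-indexed and $j$-indexed factors; $\Phi_{ll}^\dagger$ fixes the swap by normalisation and $\Phi_{ll}$ fixes $\mathbb{S}(\rho_{ss,l}\otimes\rho_{ss,l})$ by stationarity; and simplicity of the eigenvalue $1$ of $T_{l*}\otimes T_{l*}$ follows since $|\mu\nu|=1$ forces $\mu=\nu=1$ and the multiplicities multiply. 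As for what each approach buys: the paper's argument is shorter, avoids the vectorisation bookkeeping entirely, and reuses the conditional states \eqref{MPS} which reappear in the local asymptotic normality section; your argument, on the other hand, (i) actually \emph{proves} the equivalence $\lim_n T_{12}^n=0\Leftrightarrow\lim_n T_{21}^n=0$ asserted in the statement (via ${\rm spec}(T_{21})=\overline{{\rm spec}(T_{12})}$ and the finite-dimensional fact that powers vanish iff the spectral radius is below one), which the paper leaves implicit, and (ii) is quantitative, exhibiting the overlap decay as exponential at rate governed by $r(T_{12})r(T_{21})=r(T_{12})^2$, whereas the paper's proof only yields convergence to zero. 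Both routes produce the same limiting purity $({\rm tr}[\rho_{ss,l}^2])^2$.
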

\begin{proof}
Let $\rho_{ss,l}$ be the stationary state of $T_{ll}$, with spectral decomposition
$$
\rho_{ss,l} = \sum_{i=1}^{\dim\hi_l} \Lambda_{l,i} |e_{l,i}\rangle\langle e_{l,i}|, \qquad l=1,2.
$$
The output states decompose as follows:
\begin{align*}
\rho_{l}^{out}(n)&=\sum_{i=1}^{\dim\hi_l} \Lambda_{l,i} {\rm tr}_{\hi_l}[|V_l (n)e_{l,i}\rangle\langle V_{l}(n)e_{l,i}|]\\
&=\sum_{i,j=1}^{\dim \hi_l} \Lambda_{l,i} |\psi_{l,ji}(n) \rangle\langle \psi_{l,ji}(n)|,
\end{align*}
where
$$
\psi_{l,ji}(n) =\sum_{\bf i\in I^{(n)}} \langle e_{l,j}|K^{(n)}_{l,\bf i}e_{l,i}\rangle |{\bf i}\rangle.
$$
Now
\begin{align*}
\langle \psi_{l,ji}(n)|\psi_{l',j'i'}(n)\rangle &=
\sum_{{\bf i}\in I^{(n)}} \overline{\langle e_{l,j}|{\bf K}_{l,\bf i}^{(n)}e_{l,i}\rangle}\langle e_{l',j'}|{\bf K}_{l,\bf i}^{(n)} e_{l',i'}\rangle\\
&=\langle e_{l,i}|T_{ll'}^n(|e_{l,j}\rangle\langle e_{l'j'}|)|e_{l',i'}\rangle,
\end{align*}
so we can write
\begin{align*}
{\rm tr}[\rho_{l}^{out}(n)\rho_{l'}^{out} (n)]&= \sum_{i,j=1}^{\dim \hi_l}\sum_{i',j'=1}^{\dim \hi_{l'}} \Lambda_{l,i}\Lambda_{l',i'} |\langle\psi_{l,ji}(n)|\psi_{l',ji}(n)\rangle|^2\\
&=
\sum_{i,j=1}^{\dim \hi_l}\sum_{i',j'=1}^{\dim \hi_{l'}} \Lambda_{l,i}\Lambda_{l',i'} |\langle e_{l,i}|T_{ll'}^n(|e_{l,j}\rangle\langle e_{l'j'}|)|e_{l',i'}\rangle|^2.
\end{align*}
Since the isometries $V_l$ are primitive, we have $T_{ll}^n(X)\rightarrow {\rm tr}[\rho_{ss,l}X]\id_{\hi_l}$ for any $X\in \mathcal B(\hi_l)$ by \eqref{statlimit}. Now, on the one hand, by choosing $l=l'$ we get
$$
\lim_{n\rightarrow\infty}{\rm tr}[\rho_{l}^{out}(n)^2]=\sum_{i,i'=1}^{\dim\hi_l}\Lambda_{l,i}^2\Lambda_{l,i'}^2>0.
$$
On the other hand, assuming $\lim_{n\rightarrow\infty} T_{ll'}=0$ for $l\neq l'$, we get \eqref{asymptotic_ortho}.
\end{proof}

We are now ready to prove the main result of this section.

\begin{theorem}\label{characterization}
Two primitive isometries $V_l:\hi_l\to\hi_l\otimes\hik$, $l=1,2$, are equivalent if and only if there exists a unitary operator $U:\hi_2\to\hi_1$, and a complex number $c$ with $|c|=1$, such that
\begin{equation}\label{unitary_conj}
V_2=c (U^*\otimes \id_{\hik})V_1U.
\end{equation}
\end{theorem}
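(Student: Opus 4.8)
The plan is to prove the two implications separately, dispatching the elementary direction by a direct computation and then obtaining the converse essentially for free by combining the two preceding lemmas through a spectral dichotomy for the map $T_{12}$.

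For the ``if'' direction, suppose $V_2 = c(U^*\otimes\id_{\hik})V_1U$ with $|c|=1$ and $U\colon\hi_2\to\hi_1$ unitary, which is the situation already flagged before Lemma \ref{firstlemma}, where it was noted that $V_2$ is then primitive with stationary state $\rho_{ss,2}=U^*\rho_{ss,1}U$. I would first check by induction on $n$, using the iteration $V(k+1)=(V\otimes\id_{\hik^{\otimes k}})V(k)$, that $V_2(n)=c^n(U^*\otimes\id_{\hik^{\otimes n}})V_1(n)U$; the only point to observe in the inductive step is that the pair $(U\otimes\id)(U^*\otimes\id)=\id_{\hi_1}\otimes\id$ created when composing $V_2\otimes\id_{\hik^{\otimes k}}$ with $V_2(k)$ cancels, leaving the phases to accumulate to $c^{n+1}$ and the output registers untouched. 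Substituting this into $\rho_2^{out}(n)={\rm tr}_{\hi_2}[V_2(n)\rho_{ss,2}V_2(n)^*]$, the factor $|c^n|^2=1$ drops out, the relation $UU^*=\id_{\hi_1}$ turns $\rho_{ss,2}$ back into $\rho_{ss,1}$, and the unitary $U^*\otimes\id$ on the system factor is absorbed by the partial trace (since conjugating by a unitary acting only on the traced-out space leaves the partial trace invariant). This gives $\rho_2^{out}(n)=\rho_1^{out}(n)$ for all $n$, i.e. equivalence.

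For the converse I would start from the observation that $T_{12}$ is a contraction: because $V_1,V_2$ are isometries and $\no{X\otimes\id_{\hik}}=\no{X}$, we have $\no{T_{12}(X)}=\no{V_1^*(X\otimes\id_{\hik})V_2}\le\no{X}$, so every eigenvalue of $T_{12}$ has modulus at most one. On the finite-dimensional space $\mathcal B(\hi_2,\hi_1)$ this produces a clean dichotomy: either $T_{12}$ has an eigenvalue of modulus exactly one, or its spectral radius is strictly below one, in which case $\lim_{n\to\infty}T_{12}^n=0$.

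The core of the argument is then to rule out the second alternative under the equivalence hypothesis. If $\lim_{n\to\infty}T_{12}^n=0$, Lemma \ref{secondlemma} yields $\lim_{n\to\infty}{\rm tr}[\rho_1^{out}(n)\rho_2^{out}(n)]=0$; but equivalence means $\rho_1^{out}(n)=\rho_2^{out}(n)$, so this cross term equals ${\rm tr}[\rho_1^{out}(n)^2]$, whose limit is strictly positive by the first part of Lemma \ref{secondlemma}. This contradiction forces $T_{12}$ to have an eigenvalue of modulus one, and the implication (i)$\Rightarrow$(iii) of Lemma \ref{firstlemma} then supplies exactly the unitary $U$ and phase $c$ with $V_2=c(U^*\otimes\id_{\hik})V_1U$. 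I expect the genuinely delicate point to be only the bookkeeping in the ``if'' direction---tracking how the phase and the conjugating unitary propagate through $V(n)$ and verifying they disappear under the partial trace---since once the contraction estimate pins down the modulus-one dichotomy, the converse is assembled directly from the two lemmas.
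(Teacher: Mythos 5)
Your proof is correct, and its skeleton for the converse --- spectral dichotomy for $T_{12}$, Lemma \ref{secondlemma} plus equality of the outputs to exclude the case of spectral radius strictly below one, then Lemma \ref{firstlemma} to produce $U$ and $c$ --- is exactly the paper's. The one place you genuinely diverge is in justifying that every eigenvalue of $T_{12}$ has modulus at most one: the paper forms the direct-sum isometry $V_{\rm tot}=V_1\oplus V_2$, observes that the four blocks $\mathcal B(\hi_{l'},\hi_l)$ are invariant under the associated channel $T$ (so $T_{12}$ sits as a corner of a unital completely positive map, whose spectrum lies in the closed unit disc), whereas you note directly that $T_{12}$ is an operator-norm contraction, since $\|V_1^*(X\otimes\id_{\hik})V_2\|\leq\|X\|$ for isometries $V_1,V_2$, so any eigenvalue $\lambda$ with eigenvector $F$ satisfies $|\lambda|^n\|F\|=\|T_{12}^n(F)\|\leq\|F\|$, hence $|\lambda|\leq 1$. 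Your route is more elementary and avoids the auxiliary construction altogether; the paper's embedding has the conceptual benefit of exhibiting $T_{12}$ inside a genuine quantum channel, the setting in which Perron--Frobenius-type statements apply, but that extra structure is not actually used beyond the eigenvalue bound. You also wrote out the ``if'' direction in full ($V_2(n)=c^n(U^*\otimes\id_{\hik^{\otimes n}})V_1(n)U$ by induction, the phase drops out, $\rho_{ss,2}=U^*\rho_{ss,1}U$ restores $\rho_{ss,1}$, and the unitary on the traced-out system factor is absorbed by the partial trace), which the paper dismisses as straightforward; your bookkeeping there is accurate.
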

\begin{proof}
As mentioned above, the `if' part is straightforward. Assume now that $V_1$ and $V_2$ are equivalent, and define $T_{ll'}$ as in Lemma \ref{firstlemma}. We consider the direct sum isometry
$$V_{\rm tot}:=V_1\oplus V_2:\hi_1\oplus\hi_2\to \hi_1\otimes\hik\oplus\hi_2\otimes\hik=(\hi_1\oplus\hi_2)\otimes \hik.$$
We identify the elements $X\in\mathcal B(\hi_1\oplus\hi_2)$ in the usual way with block matrices
$$
X=\begin{pmatrix} X_{11} & X_{12}\\ X_{21} & X_{22}\end{pmatrix},
$$
where $X_{ll'}\in \mathcal B(\hi_{l'},\hi_{l})$, the set of linear operators $\hi_{l'}\to\hi_{l}$. This identifies $\mathcal B(\hi_l,\hi_{l'})$ as a subspace $\mathcal B(\hi_1\oplus\hi_2)$, and each of these four subspaces is invariant under the channel $T$ associated with $V_{\rm tot}$. Explicitly, we have
\begin{equation}\label{phirep}
T(X) =\begin{pmatrix} T_{11}(X_{11}) & T_{12}(X_{12})\\ T_{21}(X_{21}) & T_{22}(X_{22})\end{pmatrix}.
\end{equation}
In particular, any eigenvalue of $T_{ll'}$ is also an eigenvalue of $T$, because the subspaces are invariant. Since $T$ is completely positive and unital by construction, all eigenvalues of $T_{ll'}$ have modulus at most one. If all eigenvalues of $T_{12}$ have modulus strictly less than one, then $\lim_{n\rightarrow\infty} T_{12}^n=0$, which according to Lemma \ref{secondlemma} contradicts the assumption that the output states are equal. Hence $T_{12}$ has an eigenvalue of modulus one, so Lemma \ref{firstlemma} concludes the proof.
\end{proof}
\begin{remark} According to \cite{Wolf}, and applying a reconstruction procedure analogous to the one given in \cite{Gross}, one observes that the finitely correlated output states $\rho^{out}(n)$, $n\in \mathbb N$, associated to a primitive isometry $V$ are completely determined by one state $\rho^{out}(n_0)$, provided that $n_0\geq 2(D^2-d+1)D^2$, where $d$ is the number of linearly independent Kraus operators. Hence, for any two isometries $V_l$, $l=1,2$, there exists a finite $n_0$ such that $\rho^{out}_1(n_0)=\rho_2^{out}(n_0)$ if and only if \eqref{unitary_conj} holds for some phase factor $c$ and unitary $U$.
\end{remark}

\section{Intermezzo on convergence of quantum statistical models and local asymptotic normality}
\label{sec.intermezzo}

In this section we introduce the basic elements of a theory of quantum statistical models, in as much as it is necessary to understand 
the second main result presented in the next section: the local asymptotic normality of the quantum Markov chain's output, and the associated quantum Fisher information. This section is not directly connected to the Markov set-up and can be skipped at a first reading.

\begin{definition}
Let $\Theta$ be a parameter space. A quantum statistical model over $\Theta$ is a family 
$$
 \mathcal{Q}:= \{\rho_\theta \, :\, \theta\in \Theta\},
$$ 
of density matrices $\rho_\theta$ on a Hilbert space $\hi$, which are indexed by an unknown parameter $\theta\in \Theta$.
\end{definition} 

The typical quantum statistical problem associated to a model $\mathcal{Q}$ is to estimate the unknown parameter $\theta$ by measuring a system prepared in the state $\rho_\theta$, and constructing an estimator $\hat{\theta}$ based on the measurement outcome. In practice, the problem typically involves an additional parameter $n$ describing the `sample size', and `good estimators' have the property that the estimation error (e.g. the mean square error $\mathbb{E} (( \hat{\theta}_n-\theta)^2)$) converges to zero as $n\to\infty$. The samples may be independent and identical as in quantum state tomography, or may consist of correlated systems depending on an unknown dynamical parameter, as considered in this paper. The \emph{rate} of convergence is typically of the order $n^{-1}$ and has a constant factor equal to the inverse of the (asymptotic) Fisher information, the latter describing the amount of statistical information \emph{per sample}.

Asymptotic statistics deals with the `large n' statistical inference set-up. The power of this set-up lies in the fact that one can take advantage of general Central Limit behaviour, and approximate the `n samples' statistical model by simpler Gaussian models, with asymptotically vanishing approximation error \cite{vanderVaart}.

\subsection{I.I.D. pure state models.}\label{sec.i.i.d.lan}
 To illustrate this idea, let us consider a simple model consisting of $n$ qubits which are independent and identically prepared in a pure state depending on a two-dimensional rotation parameter
 $$
|\psi_\theta\rangle := \exp\left(i \sqrt{2}(\theta_2\sigma_{x} - \theta_1\sigma_{y})\right) | 0\rangle, \qquad
 \theta= (\theta_1, \theta_2)\in \mathbb{R}^2
$$
where the factor $\sqrt{2}$ has been inserted for later convenience.
Since we work in an asymptotic framework, we can consider that the parameter $\theta$ belongs to a neighbourhood of size $n^{-1/2 +\epsilon}$ of a \emph{known} fixed value $\theta_0$ which by symmetry can be chosen to be $\theta_0=(0,0)$. Such a `localisation' is not a prior assumption, but can be achieved with an adaptive procedure where a `small' sample 
$\tilde{n}= n^{1-\epsilon}\ll n$ is used to produce a rough estimate $\theta_0$, and this information is fed into the design of the second stage optimal measurement \cite{KahnGuta}. We will therefore write $\theta= u/\sqrt{n} =( u_1/\sqrt{n}, u_2/\sqrt{n})$, where $u= (u_1,u_2)$ is a local parameter to be estimated. In this case, the Gaussian approximation mentioned above is closely related to what is known in physics as the Holstein-Primakov approximation for coherent spin states \cite{HolsteinPrimakov}. By a Central Limit argument one can show that in the limit of large $n$ the collective spin variables 
$$
L_x := \frac{1}{\sqrt{2n}} \sum_{i=1}^n \sigma_{x}^{(i)}, \qquad
L_y := \frac{1}{\sqrt{2n}} \sum_{i=1}^n \sigma_{y}^{(i)}
$$
converge in joint moments (with respect to the product state $|\psi_{u/\sqrt{n}}\rangle^{\otimes n} $)  to continuous variables 
$Q$ and respectively $P$ which satisfy the canonical commutation relations $[Q,P] = i\id$ and have a 
coherent (Gaussian) state $|u\rangle $ with means 
$
\langle Q\rangle = u_1
$
and 
$
\langle P\rangle = u_2.
$
Therefore, in the large $n$ limit the i.i.d. qubit model 
$$
\mathcal{Q}_n:= \left\{ |\psi_{u/\sqrt{n}}\rangle^{\otimes n} :  u\in \mathbb{R}^2  \right\}
$$ 
is approximated (locally around $\theta_0$) by the `quantum Gaussian shift' model
$$
\mathcal{G} := \left\{|u\rangle: u\in \mathbb{R}^2 \right\}.
$$ 
From the statistical viewpoint this approximation (when formulated in an appropriate way) provides the asymptotically optimal measurement procedures and estimation rates \cite{Guta&Kahn,KahnGuta}. Indeed if we would like to estimate $u_1$, then the optimal measurement is that of the total spin $L_x$ which corresponds to the canonical variable $Q$ in the limit model, and similarly for $u_2$. However if one is interested in both parameters (i.e. the mean square error is 
$\mathbb{E} (\|\theta-\hat{\theta}\|^2 )$) then the optimal procedure is to measure each $L_x$ and $L_y$ on half of the 
spins, which corresponds to the heterodyne measurement for the Gaussian model, where the coherent state is split in 
two, and conjugate canonical variables are measured separately on the two subsystems. 

The above convergence to the Gaussian model can be captured in a simple way, as convergence of the inner products for arbitrary pairs of local parameters
$$
\lim_{n\to \infty} 
\left\langle\psi_{u /\sqrt{n}}^{\otimes n} \right|\left.  \psi_{v /\sqrt{n}}^{\otimes n}\right\rangle= \langle u|v\rangle, \qquad u,v\in \mathbb{R}^2.
$$
This `weak convergence' \cite{Guta&Jencova,Guta} has an appealing geometric interpretation and can be verified more easily than the  `strong convergence' investigated in \cite{Guta&Kahn,KahnGuta}, the latter being nevertheless  more powerful and applicable to more general models of mixed states. At the end of the section we will show how weak convergence can be upgraded to strong convergence under an additional assumption.
%

\subsection{Weak and strong convergence of pure state models.} 

We will now briefly sketch a mathematical framework for weak convergence of pure state models, which will serve as motivation for our results on local asymptotic normality for quantum Markov chains. Since this is not the main focus of the paper, we leave the general theory for a separate work.

\begin{definition}\label{def.equivalence}
Let $\mathcal{Q}: =\{ \rho_\theta \,:\, \theta\in \Theta \}$ be a quantum model with parameter space $\Theta$ and Hilbert space $\mathcal{H}$, and let  $\mathcal{Q}^\prime: =\{ \rho_\theta^\prime \,:\, \theta\in \Theta\}$ be another model with the same parameter space and Hilbert space $\mathcal{H}^\prime$. We say that $\mathcal{Q}$ is equivalent to 
$\mathcal{Q}^\prime$ if there exists quantum channels $T$ and $S$ such that 
$$
T(\rho_\theta) = \rho_\theta^\prime, \qquad S(\rho_\theta^\prime) = \rho_\theta, \quad \forall \,\theta\in \Theta.
$$ 
\end{definition}

It can be easily seen that if two models are equivalent then for any statistical decision problem, their optimal procedures can be related through the channels $T$ and $S$ and the corresponding risks (figures of merit) are equal \cite{Guta&Jencova}. The definition is also naturally connected with the theory of quantum sufficiency developed in \cite{Petz&Jencova}. We will now extend this to allow for models which are `close' to each other but not necessarily equivalent. For our purposes, it suffices to restrict our attention to pure state models.

 Let $\mathcal{Q}$
 and 
 $\mathcal{Q}^\prime$
be as in definition \ref{def.equivalence}, with $ \rho_\theta =|\psi_\theta\rangle\langle \psi_\theta| $ and 
$\rho_\theta^\prime =|\psi_\theta^\prime\rangle\langle \psi_\theta^\prime |$ pure state on $\hi$ and respectively $\hi^\prime$. 
It has been shown \cite{Chefles} that the two models are equivalent if and only there exists a choice of phases such that the inner products coincide 
$$
\langle \psi_{\theta_1} |\psi_{\theta_2}\rangle= \langle \psi_{\theta_1}^\prime |\psi_{\theta_2}^\prime 
\rangle, 
$$ 
 for all $\theta_1,\theta_2\in \Theta$. 
This suggests the following definition of convergence of pure state statistical models.
\begin{definition}\label{def.weak.convergence}
A sequence of quantum statistical models 
$
\mathcal{Q}_{n}:=
\{ |\psi_{\theta}(n)\rangle \langle \psi_{\theta}(n)|  \,:\,  \theta\in\Theta\}$ over spaces $\mathcal{H}_{n}$ 
\emph{converges weakly} to a model
 $
\mathcal{Q}:= 
\{ |\psi_{\theta}\rangle\langle \psi_{\theta} | \,:\,  \theta\in\Theta\}
$
over the space $\mathcal{H}$, if there exists a choice of phases for $|\psi_{\theta}(n)\rangle$ such that 
\begin{equation}\label{eq.weak.convergence}
\lim_{n\to\infty} 
\langle \psi_{\theta_{1}}(n)| \psi_{\theta_{2}}(n) \rangle
=
\langle \psi_{\theta_{1}}| \psi_{\theta_{2}}\rangle
\end{equation}
for all $\theta_{1},\theta_{2} \in\Theta$.

\end{definition}
Without entering into details we mention that this notion of convergence is closely related with that defined in \cite{Guta&Jencova}, which in turn is a quantum analogue of the classical weak convergence of statistical models \cite{vanderVaart}.

We introduce now a second notion of convergence of models which has a clearer operational interpretation and can be used to devise asymptotically optimal estimation procedures, and establish the asymptotic normality of estimators, in both the classical \cite{vanderVaart} and the quantum \cite{KahnGuta} contexts. For this, we will use definition \ref{def.equivalence} to build a distance between models, such that equivalent models have distance equal to zero, and small distance means that the models are `statistically close' to each other but not necessarily equivalent.
\begin{definition}\label{def.strong.convergence}
With the notation of definition \ref{def.equivalence}, we define the \emph{Le Cam distance} between the models $\mathcal{Q}$ and $\mathcal{Q}^\prime$ by $\Delta(\mathcal{Q}, \mathcal{Q}^\prime)= \max (\delta(\mathcal{Q}, \mathcal{Q}^\prime) ,\delta(\mathcal{Q}^\prime, \mathcal{Q}) )$ where 
$$
\delta(\mathcal{Q}, \mathcal{Q}^\prime)  = \inf_T \sup_{\theta\in \Theta} \| T ( \rho_\theta) - \rho^\prime_\theta\|_1, \qquad
\delta(\mathcal{Q}^\prime, \mathcal{Q})  = \inf_S \sup_{\theta\in \Theta} \| S ( \rho^\prime_\theta) - \rho_\theta\|_1,
$$
with $T$ and $S$ arbitrary quantum channels between the appropriated spaces. 

A sequence of models $\mathcal{Q}_n:=\{\rho_\theta(n) \,:\, \theta\in \Theta \}$ converges strongly to 
$\mathcal Q$ if $\Delta(\mathcal{Q}, \mathcal{Q}_n) \rightarrow 0$ as $n\rightarrow \infty$.
\end{definition}
It is easy to see that the strong convergence of a sequence $\mathcal{Q}_n$ to $\mathcal Q$ is equivalent to the existence of sequences $(T_n)$ and $(S_n)$ of channels such that
$$
\lim_{n\rightarrow\infty}\sup_{\theta\in \Theta} \| T_n ( \rho_\theta(n)) - \rho_\theta\|_1=0, \qquad \text{and}\quad
\lim_{n\rightarrow\infty}\sup_{\theta\in \Theta} \| S_n ( \rho_\theta) - \rho_\theta(n)\|_1=0.
$$
Indeed, assuming strong convergence, we find inductively for each $k\in \mathbb N$ an $n_k\geq n_{k-1}$ such that for each $n\geq n_k$ there exists a channel $T^k_n$ with $\sup_{\theta\in \Theta} \| T^k_n ( \rho_\theta(n)) - \rho_\theta\|_1<1/k$. Then the first of the above limits holds, if we define $T_n$ for each $n\in \mathbb N$ by putting $T_n:=T_n^k$ for that $k$ for which $n\in [n_k, n_{k+1})$. The other relation follows similarly. The converse implication is clear.

The interpretation of the strong convergence of (quantum) statistical models is that asymptotically, the optimal risk associated to a statistical problem (e.g. estimation) for $\mathcal{Q}_n$ converges to the optimal risk of the limit model; additionally, the optimal procedures can be related asymptotically via the channels  $T_n$ and $S_n$. A particular instance of strong convergence is \emph{local asymptotic normality} of i.i.d. quantum models for mixed finite dimensional states \cite{KahnGuta}, which shows that for large  sample size the ensemble of identically prepared systems can be approximated via quantum channels by a multi-mode classical-quantum 
Gaussian state of unknown mean.  

In this paper we establish a similar result for the correlated model given by the output state of a Markov chain. As a corollary we obtain the limiting (asymptotic) quantum Fisher information of the Markov output. This is done by proving the simpler weak convergence, which can be converted into strong convergence by applying Lemma \ref{th.weak.implies.strong} below. In preparation we prove a simpler lemma for finite number of parameters.

\begin{lemma}\label{lemma.weak.stronng.equivalence}
Let $\mathcal{Q}_{n}$ and $\mathcal{Q}$ be as in definition \ref{def.weak.convergence}, and suppose that $\mathcal{Q}_{n}$ converges weakly to $\mathcal{Q}$. Assume moreover that $\Theta$ is a finite set. Then $\mathcal{Q}_n$ converges strongly to $\mathcal{Q}$, i.e. 
$\lim_{n\to\infty}\Delta(\mathcal{Q}_{n}, \mathcal{Q})=0$.
\end{lemma}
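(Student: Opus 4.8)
The plan is to reduce everything to the finite Gram matrices of the two families and to route the channels through a canonical square-root realization of the vectors. Write $\Theta=\{\theta_1,\dots,\theta_m\}$ and collect the inner products into the Gram matrices $G_n$ and $G$ with entries $(G_n)_{ij}=\langle\psi_{\theta_i}(n)|\psi_{\theta_j}(n)\rangle$ and $G_{ij}=\langle\psi_{\theta_i}|\psi_{\theta_j}\rangle$. These are positive semidefinite $m\times m$ matrices with unit diagonal, and (after the phase choice in Definition \ref{def.weak.convergence}) weak convergence says exactly that $G_n\to G$ entrywise, hence in any norm since the matrices have fixed finite size. Finiteness of $\Theta$ is used here, and again below in passing to uniform suprema.

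The central device is the canonical realization: for a positive semidefinite $M$, the columns $M^{1/2}e_i\in\complex^m$ of its square root have Gram matrix $M$. Since the matrix square root is continuous on positive semidefinite matrices, $G_n^{1/2}\to G^{1/2}$, so the canonical vectors converge, $\|G_n^{1/2}e_i-G^{1/2}e_i\|\to 0$ for each $i$. Using this square-root realization is what sidesteps the main obstacle: one cannot simply declare a map $|\psi_{\theta_i}(n)\rangle\mapsto|\psi_{\theta_i}\rangle$, because the linear dependencies among the vectors may differ between $\mathcal{H}_n$ and $\mathcal{H}$, so such a map need be neither well-defined nor contractive.

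Next I would connect the abstract vectors to the canonical ones. Because $\{|\psi_{\theta_i}(n)\rangle\}$ and $\{G_n^{1/2}e_i\}$ share the Gram matrix $G_n$, there is a partial isometry $J_n:\complex^m\to\mathcal{H}_n$ with $J_n(G_n^{1/2}e_i)=|\psi_{\theta_i}(n)\rangle$ and $J_n^*|\psi_{\theta_i}(n)\rangle=G_n^{1/2}e_i$; likewise a partial isometry $J:\complex^m\to\mathcal{H}$ with $J(G^{1/2}e_i)=|\psi_{\theta_i}\rangle$. Setting $A_n:=JJ_n^*:\mathcal{H}_n\to\mathcal{H}$ yields a contraction with $A_n|\psi_{\theta_i}(n)\rangle=J(G_n^{1/2}e_i)$, so that $\|A_n|\psi_{\theta_i}(n)\rangle-|\psi_{\theta_i}\rangle\|=\|J(G_n^{1/2}e_i-G^{1/2}e_i)\|\le\|G_n^{1/2}e_i-G^{1/2}e_i\|\to 0$ by the previous step.

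Finally I would promote the contraction to a channel. For any contraction $A$ and a fixed reference state $\sigma$, the map $\rho\mapsto A\rho A^*+\mathrm{tr}[(\id-A^*A)\rho]\,\sigma$ is trace preserving and completely positive, since $\id-A^*A\geq 0$. Taking $A=A_n$ defines $T_n$, and $T_n(\rho_{\theta_i}(n))=A_n|\psi_{\theta_i}(n)\rangle\langle\psi_{\theta_i}(n)|A_n^*+\bigl(1-\|A_n|\psi_{\theta_i}(n)\rangle\|^2\bigr)\sigma$; the first term converges to $\rho_{\theta_i}$ and the scalar prefactor vanishes, both from the estimate above, so $\sup_\theta\|T_n(\rho_\theta(n))-\rho_\theta\|_1\to 0$. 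The reverse channel $S_n$ is built identically from the contraction $B_n:=J_nJ^*:\mathcal{H}\to\mathcal{H}_n$, giving $\sup_\theta\|S_n(\rho_\theta)-\rho_\theta(n)\|_1\to 0$. Together these bound $\delta(\mathcal{Q}_n,\mathcal{Q})$ and $\delta(\mathcal{Q},\mathcal{Q}_n)$, hence $\Delta(\mathcal{Q}_n,\mathcal{Q})\to 0$. The only genuinely delicate point is the cross-space comparison of vectors living in different Hilbert spaces, which the canonical realization together with continuity of the square root resolves cleanly.
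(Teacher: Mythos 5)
Your proof is correct, and its engine is the same as the paper's: collect the (phase-adjusted) inner products into Gram matrices $G_n$, $G$, note that weak convergence plus finiteness of $\Theta$ gives $G_n\to G$ in norm, and exploit continuity of the matrix square root so that the canonical vectors $G_n^{1/2}e_i$ converge to $G^{1/2}e_i$. The difference lies in how the channels are then produced. The paper passes to intermediate models $\widetilde{\mathcal{Q}}_n,\widetilde{\mathcal{Q}}$ realized on $\mathbb{C}^{|\Theta|}$, invokes the fact that pure-state models with identical Gram matrices are equivalent (the result cited from \cite{Chefles}), and transfers the convergence back through that equivalence together with the triangle inequality for $\Delta$; you instead compose the two partial isometries into a single contraction $A_n=JJ_n^*:\mathcal{H}_n\to\mathcal{H}$ and complete it to a channel by the trace-dumping construction $\rho\mapsto A_n\rho A_n^*+\mathrm{tr}[(\id-A_n^*A_n)\rho]\,\sigma$. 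Your route is more self-contained: it needs no external equivalence theorem, and it quietly avoids two blemishes in the paper's write-up, namely the claim that $G_n$ and $G$ eventually have the same rank (not needed, and not true in general, since rank can drop in the limit) and the phrase ``related by a single unitary matrix'' (when the spans live in different spaces or have different dimensions what actually exists is a partial isometry, which is exactly what you construct). What the paper's version buys is brevity: the cited equivalence statement absorbs the entire channel construction that you carry out by hand.
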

{\it Proof.} 
By definition \ref{def.weak.convergence} we may assume that the 
phases of the vectors $|\psi_{\theta}(n)\rangle$ have been defined such that
$$
\lim_{n\to\infty} 
\langle \psi_{\theta}(n) | \psi_{\tau}(n)\rangle=
\langle \psi_{\theta}, \psi_{\tau}\rangle, \qquad
\forall \theta,\tau\in \Theta.
$$
Let $\Theta= \{1,\dots, k\}$ and define the (positive) Gram matrix 
$G^{(n)}_{i,j}:= \langle \psi_{i}(n) | \psi_{j}(n)\rangle$, and similarly 
$G_{i,j}:= \langle \psi_{i}| \psi_{j} \rangle$. 
The convergence of the Gram matrices implies that for $n$ large enough $G_n$ and $G$ have the same rank $r$. 

Let $P^{(n)}:=\sqrt{G^{(n)}}, P:=\sqrt{G}$ and let $\{ |e_{1}\rangle,\dots, |e_{k}\rangle\}$ be the standard orthonormal basis in $\mathbb{C}^{k}$. 
Then the vectors
$
|\widetilde{\psi}_{i}(n)\rangle := \sum_{l=1}^{k} P^{(n)}_{i,l}|e_{l}\rangle \in \mathbb C^k
$
have inner products 
$$
\langle\widetilde{\psi}_{i}(n)| \widetilde{\psi}_{j}(n) \rangle= ((P^{(n)})^{*}P^{(n)})_{i,j}= 
G^{(n)}_{i,j}=\langle \psi_{i}(n)| \psi_{j}(n)\rangle .
$$
Similarly the vectors 
$
|\widetilde{\psi}_{i}\rangle := \sum_{l=1}^{k} P_{i,l} |e_{l}\rangle 
$
have inner products 
$$
\langle\widetilde{\psi}_{i}|\widetilde{\psi}_{j}\rangle= 
(P^{*}P)_{i,j} = 
G_{i,j}  =\langle \psi_{i}| \psi_{j}\rangle. 
$$

Since two sets of vectors with the same Gram matrix are related by a single unitary matrix, we can map the models $\mathcal{Q}_n$ and $\mathcal{Q}$ into equivalent ones 
$\widetilde{\mathcal{Q}}_n$ and respectively $\widetilde{\mathcal{Q}}$, defined on the same representation space $\mathbb C^k$. By weak convergence we have that $G^{(n)}$ converges entrywise, and hence in norm to $G$. As the square root is a continuous function we get that $P^{(n)}$ converges to $P$ and hence $|\tilde\psi_{i}(n)\rangle \to |\tilde\psi_{i}\rangle$ for all $i$. Therefore 
$\lim_{n\to\infty}\Delta(\widetilde{\mathcal{Q}}_n, \widetilde{\mathcal{Q}}) = 0$, and the same holds for 
$\mathcal{Q}_n$ and $\mathcal{Q}$ due to the above stated equivalence.
 
\qed

We will now strengthen this results, by assuming that we have more control over the weak convergence. The result will be used in upgrading the local asymptotic normality result in the next section, from weak to strong convergence. 
\begin{lemma}\label{th.weak.implies.strong}
Let $\Theta$ be a compact subset of $\real^k$, and let $\mathcal{Q}_{n}$ and $\mathcal{Q}$ be as in definition \ref{def.weak.convergence}, with $\mathcal{Q}_{n}$ converging weakly to $\mathcal{Q}$. Suppose, in addition, that the convergence in \eqref{eq.weak.convergence} is uniform for $(\theta_1,\theta_2)\in \Theta$.
Then $\mathcal{Q}_{n}$ converges strongly to $\mathcal{Q}$. 
 \end{lemma}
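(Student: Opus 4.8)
The plan is to leverage the finite-parameter result, Lemma \ref{lemma.weak.stronng.equivalence}, by passing to a finite $\epsilon$-net of the compact set $\Theta$, and then to control the behaviour at the remaining (off-net) parameters uniformly in $n$ using the hypothesis of \emph{uniform} convergence of the inner products. Throughout I use two standard facts: for pure states $\|\,|\psi\rangle\langle\psi|-|\phi\rangle\langle\phi|\,\|_1 = 2\sqrt{1-|\langle\psi|\phi\rangle|^2}$, and that every quantum channel is a contraction for the trace norm. Recall also (from definition \ref{def.strong.convergence}) that strong convergence amounts to bounding both $\delta(\mathcal{Q}_n,\mathcal{Q})$ and $\delta(\mathcal{Q},\mathcal{Q}_n)$.

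First I would establish the continuity ingredient. Writing $k_n(\theta_1,\theta_2):=\langle\psi_{\theta_1}(n)|\psi_{\theta_2}(n)\rangle$ and $k(\theta_1,\theta_2):=\langle\psi_{\theta_1}|\psi_{\theta_2}\rangle$ for the (appropriately phased) Gram kernels, the families in question are genuine parametric models whose states depend continuously on $\theta$ (as for the Markov output states of the next section), so each $k_n$ is continuous on the compact set $\Theta\times\Theta$; since by hypothesis $k_n\to k$ \emph{uniformly}, the limit $k$ is a uniform limit of continuous functions, hence uniformly continuous. As $k(\theta,\theta)=1$, this yields a modulus of continuity: for every $\epsilon>0$ there is $\delta>0$ with $|k(\theta,\theta')|\ge 1-\epsilon$ whenever $|\theta-\theta'|<\delta$. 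Combined with the uniform bound $\varepsilon_n:=\sup_{\theta_1,\theta_2}|k_n(\theta_1,\theta_2)-k(\theta_1,\theta_2)|\to 0$, this produces the key \emph{uniform equicontinuity} estimate: for $|\theta-\theta'|<\delta$ one has $|k_n(\theta,\theta')|\ge 1-\epsilon-\varepsilon_n$, so that $\|\rho_\theta(n)-\rho_{\theta'}(n)\|_1$ is small simultaneously for all large $n$, and likewise $\|\rho_\theta-\rho_{\theta'}\|_1$ is small.

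I would then fix $\epsilon>0$, pick $\delta$ as above, and choose a finite $\delta$-net $\{\theta_1,\dots,\theta_m\}\subset\Theta$. Applying Lemma \ref{lemma.weak.stronng.equivalence} to these restricted finite models supplies, for all large $n$, channels $T_n$ and $S_n$ with $\max_i\|T_n(\rho_{\theta_i}(n))-\rho_{\theta_i}\|_1<\epsilon$ and $\max_i\|S_n(\rho_{\theta_i})-\rho_{\theta_i}(n)\|_1<\epsilon$. For an arbitrary $\theta\in\Theta$ I choose the nearest net point $\theta_i$ and split, by the triangle inequality, $\|T_n(\rho_\theta(n))-\rho_\theta\|_1$ into three terms: the term $\|T_n(\rho_\theta(n))-T_n(\rho_{\theta_i}(n))\|_1$, which is $\le\|\rho_\theta(n)-\rho_{\theta_i}(n)\|_1$ by contractivity and is controlled uniformly in $n$ by the equicontinuity estimate; the net term $\|T_n(\rho_{\theta_i}(n))-\rho_{\theta_i}\|_1<\epsilon$ from the finite lemma; and $\|\rho_{\theta_i}-\rho_\theta\|_1$, small by uniform continuity of $k$. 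Taking the supremum over $\theta$ and letting $n\to\infty$ bounds $\limsup_n\delta(\mathcal{Q}_n,\mathcal{Q})$ by a fixed multiple of $\epsilon$; the symmetric chain with $S_n$ bounds $\delta(\mathcal{Q},\mathcal{Q}_n)$. Since $\epsilon$ is arbitrary, $\Delta(\mathcal{Q}_n,\mathcal{Q})\to 0$.

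The main obstacle is precisely the passage from the finite net to the full continuum: the channels $T_n,S_n$ produced by Lemma \ref{lemma.weak.stronng.equivalence} depend on $n$, so controlling the off-net parameters requires an equicontinuity that is \emph{uniform in} $n$. This is where the uniform-convergence hypothesis is indispensable -- it is what lets the single modulus of continuity of the limit kernel $k$ govern all the $k_n$ at once; without it one could only control each $\theta$ over its own range of $n$, which would not yield a single sequence of channels meeting the supremum bound. A minor point to state carefully is the interleaving of quantifiers: $\epsilon$ fixes the net and hence, via the finite lemma, a threshold $N(\epsilon)$, after which the standard $\epsilon$--$N$ argument closes the estimate.
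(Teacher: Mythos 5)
Your proposal is correct and follows essentially the same route as the paper's own proof: you derive a uniform-in-$n$ equicontinuity estimate for the Gram kernels from the uniform convergence hypothesis (the paper's $\epsilon/3$-argument for the functions $f_n=|\langle\psi_{\theta_1}(n)|\psi_{\theta_2}(n)\rangle|^2$), reduce to a finite $\delta$-net of the compact $\Theta$ where Lemma \ref{lemma.weak.stronng.equivalence} supplies the channels, and close with the same three-term triangle inequality using contractivity of channels, treating the direction $\delta(\mathcal Q,\mathcal Q_n)$ symmetrically. The only (cosmetic) difference is that you work with the inner products themselves rather than their squared moduli, and you state explicitly the continuity of $\theta\mapsto\rho_\theta(n)$ that the paper uses implicitly.
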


The proof can be found in section \ref{sec.proof.weaktostrong}.
\section{Local asymptotic normality for quantum Markov chains}
\label{sec.lan}

We now proceed to the second main task outlined in the introduction: characterising the statistical properties of the quantum output state of the Markov chain. We will assume that the Markov dynamics is unknown, or more precisely that $V= V_\theta$ depends smoothly on some unknown parameter $\theta\in \Theta\subset \mathbb{R}$. The case of multidimensional parameters can be investigated along the same lines, but will not be discussed here. A practical goal is to estimate $\theta$ by performing measurements on the output and constructing an estimator $\hat\theta$ based on the measurement results. The restriction to output measurements is justified by the fact that in many experiments the system cannot be measured directly, and is particularly natural in the context of quantum control engineering \cite{Wiseman&Milburn}. This problem was studied in \cite{Guta} for a particular parametric family of discrete time quantum Markov chains, and the continuous time set-up was considered in \cite{Catana&Bouten&Guta}. The approach is based on tools of asymptotic statistics, which allows to distill the essential features of the problem such as asymptotic normality and Fisher information. 

In this paper we focus on the properties of the quantum statistical model rather than studying particular measurement strategies, in the spirit of the classic work on quantum Cram\'{e}r-Rao bound \cite{Belavkin,Holevo,Helstrom}. 
We first show that if the parameter $\theta$ is identifiable, the quantum statistical model 
$|\Psi_\theta(n)\rangle:= |\Psi_{V_\theta}(n)\rangle$ (along with three other models describing the output) can be approximated by a quantum Gaussian shift model, and its limiting quantum Fisher information (per unit of time) can be computed analytically. On the other hand, for un-identifiable parameters such as in conjugation with a local unitary, we show that the corresponding Fisher information is zero.

The convergence to the Gaussian model is the extension to Markov dynamics of the Holstein-Primakov convergence for spin coherent states discussed in section \ref{sec.i.i.d.lan}. This involves a rescaling of the parameter with the statistical uncertainty such that $\theta:= \theta_0 + u/\sqrt{n}$ with $u$ the unknown local parameter. As already explained this does not amount to an additional assumption since for large $n$ the `localisation' can be achieved by means of an adaptive estimation procedure. 

%

\subsection{Quantum statistical models for estimation of Markov processes}
\label{section.models}

We will analyse four (slightly) different definitions of the `output state' associated to a primitive isometry $V$, and show that in asymptotics they lead to the same limit model and quantum Fisher information.  
Using the notations from subsection \ref{notations}, we define the following states, where for the sake of clarity, the dependence on the initial system state and the isometry $V$ is indicated explicitly:
\begin{itemize}
\item[(a)] The joint (pure) state of system and output and the memory after $n$ iterations, for a given initial pure state $\varphi\in \hi$ of the system
$$
|\Psi_{V,\varphi}(n)\rangle:= V(n)|\varphi\rangle;
$$

\item[(b)] The output state after $n$ iterations, for a given initial pure state $\phi$ of the system
\begin{equation}\label{outputstate}
\rho_{V,\varphi}(n):={\rm tr}_{\hi}[|\Psi_{V,\varphi}(n)\rangle\langle \Psi_{V,\varphi}(n)|];
\end{equation}

\item[(c)] The output state  corresponding to the stationary regime
$$
\rho_V(n):={\rm tr}_{\hi}[V(n)\rho_{ss}V(n)^*];
$$
\item[(d)] The (un-normalized) conditional state
\begin{equation}\label{MPS}
|\psi_{V,\eta,\varphi}(n)\rangle:=\sum_{{\bf i}\in I^{(n)}} \langle \eta |{\bf K}_{\bf i}^{(n)}|\varphi\rangle |{\bf i}\rangle
\end{equation}
of the output after $n$ iterations, on the condition that the initial state of the memory is $|\varphi\rangle$, and the measurement of the projection $|\eta\rangle\langle\eta|$ on the memory after the $n$ iterations yields $1$. In case 
$|\eta\rangle$ or $|\varphi\rangle$ is one of the basis vectors $|e_j\rangle$, we replace it by the index $j$ so as to simplify the notation.
\end{itemize}

Here (c) is most natural from the operational point of view, since our basic physical assumption is that we only have access to the output in the stationary regime. The other states are, however, technically easier to handle; in particular, (d) and (a) have explicitly the form of a Matrix Product State.

Let us clarify the relations between the states $(a)-(d)$; this will be useful later. Let
$
\rho_{ss} = \sum_{i=1}^D \Lambda_i |e_i\rangle\langle e_i|
$
be the spectral decomposition of the stationary state $\rho_{ss}$. 
We can decompose the states (a) as
\begin{equation}\label{conditional}
|\Psi_{V,\varphi}(n) \rangle= \sum_{j=1}^D |e_j \rangle\otimes |\psi_{V,j, \varphi}(n) \rangle,
\end{equation}
and the state (b) is a mixture of states of type $(d)$
\begin{equation}\label{eq.rho.varphi}
\rho_{V,\varphi}(n) = \sum_{j=1}^D |\psi_{V,j,\varphi} (n)\rangle\langle \psi_{V,j,\varphi} (n)|.
\end{equation}

The stationary output  state (c) is the mixture of pure states of type (d) 
corresponding to the system starting in one of the eigenstates $|e_i\rangle$, and ending in a state $|e_j\rangle$
\begin{align}\label{eq.decomposition.stationary.state}
\rho_V(n)&=\sum_{i=1}^D \Lambda_{i} {\rm tr}_{\hi}[|V(n) e_i\rangle\langle V(n) e_i|]\nonumber\\
&=\sum_{i,j=1}^D \Lambda_{i} |\psi_{V,j,i}(n) \rangle\langle \psi_{V,j,i}(n) |\\
&=\sum_{i,j=1}^D \Lambda_{i}\Lambda_{j} |\Lambda_{j}^{-\frac 12}\psi_{V,j,i}(n) \rangle\langle \Lambda_{j}^{-\frac 12}\psi_{V,j,i}(n) |\nonumber.
\end{align}

\subsection{Main results}
In this section we present the second main result of the paper and discuss the interpretation of the associated quantum Fisher information. The proof is discussed in section \ref{proof.LAN}. For the notations we refer to section \ref{sec.markov.covariance}.

Let $\theta\mapsto V_\theta$, be a smooth family of isometries parametrized by the unknown parameter $\theta\in \real$, and let $K_{i,\theta}$ be the corresponding Kraus operators, cf. section \ref{sec.Markov.dynamics}. We write $\theta= \theta_0 + u/\sqrt{n}$, with local parameter $u$ and assume that $V:= V_{\theta_0}$ is a primitive isometry, with stationary state $\rho_{ss}$. Let $\Lambda_i$ denote its eigenvalues. 
We denote by $K_i:= K_{i, \theta_0}$ and by $\dot{K}_i$ the derivative of $K_{i,\theta}$ with respect to $\theta$, at $\theta_0$. To emphasise the dependence on the local parameter we denote  by $|\Psi_{u,\varphi}(n)\rangle$ the output state corresponding to $V_{\theta_0+ u/\sqrt{n}}$, and similarly for the other states introduced in the previous subsection: 
$ \rho_{u,\varphi}(n)$, $\rho_u(n)$ and $|\psi_{u,\eta,\varphi}(n)\rangle$. All the results in this section are based on the following Theorem \ref{perturbationtheorem}, together with the observation that the relevant scalar products can clearly be written in the form
\begin{align}
\langle \Psi_{u,\varphi}(n)|\Psi_{v,\varphi}(n)\rangle&=\langle \varphi |T^n_{u,v,n}(\id)\varphi\rangle,\label{scalar1}\\
\langle \psi_{u,\eta,\varphi}(n)|\psi_{v,\eta,\varphi}(n)\rangle &={\rm tr}[T_{u,v;n}^n(|\eta\rangle\langle \eta|)|\varphi\rangle\langle\varphi|]\label{scalar2},
\end{align} 
where
\begin{align*}
T_{u,v;n}:&\lh\to\lh, & T_{u,v;n}(X):=V_{\theta_0+u/\sqrt{n}}^*(X\otimes \id)V_{\theta_0+v/\sqrt{n}}.
\end{align*}
In order to simplify calculations we will additionally assume the following `gauge condition'
\begin{equation}\label{eq.gauge}
{\rm Im} \sum_{i=1}^k {\rm tr}[\rho_{ss}\dot{K}_{i}^* K_{i}]=0.
\end{equation}
It is straightforward to verify that \eqref{eq.gauge} can always be satisfied by replacing the isometries $V=V_\theta$ with $e^{ib\theta} V_\theta$, where $b$ is a suitable constant. Note that this transformation does not affect the actual statistical models given by the states 
$\Psi_{u,\phi}(n)$, $\rho_{u,\phi}(n)$, $\rho_u(n)$, and $\psi_{u,\eta,\phi}(n)$. It also does not change the equivalence class of the isometry $V_\theta$.
\begin{theorem} \label{perturbationtheorem}
There exist $F,a\in \real$, such that for each $X\in \mathcal{B(H)}$ and $C>0$ we have
\begin{equation}\label{convergence}
\lim_{n\rightarrow\infty} \sup_{|u|,|v|<C}T^n_{u,v;n}(X)={\rm tr}[\rho_{ss}X] e^{-F(u-v)^2/8- ia(u^2- v^2)}\id.
\end{equation}


The constant $F$ can be written in terms of the Markov variance of the `generator' $G^*:= i \dot{V}V^*$  
(cf. section \ref{sec.markov.covariance})
\begin{eqnarray}
F= 4 (G^*,G^*)_V &=&4 {\rm Tr}\left\{\rho_{ss} \mathcal{E}\left[
GG^* +  2{\rm Re} [ G (\id \otimes \mathcal{R}\circ\mathcal{E}(G^*) )]  \right]\right\}
\nonumber\\
&=& 4\sum_{i=1}^k \left[ {\rm Tr}\left[\rho_{ss}\dot{K}_i^*\dot{K}_i\right]  + 2{\rm Tr} \left[{\rm Im} ( K_i  \rho_{ss}  \dot{K}_i^* ) \cdot\mathcal{R}( {\rm Im} \sum_k\dot{K}_{i}^* K_i )\right]\right].\label{eq.fisher.2}
\end{eqnarray}

\end{theorem}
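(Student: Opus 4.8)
The plan is to regard $T_{u,v;n}$ as a perturbation of the primitive channel $T:=T_{0,0;n}$ (which is independent of $n$, since $T_{0,0;n}(X)=V^*(X\otimes\id)V$) and to read off the limit of $T_{u,v;n}^n$ from the leading eigenvalue of the perturbed map. First I would Taylor-expand $V_{\theta_0+u/\sqrt n}=V+\tfrac{u}{\sqrt n}\dot V+\tfrac{u^2}{2n}\ddot V+O(n^{-3/2})$ and substitute into $T_{u,v;n}(X)=V_{\theta_0+u/\sqrt n}^*(X\otimes\id)V_{\theta_0+v/\sqrt n}$, obtaining $T_{u,v;n}=T+n^{-1/2}\Phi_{u,v}+n^{-1}\Psi_{u,v}+O(n^{-3/2})$, where $\Phi_{u,v}(X)=u\dot V^*(X\otimes\id)V+vV^*(X\otimes\id)\dot V$ and $\Psi_{u,v}$ collects the three second-order terms. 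By smoothness of $\theta\mapsto V_\theta$ and compactness of $\{|u|,|v|\le C\}$ the remainder is uniform in $(u,v)$.

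The key structural input is that, by the quantum Perron--Frobenius theorem and primitivity of $V$, the channel $T$ has $1$ as a simple eigenvalue with right eigenvector $\id$ and left eigenvector the functional $X\mapsto{\rm tr}[\rho_{ss}X]$, while the remaining spectrum lies in a disc of radius strictly less than $1$. I would then invoke analytic (Kato) perturbation theory: for large $n$ the map $T_{u,v;n}$ has a unique eigenvalue $\lambda_n$ near $1$ with one-dimensional spectral projection $P_n$ converging to $P_0\colon X\mapsto{\rm tr}[\rho_{ss}X]\,\id$, whereas its complementary spectrum stays inside a disc of radius $1-\delta$. Decomposing $T_{u,v;n}^n=\lambda_n^nP_n+R_n^n$ with $\|R_n^n\|$ decaying geometrically, the problem reduces to evaluating $\lim_n\lambda_n^n\,{\rm tr}[\rho_{ss}X]\,\id$; since $\Phi_{u,v}$ and $\Psi_{u,v}$ depend polynomially on $(u,v)$, all spectral data and error bounds can be taken uniform over the compact parameter set.

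To compute $\lambda_n^n$ I would use bi-orthogonal perturbation theory. The first-order shift is $\mu_1={\rm tr}[\rho_{ss}\Phi_{u,v}(\id)]=(u-v)\,{\rm tr}[\rho_{ss}\dot V^*V]$; differentiating $V^*V=\id$ shows $\dot V^*V$ is anti-Hermitian, so this trace is purely imaginary and is killed by the gauge condition \eqref{eq.gauge}, whence $\mu_1=0$. Therefore $\Phi_{u,v}(\id)=(u-v)\dot V^*V$ is mean-zero (so $\mathcal R$ applies to it), $n\log\lambda_n\to\mu_2$, and $\lambda_n^n\to e^{\mu_2}$, where the second-order formula, with reduced resolvent equal to $\mathcal R=(\mathrm{Id}-T)^{-1}$ on mean-zero operators, reads $\mu_2={\rm tr}[\rho_{ss}\Psi_{u,v}(\id)]+{\rm tr}[\rho_{ss}\Phi_{u,v}(\mathcal R\Phi_{u,v}(\id))]$. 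Using the isometry identities $\dot V^*V+V^*\dot V=0$ and $\ddot V^*V+2\dot V^*\dot V+V^*\ddot V=0$, the $*$-compatibility of $\mathcal R$, and the anti-Hermiticity of $\dot V^*V$, this reduces to a form $\mu_2=\alpha u^2+\bar\alpha v^2+(\gamma-2\,{\rm Re}\,\tau)\,uv$ with $\gamma={\rm tr}[\rho_{ss}\dot V^*\dot V]$ and $\tau={\rm tr}[\rho_{ss}\dot V^*((\mathcal R(\dot V^*V))\otimes\id)V]$. Comparing with the target $-\tfrac F8(u-v)^2-ia(u^2-v^2)$ fixes real constants $F=4(\gamma-2\,{\rm Re}\,\tau)$ and $a=-{\rm Im}\,\alpha$; the determinations of $F$ from the $u^2$ and the $uv$ coefficients coincide automatically, which both yields the claimed form \eqref{convergence} and serves as a consistency check.

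It then remains to recognise $F=4(G^*,G^*)_V$ with $G^*=i\dot VV^*$. Substituting $G^*$ into the explicit Markov-covariance formula and using $\mathcal E(Z)=V^*ZV$ together with $\mathcal E(G^*)=-i\dot V^*V$, the three terms of $(G^*,G^*)_V$ evaluate to $\gamma$, $-\tau$ and $-\bar\tau$ respectively, giving $(G^*,G^*)_V=\gamma-2\,{\rm Re}\,\tau$ as required; expanding $V$ and $\dot V$ in the Kraus operators $K_i,\dot K_i$ then produces the closed expression \eqref{eq.fisher.2}. I expect the spectral step to be the main obstacle: one must control the leading eigenvalue and spectral projection of the non-selfadjoint maps $T_{u,v;n}$ (which for $u\neq v$ are not even positive) uniformly in $(u,v)$, and show that $R_n^n\to0$ while $\lambda_n^n$ converges. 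This hinges on the spectral gap of $T$ provided by primitivity together with a uniform version of the Kato expansion obtained from compactness; the second-order bookkeeping and the matching with $(G^*,G^*)_V$ are delicate but essentially routine.
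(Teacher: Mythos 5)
Your proposal is correct, and its computational core coincides with the paper's: your $\Phi_{u,v},\Psi_{u,v}$ are the paper's $T_{1,u,v},T_{2,u,v}$; the observation that the gauge condition \eqref{eq.gauge} plus anti-Hermiticity of $\dot V^*V$ kills the first-order shift is the paper's equation \eqref{ortho}; your $\mu_2$ is exactly the paper's $\lambda(u,v)$ in \eqref{eq.lambda.uv}; and your bookkeeping $F=4(\gamma-2\,{\rm Re}\,\tau)$, followed by the evaluation of the three terms of $(G^*,G^*)_V$ as $\gamma,-\tau,-\bar\tau$, reproduces \eqref{eq.fisher.2} (indeed your tensor ordering $(\mathcal{R}(\dot V^*V))\otimes\id$ is the one consistent with the paper's convention $T(X)=V^*(X\otimes\id)V$; the paper's proof writes $\id\otimes S$ in a couple of places, a harmless ordering slip). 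Where you genuinely diverge is the analytic step establishing \eqref{convergence}. The paper does not run spectral perturbation theory itself: it invokes a second-order perturbation theorem (Theorem 2 of \cite{Guta}) as a black box, whose hypotheses are precisely the orthogonality condition, primitivity of $T$, and the fact that each $T_{u,v;n}$ is an operator-norm contraction; this yields uniform convergence of $T^n_{u,v;n}(\id)$ only, so the paper then needs a separate argument for general $X$, decomposing $X$ over the matrix units $|e_j\rangle\langle e_i|$ of the $\rho_{ss}$-eigenbasis and using a telescoping contraction estimate together with $T^k(|e_j\rangle\langle e_i|)\approx\delta_{ij}\Lambda_i\id$ for large fixed $k$. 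Your Riesz-projection decomposition $T_{u,v;n}^n=\lambda_n^nP_n+R_n^n$ handles all $X$ in one stroke, since $P_n\to P_0={\rm tr}[\rho_{ss}\,\cdot\,]\id$ uniformly and $\|R_n^n\|\to0$ geometrically; the price is that you must yourself establish the uniform Kato expansion and uniform resolvent bounds on a contour $|z|=1-\delta$ (in effect re-deriving the cited black box), which is standard in finite dimensions given the spectral gap provided by primitivity, and you correctly flag this as the main technical burden. Both routes produce the same $\lambda(u,v)$ and the same $F$; yours is more self-contained and avoids the matrix-unit step, the paper's is shorter given the reference.
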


\begin{proof}
see section \ref{sec.proof.key}. 
\end{proof}

\subsubsection{LAN for the pure system and output state}


Let $\mathcal{Q}_{n,\varphi}:= \{ |\Psi_{u,\varphi} (n)\rangle: u\in \mathbb{R}\}$ be the quantum statistical model consisting of
of system and output state $|\Psi_{V,\varphi}(n)\rangle$ (cf. section \ref{section.models}),  with $V= V_{\theta_0+u/\sqrt{n}}$, and initial state $|\varphi\rangle$. We recall that
$$\mathcal{G}:= \left\{ |\sqrt{F/2}u\rangle \in  L^2(\real) \,:\, u\in \mathbb{R} \right\}$$ is a  one-dimensional (pure states) quantum  Gaussian model consisting of coherent states with mean $(\sqrt{F/2}u,0)$.

\begin{theorem}\label{th.qlan} 
The following convergence results hold:
\begin{itemize}
\item[1.] $\mathcal{Q}_{n,\varphi}$ converges weakly to $\mathcal{G}$; more precisely, there exists a real constant $a$ such that 
\begin{equation}\label{eq.qlan}
\lim_{n\to \infty} \left\langle \exp(-iau^2) \Psi_{u,\varphi}(n) |\exp(-iav^2) \Psi_{v,\varphi}(n)\right\rangle = \langle \sqrt{F/2} u | \sqrt{F/2}v\rangle =  \exp(- F(u-v)^2/8)
\end{equation}
where $\exp (- iau^2)$ is a physically irrelevant phase factor, and $F$ is the asymptotic quantum Fisher information per time unit, given by \eqref{eq.fisher.2}. 
\item[2.] Let $Q_{n,\varphi}^c, G^c$ be the restrictions of $\mathcal{Q}_{n,\varphi}$ and $\mathcal{G}$ to parameters in the bounded interval $u\in [-c,c] $.
Then $Q_{n,\varphi}^c$ converges strongly to $\mathcal{G}^c$, cf. definition \ref{def.strong.convergence}.
\end{itemize}
\end{theorem}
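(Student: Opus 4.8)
The plan is to derive both assertions directly from Theorem~\ref{perturbationtheorem}, which already contains all of the analytic work; what remains is bookkeeping with the representation \eqref{scalar1} and the phase factors, together with one application of Lemma~\ref{th.weak.implies.strong} for the second part.

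For the first assertion, I would start from \eqref{scalar1} to write $\langle \Psi_{u,\varphi}(n)|\Psi_{v,\varphi}(n)\rangle = \langle \varphi|T^n_{u,v;n}(\id)\varphi\rangle$ and apply \eqref{convergence} with $X=\id$. Since ${\rm tr}[\rho_{ss}\id]=1$ and $\|\varphi\|=1$, this gives
$$
\lim_{n\to\infty}\langle \Psi_{u,\varphi}(n)|\Psi_{v,\varphi}(n)\rangle = e^{-F(u-v)^2/8 - ia(u^2-v^2)}.
$$
Pulling the phase prefactors out of the inner product yields $\langle e^{-iau^2}\Psi_{u,\varphi}(n)|e^{-iav^2}\Psi_{v,\varphi}(n)\rangle = e^{ia(u^2-v^2)}\langle\Psi_{u,\varphi}(n)|\Psi_{v,\varphi}(n)\rangle$, so the $ia(u^2-v^2)$ term cancels in the limit, leaving $e^{-F(u-v)^2/8}$. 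It then remains to recognise this as the coherent-state overlap $\langle\sqrt{F/2}u|\sqrt{F/2}v\rangle$: with the convention that the label denotes the position-mean, so that the Glauber amplitude is $\sqrt{F/2}\,u/\sqrt{2}$, the standard formula $\langle\alpha|\beta\rangle=\exp(-\tfrac12|\alpha|^2-\tfrac12|\beta|^2+\bar\alpha\beta)$ produces exactly $\exp(-F(u-v)^2/8)$. This confirms \eqref{eq.qlan}, identifies $a$ with the constant of Theorem~\ref{perturbationtheorem}, and establishes weak convergence of $\mathcal{Q}_{n,\varphi}$ to $\mathcal{G}$ in the sense of Definition~\ref{def.weak.convergence}, the admissible phase choice being $e^{-iau^2}$.

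For the second assertion I would invoke Lemma~\ref{th.weak.implies.strong} with $\Theta=[-c,c]$, which is compact in $\real$. Weak convergence on $\Theta$ is already contained in the first part, so the only extra hypothesis to check is that the convergence of the phase-corrected inner products is \emph{uniform} on $[-c,c]^2$. This uniformity is precisely what \eqref{convergence} supplies: choosing any $C>c$, the supremum over $|u|,|v|<C$ sitting inside the limit shows that $T^n_{u,v;n}(\id)\to e^{-F(u-v)^2/8-ia(u^2-v^2)}\id$ uniformly on $[-c,c]^2$, hence so does $\langle\varphi|T^n_{u,v;n}(\id)\varphi\rangle$. Since the correction $e^{ia(u^2-v^2)}$ is a fixed, $n$-independent continuous function on the compact set $[-c,c]^2$, multiplication by it preserves uniform convergence, so the phase-corrected overlaps converge uniformly to $\langle\sqrt{F/2}u|\sqrt{F/2}v\rangle$. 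Lemma~\ref{th.weak.implies.strong} then delivers $\Delta(\mathcal{Q}^c_{n,\varphi},\mathcal{G}^c)\to 0$, i.e. strong convergence in the sense of Definition~\ref{def.strong.convergence}.

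The substantive content lies entirely in Theorem~\ref{perturbationtheorem}; once that is granted the present statement is essentially routine. The only points demanding care are the convention-matching needed to read off the limiting scalar product as a coherent-state inner product (which pins down the factors of $2$ and the scaling $\sqrt{F/2}$), and the observation that the deterministic phase correction does not destroy the uniformity furnished by \eqref{convergence}, so that Lemma~\ref{th.weak.implies.strong} genuinely applies. I do not expect any real obstacle beyond these verifications.
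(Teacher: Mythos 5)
Your proposal is correct and follows essentially the same route as the paper's own proof: weak convergence is read off from Theorem \ref{perturbationtheorem} via \eqref{scalar1} with $X=\id$, and strong convergence follows because the uniformity in \eqref{convergence} lets Lemma \ref{th.weak.implies.strong} apply on the compact interval $[-c,c]$. The extra details you supply (the cancellation of the $e^{ia(u^2-v^2)}$ phase, the coherent-state overlap convention fixing the factor $\exp(-F(u-v)^2/8)$, and the remark that the fixed continuous phase correction preserves uniform convergence) are exactly the steps the paper leaves implicit.
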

\begin{proof} The weak convergence follows immediately from Theorem \ref{perturbationtheorem} and \eqref{scalar1}.
The strong convergence follows from the same Theorem \ref{perturbationtheorem} which shows that the above convergence is uniform 
over $u,v$ in a bounded interval, and Lemma \ref{th.weak.implies.strong} which upgrades weak to strong convergence.

\end{proof}

\subsubsection{LAN for the pure conditional output state}

Since the system plus output state $|\Psi_{u,\varphi}(n)\rangle$ is pure (for a pure system initial state), its statistical properties are easier to understand, for which reason we have concentrated our analysis on this model so far. However, if we restrict ourselves to output measurements, it is natural to consider the output state alone, and its  Fisher information. Intuitively, since in the long time limit the system reaches stationarity, while the Fisher information of the output state grows linearly, one expects that all information is in the output and no loss of information (per time unit) is incurred by tracing out the system. 

We consider the output states $|\psi_{u,\eta,\varphi}(n)\rangle$, corresponding to an initial state $|\varphi\rangle$, and a final state 
$|\eta\rangle$ after $n$ iterations. Let $\mathcal{Q}_{n,\eta,\varphi}:= \{ \psi_{u,\eta,\varphi}(n)/\sqrt{N_n(u,\eta,\varphi)} : u\in \mathbb{R}\}$ be the corresponding statistical model, where $N_n(u,\eta,\varphi) = \| \psi_{u,\eta,\varphi}(n)\|^2$ is the normalisation constant.

\begin{theorem}\label{asymptotic_normality2} The following convergence results hold:
\begin{itemize}
\item[1.] $\mathcal{Q}_{n,\eta,\varphi}$ converges weakly to $\mathcal{G}$.
\item[2.] Let $Q_{n,\eta,\varphi}^c, \mathcal G^c$ be the restrictions of $\mathcal{Q}_{n,\eta,\varphi}$ and $\mathcal{G}$ to parameters in the bounded interval 
$u\in [-c,c] $. Then $Q_{n,\eta,\varphi}^c$ converges strongly to $\mathcal{G}^c$, cf. definition \ref{def.strong.convergence}.
\end{itemize}
\end{theorem}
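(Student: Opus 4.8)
The plan is to mimic the proof of Theorem \ref{th.qlan}, the only new feature being the normalisation, since the conditional states $|\psi_{u,\eta,\varphi}(n)\rangle$ are unnormalised. The starting point is formula \eqref{scalar2}, which expresses the unnormalised inner products directly through the map $T_{u,v;n}$. Applying Theorem \ref{perturbationtheorem} with the fixed operator $X=|\eta\rangle\langle\eta|$ yields, uniformly for $|u|,|v|<C$,
$$
\langle \psi_{u,\eta,\varphi}(n)|\psi_{v,\eta,\varphi}(n)\rangle={\rm tr}[T^n_{u,v;n}(|\eta\rangle\langle\eta|)|\varphi\rangle\langle\varphi|]\longrightarrow \langle\eta|\rho_{ss}|\eta\rangle\, e^{-F(u-v)^2/8-ia(u^2-v^2)},
$$
where I have used ${\rm tr}[\id\,|\varphi\rangle\langle\varphi|]=1$. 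Setting $u=v$ identifies the normalisation constants $N_n(u,\eta,\varphi)\to\langle\eta|\rho_{ss}|\eta\rangle$, a strictly positive limit independent of $u$; positivity holds because $\rho_{ss}$ has full rank by primitivity of $V$.

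First I would establish the weak convergence (part 1). Dividing the displayed inner product by $\sqrt{N_n(u,\eta,\varphi)N_n(v,\eta,\varphi)}$, the factor $\langle\eta|\rho_{ss}|\eta\rangle$ cancels, and after the phase adjustment $|\psi_{u,\eta,\varphi}(n)\rangle\mapsto e^{-iau^2}|\psi_{u,\eta,\varphi}(n)\rangle$ exactly as in \eqref{eq.qlan}, the normalised inner products converge to $e^{-F(u-v)^2/8}=\langle\sqrt{F/2}u|\sqrt{F/2}v\rangle$. This is precisely weak convergence of $\mathcal{Q}_{n,\eta,\varphi}$ to $\mathcal{G}$ in the sense of Definition \ref{def.weak.convergence}.

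For the strong convergence on a bounded interval (part 2), I would track the uniformity. Since the convergence in Theorem \ref{perturbationtheorem} is uniform over $|u|,|v|<C$, and the limiting normalisation $\langle\eta|\rho_{ss}|\eta\rangle$ is a positive constant, for large $n$ the quantities $N_n(u,\eta,\varphi)$ stay uniformly bounded away from zero and from above on $[-c,c]$. Hence division by $\sqrt{N_n(u)N_n(v)}$ preserves uniform convergence, and the normalised inner products converge to those of $\mathcal{G}^c$ uniformly for $(u,v)\in[-c,c]^2$. Lemma \ref{th.weak.implies.strong} then upgrades this uniform weak convergence to strong convergence.

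The only genuine subtlety is the normalisation: one must confirm that the limit $\langle\eta|\rho_{ss}|\eta\rangle$ is strictly positive and that $N_n$ remains uniformly bounded away from zero on the compact interval, so that dividing by $\sqrt{N_n(u)N_n(v)}$ is both legitimate and uniformity-preserving. Both points follow from the full-rankness of $\rho_{ss}$ together with the uniform convergence already supplied by Theorem \ref{perturbationtheorem}; beyond this, the argument is a direct transcription of the proof of Theorem \ref{th.qlan}.
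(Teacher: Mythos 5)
Your proposal is correct and follows essentially the same route as the paper: applying Theorem \ref{perturbationtheorem} to $X=|\eta\rangle\langle\eta|$ via \eqref{scalar2}, identifying $\lim_n N_n(u,\eta,\varphi)=\langle\eta|\rho_{ss}|\eta\rangle$ by setting $u=v$, and then invoking Lemma \ref{th.weak.implies.strong} for the strong convergence. Your treatment is in fact slightly more explicit than the paper's, spelling out the phase adjustment and the uniform positivity of the normalisation constants, which the paper leaves implicit.
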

\begin{proof}
It follows immediately from Theorem \ref{perturbationtheorem} and \eqref{scalar2} that
$$
\lim_{n\rightarrow\infty} 
 \langle \psi_{u,\eta,\varphi}(n)|\psi_{v,\eta,\varphi}(n)\rangle
= \langle \eta |\rho_{ss}\eta\rangle \langle \sqrt{F/2}u,\sqrt{F/2}v\rangle,
$$
which also implies that $\lim_{n\rightarrow \infty} N_n(u,\eta,\varphi) =\langle \eta |\rho_{ss}\eta\rangle$. This proves the weak convergence, and the strong convergence again follows from Theorem \ref{perturbationtheorem} and Lemma \ref{th.weak.implies.strong}.
\end{proof}
\subsubsection{LAN for mixed output state with known initial state $|\eta\rangle$}

In a more realistic setup, we do not have access to the system after $n$ iterations. 
Assuming that we know the initial state $|\varphi\rangle$, the output is described by the mixed state
statistical model  $\mathcal{O}_{n,\varphi}:= \{\rho_{u,\varphi}(n) : u\in \mathbb{R}\}$, and its restriction to local parameters 
$u\in [-c,c]$ is denoted $\mathcal{O}_{n,\varphi}^c$. Since our definition of weak convergence applies only to pure states models, we formulate local asymptotic normality as strong converge to the Gaussian model. 

\begin{theorem}\label{asymptotic_normality3}
The sequence of statistical models $\mathcal{O}_{n,\varphi}^c$ converges strongly to $\mathcal{G}^c$.
\end{theorem}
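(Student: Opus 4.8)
The plan is to bound separately the two halves of the Le Cam distance of Definition \ref{def.strong.convergence} between $\mathcal{O}_{n,\varphi}^c$ and $\mathcal{G}^c$, reducing everything to the pure conditional model of Theorem \ref{asymptotic_normality2} through the decomposition \eqref{eq.rho.varphi}, $\rho_{u,\varphi}(n)=\sum_{j=1}^D|\psi_{u,j,\varphi}(n)\rangle\langle\psi_{u,j,\varphi}(n)|$. The one new analytic ingredient beyond Theorem \ref{asymptotic_normality2} is the \emph{asymptotic orthogonality of the branches}. Applying Theorem \ref{perturbationtheorem} with $X=|e_j\rangle\langle e_{j'}|$ and $u=v$ to the obvious bilinear extension of \eqref{scalar2} gives
$$\lim_{n\to\infty}\sup_{|u|\le c}\left|\langle\psi_{u,j,\varphi}(n)|\psi_{u,j',\varphi}(n)\rangle-\Lambda_j\delta_{jj'}\right|=0,$$
since ${\rm tr}[\rho_{ss}|e_j\rangle\langle e_{j'}|]=\Lambda_j\delta_{jj'}$ for the eigenbasis $\{e_j\}$ of $\rho_{ss}$. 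In particular the normalisations $N_n(u,j,\varphi)$ converge to $\Lambda_j$ uniformly, and the normalised branches $\tilde\psi_{u,j,\varphi}(n):=\psi_{u,j,\varphi}(n)/\|\psi_{u,j,\varphi}(n)\|$ become orthonormal, uniformly in $u\in[-c,c]$.

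For the direction $\delta(\mathcal{G}^c,\mathcal{O}_{n,\varphi}^c)\to0$ I would take the channels $S_n^{(j)}$ supplied by the strong convergence of Theorem \ref{asymptotic_normality2} with final state $\eta=e_j$, which satisfy $\sup_{|u|\le c}\|S_n^{(j)}(|\sqrt{F/2}u\rangle\langle\sqrt{F/2}u|)-|\tilde\psi_{u,j,\varphi}(n)\rangle\langle\tilde\psi_{u,j,\varphi}(n)|\|_1\to0$, and form the convex combination $S_n:=\sum_{j=1}^D\Lambda_j S_n^{(j)}$, which is a channel because $\sum_j\Lambda_j=1$. By the triangle inequality, $S_n(|\sqrt{F/2}u\rangle\langle\sqrt{F/2}u|)$ converges in trace norm, uniformly in $u$, to $\sum_j\Lambda_j|\tilde\psi_{u,j,\varphi}(n)\rangle\langle\tilde\psi_{u,j,\varphi}(n)|$, which differs from $\rho_{u,\varphi}(n)=\sum_j N_n(u,j,\varphi)|\tilde\psi_{u,j,\varphi}(n)\rangle\langle\tilde\psi_{u,j,\varphi}(n)|$ by at most $\sum_j|\Lambda_j-N_n(u,j,\varphi)|\to0$. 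This settles one half of the distance.

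For the reverse direction $\delta(\mathcal{O}_{n,\varphi}^c,\mathcal{G}^c)\to0$ I would exploit the asymptotic orthogonality to build a single channel $T_n$ on $\hik^{\otimes n}$ that coherently extracts and then discards the flag $j$. Because the $\{\tilde\psi_{u,j,\varphi}(n)\}_j$ are uniformly asymptotically orthonormal, the Gram matrix of the branches converges to $\mathrm{diag}(\Lambda_1,\dots,\Lambda_D)$, so the eigenvalues of $\rho_{u,\varphi}(n)$ converge to $\{\Lambda_j\}$ and $\rho_{u,\varphi}(n)$ is uniformly trace-norm close to a genuinely orthogonal mixture $\sum_j\Lambda_j|\chi_{u,j}(n)\rangle\langle\chi_{u,j}(n)|$ with $\chi_{u,j}(n)$ close to $\tilde\psi_{u,j,\varphi}(n)$. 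Applying, sector by sector, the pure-model channels $T_n^{(j)}$ of Theorem \ref{asymptotic_normality2} that send each branch to the coherent state, one obtains $T_n(\rho_{u,\varphi}(n))$ close to $\sum_j\Lambda_j|\sqrt{F/2}u\rangle\langle\sqrt{F/2}u|=|\sqrt{F/2}u\rangle\langle\sqrt{F/2}u|$.

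The main obstacle is precisely this reverse channel: the orthogonal sectors $\chi_{u,j}(n)$, and hence the instrument separating them, depend on the unknown $u$, whereas $T_n$ must be fixed. The crux will be to show that a single $u$-independent sorting suffices asymptotically — intuitively because the component of $\hik^{\otimes n}$ carrying the flag information becomes orthogonal to, and hence separable from, the component carrying the coherent-state content, uniformly over the compact interval $[-c,c]$; equivalently, one shows that the system marginal of $|\Psi_{u,\varphi}(n)\rangle$ converges to the fixed state $\rho_{ss}$, so that discarding the system costs no information about $u$. Once $T_n$ and $S_n$ are in hand, $\Delta(\mathcal{O}_{n,\varphi}^c,\mathcal{G}^c)\to0$ follows directly from Definition \ref{def.strong.convergence}, completing the proof.
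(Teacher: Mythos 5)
Your handling of the direction $\delta(\mathcal{G}^c,\mathcal{O}_{n,\varphi}^c)\to 0$ is correct and complete: the convex combination $S_n=\sum_j\Lambda_j S_n^{(j)}$ of the channels supplied by Theorem \ref{asymptotic_normality2}, combined with the uniform convergence $N_n(u,j,\varphi)\to\Lambda_j$, does the job (indeed this is more explicit than the paper, which dismisses this direction with ``proved in a similar fashion''). Your asymptotic orthogonality computation, via Theorem \ref{perturbationtheorem} applied to the matrix units $|e_j\rangle\langle e_{j'}|$, is also correct.

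The genuine gap is in the reverse direction $\delta(\mathcal{O}_{n,\varphi}^c,\mathcal{G}^c)\to 0$, and you have named it yourself without closing it: you need a single, $u$-independent channel $T_n$ on $\hik^{\otimes n}$ that sorts the branches and sends each to the coherent state, but your proposal stops exactly where this channel would have to be produced, offering only the heuristic that the flag-carrying component becomes ``separable'' from the coherent-state content, and the (true but insufficient) remark that the system marginal converges to $\rho_{ss}$; neither yields a channel. The missing idea, which is how the paper proceeds, is to \emph{enlarge the parameter space} rather than construct anything by hand: treat the branch label $j$ as a discrete parameter, so that the extended pure-state model $\tilde{\mathcal{O}}_{n,\varphi}=\{|\psi_{u,j,\varphi}(n)\rangle/\sqrt{N_n(u,j,\varphi)}\}$ over the compact set $[-c,c]\times\{1,\dots,D\}\subset\real^2$ converges weakly and uniformly---by precisely your orthogonality computation---to the extended Gaussian model $\{|\sqrt{F/2}\,u\rangle\otimes|j\rangle\}$, whose limit states carry mutually orthogonal flags. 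Lemma \ref{th.weak.implies.strong} applied to this extended model then abstractly supplies one channel $\tilde T_n$ working uniformly over all pairs $(u,j)$: the $u$-independent ``sorting'' you are missing is exactly what the weak-to-strong lemma delivers for free, since matching the Gram structure of the two models forces $\tilde T_n$ to separate the asymptotically orthogonal branches into the orthogonal flag sectors. Composing with the partial trace $P$ over the flag space $\mathbb{C}^D$, and using linearity of $\tilde T_n$, the triangle inequality, and $N_n(u,j,\varphi)\to\Lambda_j$, gives $\sup_{|u|\le c}\|P\circ\tilde T_n(\rho_{u,\varphi}(n))-|\sqrt{F/2}\,u\rangle\langle\sqrt{F/2}\,u|\|_1\to 0$. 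Without this device (or an equivalent one) your proof of this half of the Le Cam distance remains a sketch, not a proof.
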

\begin{proof}
We use the decomposition \eqref{eq.rho.varphi} for $\rho_{u,\varphi}(n)$, with the eigenbasis of the stationary state $\rho_{ss}$ of $V$:
$$
\rho_{u,\varphi}(n) = \sum_{j=1}^D |\psi_{u,j,\varphi}(n)\rangle\langle \psi_{u,j,\varphi}(n)|.
$$
Let 
$$
\tilde{\mathcal{O}}_{n,\varphi} := \{ |\psi_{u,j,\varphi}(n)\rangle/ \sqrt{N_n(u,j,\varphi)} : (u, j)\in \mathbb{R}  \times \{1,\dots, D\}\}
$$
be the `extended' model consisting of all pure components $|\psi_{u,j,\varphi}(n)\rangle$, where the index $j$ is seen as a discrete parameter. 
Again by using Theorem \ref{perturbationtheorem} and \eqref{scalar2}, we see that the extended model converges weakly to the extended Gaussian model 
$$
\tilde{G} := \{ | \sqrt{F/2} u \rangle\otimes |j\rangle :  (u, j)\in \mathbb{R}  \times \{1,\dots, D\}\}
$$
where $| \sqrt{F/2} u \rangle \otimes |j\rangle$ denotes a copy of the coherent state $| \sqrt{F/2} u \rangle$ in the tensor product 
$L^2(\mathbb{R})\otimes \mathbb{C}^D$.
Moreover since the above convergence is uniform, we can again apply Lemma \ref{th.weak.implies.strong} to upgrade the weak convergence to strong convergence. This means that for each $c>0$ there exist quantum channels $\tilde{T}_n$ such that
\begin{equation}\label{complimit}
\lim_{n\to\infty} \sup_{|u|\leq c} \,\max_{j=1,\dots, D} 
\| N_n(u,j,\varphi)^{-1}\tilde{T}_n (|\psi_{u,j,\varphi}(n)\rangle \langle \psi_{u,j,\varphi}(n)| ) -  | \sqrt{F/2} u \rangle\langle \sqrt{F/2} u | \otimes|j\rangle\langle j|  \|_1 =0.
\end{equation}
Let $P$ be the channel taking partial trace over the label space $\mathbb{C}^D$. Then
$$
P\left(| \sqrt{F/2} u \rangle\langle \sqrt{F/2} u |  \otimes \sum_{j=1}^D\Lambda_j|j\rangle\langle j|\right)=| \sqrt{F/2} u \rangle\langle \sqrt{F/2} u |,
$$
so using the channel contractivity for $P$ we get
$$
\| T_n (\rho_{u,\varphi}(n)) - | \sqrt{F/2} u \rangle\langle \sqrt{F/2} u |     \|_1\leq \sum_{j=1}^D \left\| \tilde{T}_n (|\psi_{u,j,\varphi}(n)\rangle \langle \psi_{u,j,\varphi}(n)| ) -  \Lambda_j | \sqrt{F/2} u \rangle\langle \sqrt{F/2} u | \otimes|j\rangle\langle j|  \right\|_1,
$$
where $T_n = P \circ \tilde{T}_n$. Since $\lim_{n\rightarrow\infty} N_n(u,j,\varphi)=\Lambda_j$, it follows from \eqref{complimit} that
$$
\lim_{n\to\infty} \sup_{|u|\leq c}
\| T_n (\rho_{u,\varphi}(n)) - | \sqrt{F/2} u \rangle\langle \sqrt{F/2} u |     \|_1 =0.
$$
The existence of channels $S_n$ such that
$$
\lim_{n\to\infty} \sup_{|u|\leq c}
\| \rho_{u,\varphi}(n) - S_n(| \sqrt{F/2} u \rangle\langle \sqrt{F/2} u |)     \|_1 =0.
$$
is proved in a similar fashion.
\end{proof}
\subsubsection{LAN for the stationary output state}
We consider now the stationary output model $\mathcal{S}\mathcal{O}_n:= \{ \rho_u(n) : u\in \mathbb{R}\}$. For these states the decomposition \eqref{eq.decomposition.stationary.state} holds. We denote by $\mathcal{S}\mathcal{O}_n^c$ the restriction of $\mathcal{S}\mathcal{O}_n$ to parameters $u\in [-c,c]$, and we will show that $\mathcal{S}\mathcal{O}_n^c$ converges strongly to the (restricted) Gaussian model $\mathcal{G}^c$. This means that asymptotically, the stationary state is carries the same amount of (Fisher) information as the system-output state which is a priori more informative since the observer has access to both system and output. The two models are asymptotically equivalent with a pure state Gaussian model.

\begin{theorem}\label{asymptotic_normality} For any $c>0$, the stationary output state statistical model $\mathcal{S}\mathcal{O}_n^c$ converges strongly to $\mathcal{G}^c$.
\end{theorem}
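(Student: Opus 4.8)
The plan is to follow closely the strategy of the preceding proof of Theorem \ref{asymptotic_normality3}, adapting it to the two-index decomposition \eqref{eq.decomposition.stationary.state} of the stationary output state. Writing $\rho_u(n)=\sum_{i,j=1}^D \Lambda_i |\psi_{u,j,i}(n)\rangle\langle \psi_{u,j,i}(n)|$, I would first pass to the \emph{extended} pure-state model
$$
\widetilde{\mathcal{SO}}_n := \{ |\psi_{u,j,i}(n)\rangle/\sqrt{N_n(u,j,i)} : (u,i,j)\in \mathbb{R}\times\{1,\dots,D\}^2\},
$$
in which both endpoint labels $i$ (initial eigenstate) and $j$ (final eigenstate) are treated as discrete parameters alongside the continuous local parameter $u$, with $N_n(u,j,i)=\|\psi_{u,j,i}(n)\|^2$.

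The second step is to compute the pairwise inner products of these pure components and identify the weak limit. Exactly as in \eqref{scalar2}, one has $\langle \psi_{u,j,i}(n)|\psi_{v,j',i'}(n)\rangle = \langle e_i|T_{u,v;n}^n(|e_j\rangle\langle e_{j'}|)|e_{i'}\rangle$, so Theorem \ref{perturbationtheorem} gives, uniformly for $|u|,|v|\leq c$,
$$
\lim_{n\to\infty}\langle \psi_{u,j,i}(n)|\psi_{v,j',i'}(n)\rangle = \Lambda_j\,\delta_{jj'}\delta_{ii'}\, e^{-F(u-v)^2/8 - ia(u^2-v^2)},
$$
using $\langle e_{j'}|\rho_{ss}|e_j\rangle=\Lambda_j\delta_{jj'}$ since the $|e_j\rangle$ diagonalise $\rho_{ss}$. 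In particular $N_n(u,j,i)\to\Lambda_j$, and after absorbing the physically irrelevant phase $e^{-iau^2}$ the normalised components have inner products tending to $\delta_{jj'}\delta_{ii'}\langle\sqrt{F/2}u|\sqrt{F/2}v\rangle$. Hence $\widetilde{\mathcal{SO}}_n$ converges weakly to the extended Gaussian model $\widetilde{G}:=\{|\sqrt{F/2}u\rangle\otimes|i,j\rangle : (u,i,j)\in\mathbb{R}\times\{1,\dots,D\}^2\}$ on $L^2(\mathbb{R})\otimes\mathbb{C}^D\otimes\mathbb{C}^D$, the essential point being that distinct label pairs $(i,j)$ become asymptotically orthogonal. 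Since this convergence is uniform on the compact set $u\in[-c,c]$, Lemma \ref{th.weak.implies.strong} upgrades it to strong convergence, yielding channels $\widetilde{T}_n$ as in \eqref{complimit} with the single label replaced by the pair $(i,j)$.

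The final step reassembles the mixed state. Let $P$ be the partial-trace channel over the label space $\mathbb{C}^D\otimes\mathbb{C}^D$ and set $T_n:=P\circ\widetilde{T}_n$. Because $P(|\sqrt{F/2}u\rangle\langle\sqrt{F/2}u|\otimes\sum_{i,j}\Lambda_i\Lambda_j|i,j\rangle\langle i,j|)=(\sum_i\Lambda_i)(\sum_j\Lambda_j)|\sqrt{F/2}u\rangle\langle\sqrt{F/2}u|=|\sqrt{F/2}u\rangle\langle\sqrt{F/2}u|$, using ${\rm tr}[\rho_{ss}]=\sum_i\Lambda_i=1$, the triangle inequality together with contractivity of $P$ bounds $\|T_n(\rho_u(n))-|\sqrt{F/2}u\rangle\langle\sqrt{F/2}u|\|_1$ by a sum over $(i,j)$ of the extended-model errors, each carrying the weight $\Lambda_i N_n(u,j,i)\to\Lambda_i\Lambda_j$; these vanish uniformly in $u$ by the strong convergence of $\widetilde{\mathcal{SO}}_n$ and $N_n(u,j,i)\to\Lambda_j$. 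The reverse channels $S_n$ are constructed in the same way. I expect the only genuinely delicate point to be the bookkeeping of the two-index weights, ensuring the total weight $\sum_{i,j}\Lambda_i\Lambda_j=1$ correctly reproduces the normalised coherent state after the labels are traced out; all the analytic content is supplied by Theorem \ref{perturbationtheorem} and Lemma \ref{th.weak.implies.strong}, precisely as in the preceding theorems.
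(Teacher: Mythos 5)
Your proposal is correct and follows essentially the same route as the paper's own proof: the same extended pure-state model with the pair $(i,j)$ as discrete labels, weak convergence of this extended model to $\tilde{G}$ via Theorem \ref{perturbationtheorem} and \eqref{scalar2}, the upgrade to strong convergence by Lemma \ref{th.weak.implies.strong}, and reassembly of the mixed state by composing with the partial-trace channel over the label space $\mathbb{C}^{D^2}$. The only difference is that you spell out the inner-product computation and the asymptotic orthogonality of distinct label pairs, which the paper leaves implicit.
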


\begin{proof} We use the decomposition \eqref{eq.decomposition.stationary.state} of $\rho_{u}(n)$:
$$
\rho_{u}(n) = \sum_{j=1}^D\Lambda_i|\psi_{u,j,i}(n)\rangle\langle \psi_{u,j,i}(n)|.
$$
Let 
$$
\widetilde{\mathcal{S}\mathcal{O}}_{n} := \{ |\psi_{j,i,u}(n)\rangle/ \sqrt{N_n(u,j,i)} : (u, j,i)\in \mathbb{R}  \times \{1,\dots, D\}^2\}
$$
be the `extended' model consisting of all pure components $|\psi_{u,j,i}(n)\rangle$, where the indices $j,i$ are seen as discrete parameters. Again we can use Theorem \ref{perturbationtheorem} and \eqref{scalar2} to conclude that this extended model converges weakly to the extended Gaussian model 
$$
\tilde{G} := \{ | \sqrt{F/2} u \rangle\otimes |j,i\rangle :  (u, j,i)\in \mathbb{R}  \times \{1,\dots, D\}^2\},
$$
where $| \sqrt{F/2} u \rangle \otimes |j,i\rangle$ denotes a copy of the coherent state $| \sqrt{F/2} u \rangle$ in the tensor product 
$L^2(\mathbb{R})\otimes \mathbb{C}^{D^2}$.
Moreover since the above convergence is uniform, we can apply Lemma \ref{th.weak.implies.strong} to upgrade the weak convergence to strong convergence. This means that there exist quantum channels $\tilde{T}_n$ such that
$$
\lim_{n\to\infty} \sup_{|u|\leq c} \,\max_{j,i=1,\dots, D} 
\| N_n(u,j,i)^{-1}\tilde{T}_n (|\psi_{j,i,u}(n)\rangle \langle \psi_{j,i,u}(n)| ) -  | \sqrt{F/2} u \rangle\langle \sqrt{F/2} u | \otimes|j,i\rangle\langle j,i|  \|_1 =0.
$$
Let $P$ be the channel taking  partial trace over the label space $\mathbb{C}^{D^2}$, and take $T_n = P \circ \tilde{T}_n$. Since
$$
P\left(| \sqrt{F/2} u \rangle\langle \sqrt{F/2} u |  \otimes \sum_{j,i=1}^D \Lambda_i\Lambda_j|j,i\rangle\langle j,i|\right)=| \sqrt{F/2} u \rangle\langle \sqrt{F/2} u |,
$$
we conclude, as in the proof of Theorem \ref{asymptotic_normality3}, that
$$
\lim_{n\to\infty} \sup_{|u|\leq c}
\| T_n (\rho_{u}(n)) - | \sqrt{F/2} u \rangle\langle \sqrt{F/2} u |     \|_1 =0.
$$
Again the other limit needed for the strong convergence is proved in a similar fashion.
\end{proof}

\subsection{Unidentifiable parameters have zero Fisher information}

If the Fisher information $F$ at $\theta_0$ is strictly positive, the above theorems \ref{th.qlan}, \ref{asymptotic_normality2}, \ref{asymptotic_normality3}, and \ref{asymptotic_normality} imply that $\theta$ can be estimated with standard mean square error rate $1/n$ in the neighbourhood of $\theta_0$, using any of the models considered.
 
On the other hand,
if the isometries $V_\theta$ in the family are all equivalent in the sense of Definition \ref{equiv_def}, the associated \emph{output} quantum Fisher information for such parameters must be strictly zero for outputs at arbitrary time (i.e. length $n$). However, for a given initial state $\varphi$ of the system, the \emph{joint} system plus output state \emph{can} depend on the `hidden' unitary conjugation and therefore may carry some information about it. Nevertheless, this information is not extensive in time, since the dynamics reaches stationarity in a finite time, and hence we would not expect it to contribute to the limiting Fisher information. Indeed, according to the results of the preceding section, the output model has the same limiting Fisher information per unit of time as the joint system plus output state.

The next corollary provides a consistency check, showing that any family of equivalent isometries has indeed zero limiting Fisher information for all of the statistical models considered above. The proof of this corollary can be found in Section \ref{proof.cor.Fisher.zero}.

\begin{corollary}\label{example_cor}\label{cor.Fisher.zero} Let $V_\theta$, $\theta\in \Theta\subset \real$ be an analytic family of isometries, all belonging to the same equivalence class (c.f. Definition \ref{equiv_def}). Then the quantum Fisher information $F= F(\theta_0)$ associated to $V_{\theta_0}$ is equal to zero for all $\theta_0$.
\end{corollary}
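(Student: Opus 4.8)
The plan is to reduce the statement to an explicit evaluation of the Markov variance $F=4(G^*,G^*)_V$ furnished by Theorem~\ref{perturbationtheorem} and \eqref{eq.fisher.2}, after first writing the tangent vector $\dot V_{\theta_0}$ in a canonical ``coordinate transformation'' form. Since all the $V_\theta$ lie in one equivalence class, Theorem~\ref{characterization} provides for each $\theta$ a unitary $U_\theta:\hi\to\hi$ and a phase $c_\theta$, $|c_\theta|=1$, with $V_\theta=c_\theta(U_\theta^*\otimes\id_\hik)V_{\theta_0}U_\theta$; I normalize $U_{\theta_0}=\id$ and $c_{\theta_0}=1$. The first task is to make $\theta\mapsto(c_\theta,U_\theta)$ differentiable at $\theta_0$. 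By Lemma~\ref{firstlemma}, $c_\theta$ is the modulus-one eigenvalue of the analytic family of maps $X\mapsto V_{\theta_0}^*(X\otimes\id_\hik)V_\theta$, which at $\theta=\theta_0$ coincides with the primitive channel $T$ associated to $V_{\theta_0}$. Primitivity makes the eigenvalue $1$ of $T$ simple and isolated (with eigenvector $\id$), so Kato's analytic perturbation theory yields an analytic branch $c_\theta$ together with an analytic eigenvector $F_\theta$, $F_{\theta_0}=\id$, which (following the proof of Lemma~\ref{firstlemma}) normalizes to an analytic unitary $U_\theta$. I expect this analytic dependence of the coordinate transformation to be the main obstacle; everything afterwards is computation.

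Differentiating the relation at $\theta_0$ and writing $\dot c_{\theta_0}=i\beta$ with $\beta\in\real$, and $\dot U_{\theta_0}=iH$ with $H=H^*$, gives $\dot V=i[\beta V-(H\otimes\id)V+VH]$. I may freely assume the gauge condition \eqref{eq.gauge}, since the reparametrization $V_\theta\mapsto e^{ib\theta}V_\theta$ preserves the equivalence class and leaves $F$ unchanged. Using $G^*=i\dot V V^*$ and $V^*V=\id$ one finds $\mathcal{E}(G^*)=V^*G^*V=iV^*\dot V=-\beta\id+T(H)-H$, whose stationary mean is $-\beta$; on the other hand $\mathcal{E}(G^*)=-i\sum_i \dot K_i^* K_i$, so comparing gives $\beta=-{\rm Im}\sum_i{\rm tr}[\rho_{ss}\dot K_i^* K_i]$, which vanishes by \eqref{eq.gauge}. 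Hence $\beta=0$, $G^*\in\mathcal{F}$, and $\dot V=i[VH-(H\otimes\id)V]$, so that $G^*=(H\otimes\id)VV^*-VHV^*$.

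Next I extract the structural fact driving the whole result. With $\beta=0$ the previous computation gives $\mathcal{E}(G^*)=T(H)-H=-({\rm Id}-T)(H)$. Splitting $H=H_0+h_0\id$ with $h_0={\rm tr}[\rho_{ss}H]$ and $H_0\in\mathcal{A}_0$, and using $({\rm Id}-T)\id=0$ together with the definition of $\mathcal{R}$ as the inverse of ${\rm Id}-T$ on $\mathcal{A}_0$, I obtain $\mathcal{R}\circ\mathcal{E}(G^*)=-H_0$, and likewise $\mathcal{R}\circ\mathcal{E}(G)=-H_0$. This is the conceptual heart of the proof: $G^*$ is a ``coboundary'', its conditional expectation lying in the range of ${\rm Id}-T$, which is exactly the condition under which the associated fluctuations telescope and the Markov variance must collapse.

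Finally I substitute into the explicit formula for $(G^*,G^*)_V$ from the covariance Lemma of Section~\ref{sec.markov.covariance}, namely ${\rm Tr}\{\rho_{ss}\mathcal{E}[\,GG^*+G(\mathcal{R}\mathcal{E}(G^*)\otimes\id)+(\mathcal{R}\mathcal{E}(G)\otimes\id)G^*\,]\}$. Using $G^*V=(H\otimes\id)V-VH$ and $V^*G=V^*(H\otimes\id)-HV^*$, each of the three terms reduces to combinations of ${\rm tr}[\rho_{ss}T(\cdots)]$ and ${\rm tr}[\rho_{ss}(\cdots)]$; invoking stationarity ${\rm tr}[\rho_{ss}T(A)]={\rm tr}[\rho_{ss}A]$ and ${\rm tr}[\rho_{ss}H_0]=0$, the cross terms involving $T(H_0)$ cancel pairwise and the remainder collapses to $2\,{\rm tr}[\rho_{ss}H^2]-2h_0^2-2\,{\rm tr}[\rho_{ss}H_0^2]=0$. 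Therefore $F=4(G^*,G^*)_V=0$, and since $\theta_0$ was arbitrary the conclusion holds for all $\theta_0$. This last cancellation is routine though slightly tedious; all the content sits in the coboundary identity $\mathcal{R}\circ\mathcal{E}(G^*)=-H_0$ established in the previous step.
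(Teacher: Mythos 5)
Your proof is correct, and its computational core coincides with the paper's own argument: after writing $V_\theta$ in the equivalence-class form of Theorem \ref{characterization} and differentiating at $\theta_0$, both proofs rest on the observation that the generator is a coboundary, $\mathcal{E}(G^*)=T(H)-H=-({\rm Id}-T)(H)$, so that $\mathcal{R}\circ\mathcal{E}(G^*)$ equals $-H$ up to a multiple of $\id$, after which the Markov variance $F=4(G^*,G^*)_V$ collapses to ${\rm tr}[\rho_{ss}(H^2-T(H^2))]$-type terms that vanish by stationarity — exactly the cancellation you carry out. Where you genuinely differ is in how smoothness of the family $(c_\theta,U_\theta)$ is secured, which is, as you say, the main obstacle. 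The paper normalizes the phase to zero, uses primitivity (products of Kraus operators span $\mathcal{B}(\hi)$) to show $\tilde U_{\theta_0}$ is a phase times the identity and that conjugation by $\tilde U_\theta$ is analytic, and then extracts an analytic unitary representative by an ad hoc phase normalization $U_\theta=e^{-ig(\theta)}\tilde U_\theta$; the analyticity of the phase itself is glossed over. You instead apply Kato perturbation theory to the analytic family $T_{12,\theta}(X)=V_{\theta_0}^*(X\otimes\id)V_\theta$, whose value at $\theta_0$ is the primitive channel $T$, for which $1$ is a simple isolated eigenvalue by quantum Perron--Frobenius; since the rest of the spectrum stays strictly inside the unit disc for $\theta$ near $\theta_0$, the analytic Kato branch must be the modulus-one eigenvalue guaranteed by Lemma \ref{firstlemma}, and the normalization from that lemma's proof (note $\|F_\theta^*F_\theta\|={\rm tr}[F_\theta^*F_\theta]/D$ is analytic and positive) turns the eigenvector branch into an analytic unitary family. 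This is cleaner and actually fills a gap in the paper's regularity step, at the price of invoking perturbation theory where the paper stays elementary. A further small improvement on your side: you correctly record $\mathcal{R}\circ\mathcal{E}(G^*)=-H_0$, the $\rho_{ss}$-centred part of $H$, whereas the paper writes $-H$, which is literally valid only when ${\rm tr}[\rho_{ss}H]=0$; the discrepancy is a multiple of $\id$ that drops out of $F$, so both computations legitimately end at $F=0$.
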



\section{Proofs} 
\label{proof.LAN}
In this section we collect the proofs of various results of the paper.

\subsection{Proof of Lemma \ref{th.weak.implies.strong}} \label{sec.proof.weaktostrong}

Denote $f_n(\theta_1,\theta_2) =|\langle \psi_{\theta_1}(n) |\psi_{\theta_2}(n) \rangle |^{2}$, and $f(\theta_1,\theta_2)=|\langle \psi_{\theta_1}|\psi_{\theta_2}\rangle|^2$ for all $\theta_1,\theta_2\in \Theta$. 
Let us also denote the states 
$
\rho_{\theta}(n):= 
\left\vert \psi_{\theta}(n) \right\rangle 
\left\langle \psi_{\theta}(n) \right\vert
$ 
and  
$
\rho_{\theta}:= 
\left\vert \psi_{\theta} \right\rangle 
\left\langle \psi_{\theta} \right\vert.
$ 

We prove that $\lim_{n\rightarrow\infty}\delta(\mathcal Q_n,\mathcal Q)=0$. The other limit $\lim_{n\rightarrow\infty}\delta(\mathcal Q,\mathcal Q_n)=0$ can be proved in a similar manner.

Let $\epsilon>0$. According to the definition of strong convergence, we have to show that there exists an $n_0$ such that for each $n\geq n_0$ we have
$$
\sup_{\theta\in \Theta} \|T_n(\rho_\theta(n)-\rho_\theta\|_1<\epsilon
$$
for some channels $T_n$ (which may of course also depend on $\epsilon$).

Now by the assumption, the sequence $(f_n)$ of functions converges to $f$ uniformly on $\Theta\times\Theta$. Consequently, $f$ is continuous and hence uniformly continuous due to compactness of $\Theta$, so a simple $\epsilon/3$-argument shows that $(f_n)_{n\geq n_1}$ is a \emph{uniformly equicontinuous} family for some $n_1\in \mathbb N$. This means that there exists a $\delta>0$ such that
$$
\sup_{n\geq n_1} |f_n(\theta_1,\theta_2)-f_n(\theta_1',\theta_2')|<\epsilon^2/36
$$
whenever $|\theta_1-\theta_1'|<\delta$ and $|\theta_2-\theta_2'|<\delta$. Since $f_n(\theta,\theta)=1$, we get, in particular, that
\begin{equation}\label{uniform}
\|\rho_{\theta}(n) -\rho_{\tau}(n) \|_1=2\sqrt{1-|\langle \psi_\theta(n) | \psi_\tau(n)\rangle |^{2}}
=2\sqrt{|f_n(\theta,\tau)-f_n(\theta,\theta)|}<\epsilon/3
\end{equation}
whenever $n\geq n_1$ and $|\theta-\tau|<\delta$. Moreover, $(f_n)$ converges to $f$ pointwise, so
\begin{equation}\label{uniform2}
\|\rho_{\theta} -\rho_{\tau}\|_1=2\sqrt{|f(\theta,\tau)-f(\theta,\theta)|}=\lim_{n\rightarrow\infty} 2\sqrt{|f_n(\theta,\tau)-f_n(\theta,\theta)|}\leq \epsilon/3
\end{equation}
whenever $|\theta-\tau|<\delta$.

Since $\Theta$ is compact, the exists a finite set $I_\delta\subset \Theta$ such that
\begin{equation}\label{cover}
\Theta \subset\bigcup_{\tau\in I_\delta} B_\delta(\tau),
\end{equation}
where $B_\delta(\tau)$ denotes the open ball of radius $\delta$ centered at $\tau$. The restrictions of $\mathcal Q_n$ and $\mathcal Q$ to $I_\delta$ are now finite state models. In this case weak convergence is equivalent with strong convergence as shown in Lemma \ref{lemma.weak.stronng.equivalence}, so
there exist channels $T^{\delta}_{n}$ such that
\begin{equation*}
\lim_{n \to \infty}\sup_{\tau\in I_\delta}
\| T_{n}^{\delta} \rho_{\tau}(n) -\rho_{\tau}\|_1 =0.
\end{equation*}
Consequently, we can find an $n_0\geq n_1$ such that
\begin{equation}\label{finiteconvergence}
\| T_{n}^{\delta} \rho_{\tau}(n) -\rho_{\tau}\|_1<\epsilon/3 \quad \text{ for all }n\geq n_0, \, \tau\in I_\delta.
\end{equation}
Now for arbitrary $\theta\in \Theta$, there exists a $\tau\in I_\delta$ such that $|\theta-\tau|<\delta$ due to \eqref{cover}. With such a $\tau$, the inequalities \eqref{uniform}, \eqref{uniform2}, and \eqref{finiteconvergence} hold if $n\geq n_0$. Hence, using the fact that every channel has norm one, we get
$$
\| T_{n}^{\delta} \rho_{\theta}(n) -\rho_{\theta}\|_1 \leq 
\| T_{n}^{\delta}  ( \rho_{\theta}(n) -\rho_{\tau}(n) ) \|_1 +
\| T_{n}^{\delta}  \rho_{\tau}(n) - \rho_{\tau}\|_1+
\|\rho_{\tau}- \rho_{\theta}\|_1<\epsilon,
$$
whenever $n\geq n_0$. This completes the proof.
\qed

\subsection{Proof of Theorem \ref{perturbationtheorem}}\label{sec.proof.key}
By analyticity, we can expand the Kraus operators using the Taylor series at $\theta_0$:
$$
K_{i,\theta}=K_i+(\theta-\theta_0)\dot K_i +\frac 12(\theta-\theta_0)^2\ddot K_i+O((\theta-\theta_0)^3).
$$
Hence, for a fixed $C>0$, we have
$$
K_{i,\theta_0+u/\sqrt n}=K_i+\frac{1}{\sqrt n}u\dot K_i +\frac 12\frac{1}{n}u^2\ddot K_i+O(n^{-\frac 32}).
$$
uniformly for $|u|<C$, in the sense that the error term can be bounded above by a constant times $n^{-\frac 32}$. Using the fact that
$$
T_{u,v;n}(X)=\sum_i K^*_{i,\theta_0+u\sqrt{n}}XK_{i,\theta_0+v\sqrt{n}}
$$
it then follows that, again uniformly for $|u|<C$, $|v|<C$, we have
\begin{equation}\label{Texpansion}
T_{u,v;n} = T + \frac{1}{\sqrt{n}} T_{1,u,v} + \frac{1}{n} T_{2,u,v}+ O(n^{-3/2}),
\end{equation}
where $T$ is the channel associated to $V$, and
\begin{eqnarray}
T_{1,u,v}(X)&=& \sum_{i=1}^k \left\{u \dot{K}^*_i XK_i + v K_i^* X \dot{K}_i  
\right\}\nonumber\\
&=& u \dot{V}^* (X\otimes \id)  V + vV^* (X\otimes \id)  \dot{V}\label{eq.t1}\\
T_{2,u,v}(X)&=&\sum_{i=1}^k \left\{ \frac{u^2}{2} \Ddot{K}_i^* X K_i + \frac{v^2}{2} K_i^* X\Ddot{K}_i + uv \dot{K}_i^* X \dot{K}_i \right\}
\nonumber\\
&=& \frac{u^2}{2} \Ddot{V}^* (X\otimes \id) V + \frac{v^2}{2} V^* (X\otimes\id) \Ddot{V}  + uv \dot{V}^* (X\otimes\id) \dot{V}.
\end{eqnarray}
Using the fact that $\sum_i \dot K_i^*K_i=-\sum_i K_i^*\dot K_i$ by normalisation, we see that \eqref{eq.gauge} implies
\begin{equation}\label{ortho}
(T_{1,u,v}(\id), \id)_{\theta_0}= i(u-v) \sum_{i=1}^k{\rm Im}( \dot{K}_i^* K_i , \id )_{\theta_0} =0,
\end{equation}
where $(A,B)_{\theta_0}:= {\rm Tr}(\rho_{ss} A^*B)$ defines an inner product on $\mathcal{B(H)}$ because $(A,A)_{\theta_0}=0$ implies ${\rm tr}[\rho_{ss}A^*A]=0$ due to $\rho_{ss}$ being full rank by the primitivity of $T$. Now \eqref{ortho} means that $T_{1,u,v}(\id)$ belongs to $\mathcal{D}_1:=\{X: {\rm Tr}( \rho_{ss} X)=0\}=({\rm Id}- T)(\mathcal{B(H)})$, which is the orthogonal complement of $\mathcal{D}_0:= \mathbb{C}\id=\ker({\rm Id}- T)$ with respect to the inner product $(\cdot,\cdot)_{\theta_0}$. Hence, the operator $\mathcal{R}(T_{1,u,v}(\id)) $ is well defined, where $\mathcal{R}$ is the inverse of the restriction of $({\rm Id}- T)$ to the invariant linear space $\mathcal{D}_1$ where that operator is bijective.
This condition, together with the primitivity assumption for $T$, and the fact that each $T_{u,v;n}$ is a contraction in the operator norm, makes up the hypothesis of a second order perturbation theorem (Theorem 2) in \cite{Guta}. Consequently, $\lim_{n\to\infty}\|T^n_{ u, v;n }(\id) - \exp(\lambda(u,v)) \id\|=0$ holds for each $u,v$, where $\lambda(u,v)$ is given by
\begin{align}\label{eq.lambda.uv}
\lambda(u,v) &= (\id,T_{2,u,v}(\id))_{\theta_0}+ (\id,T_{1,u,v}\circ \mathcal{R}\circ T_{1,u,v}(\id))_{\theta_0}.
\end{align}
Moreover, since the perturbative expansion \eqref{Texpansion} (in powers of $n^{-\frac 12}$) is uniform for $|u|,|v|<C$ for any given $C>0$, an inspection of the proof of the above mentioned theorem readily shows that, in fact,
\begin{equation}\label{general_convergence}
\lim_{n\to\infty} \sup_{|u|,|v|\leq C} \|T^n_{ u, v;n }(\id) - \exp(\lambda(u,v)) \id\|=0.
\end{equation}
From \eqref{Texpansion} we also directly get the bound
\begin{equation}\label{taylorbound}
\sup_{|u|,|v|\leq C}\|T_{u,v;n}-T\|\leq \sup_{|u|,|v|\leq C}\|T_{1,u,v}\| n^{-\frac 12}\leq {\rm const.} n^{-\frac 12}.
\end{equation}
As before we now write the spectral decomposition of $\rho_{ss} $ as
$$
\rho_{ss} = \sum_{i=1}^{\dim\hi} \Lambda_{i} |e_{i}\rangle\langle e_{i}|
$$
and we proceed to show the following statement 
\begin{equation}\label{matrix_convergence}
\lim_{n\to\infty} \sup_{|u|,|v|\leq C}\|T^n_{ u, v;n }(|e_j\rangle\langle e_i|) -e^{\lambda(u,v)} \Lambda_i\delta_{ij}\id\|=0.
\end{equation}
Let $k\in \nat$, $k<n$. Since $T_{u,v;n}$ is a \emph{contraction} in the operator norm, we can estimate
\begin{eqnarray*}
\|T_{u,v;n}^n (| e_j \rangle\langle e_i | ) - e^{\lambda(u,v)}\Lambda_i\delta_{i,j}  \id \| &\leq& 
\| T_{u,v;n}^{n-k} T^k (|e_j \rangle\langle e_i |)  - e^{\lambda(u,v)}\Lambda_i\delta_{i,j} \id\| \\
&+& 
\| T_{u,v;n}^k(|e_j \rangle\langle e_i |) - T^k (|e_j \rangle\langle e_i |)  \|\\
&\leq&
\Lambda_{i}\delta_{i,j} \|T_{u,v;n}^{n-k} (\id ) -e^{\lambda(u,v)}\id\|\\
&+& \|T^k(|e_j \rangle\langle e_i |) - \delta_{i,j} \Lambda_{i}\id\|\\
&+&
\| T_{u,v;n}^k(|e_j \rangle\langle e_i |) - T^k (|e_j \rangle\langle e_i |)  \|\\
&\leq&
\Lambda_{i}\delta_{i,j} (\|\id-T^k_{u,v,n}(\id)\|+\|T_{u,v;n}^{n} (\id ) -e^{\lambda(u,v)}\id\|)\\
&+& \|T^k(|e_j \rangle\langle e_i |) - \delta_{i,j} \Lambda_{i}\id\|\\
&+&
\| T_{u,v;n}^k(|e_j \rangle\langle e_i |) - T^k (|e_j \rangle\langle e_i |)  \|.
\end{eqnarray*}
For an arbitrary $\delta>0$, we first choose $k$ to be large enough so that $\|T^k(|e_j \rangle\langle e_i |) - \delta_{i,j} \Lambda_{i}\id\|<\delta/2$. This is possible because $T_0$ is primitive. With this fixed $k$, we have
\begin{align*}
\|\id-T^k_{u,v,n}(\id)\| &= \|T^k(\id)-T^k_{u,v,n}(\id)\|\leq k\|T-T_{u,v,n}\|,\\
\| T_{u,v;n}^k(|e_j \rangle\langle e_i |) - T^k (|e_j \rangle\langle e_i |)  \| &\leq k\|T-T_{u,v,n}\|.
\end{align*}
Hence, using \eqref{general_convergence}, and the bound \eqref{taylorbound}, we obtain
$$\limsup_{n\rightarrow\infty } \sup_{|u|,|v|\leq C}\|T_{u,v;n}^n (| e_j \rangle\langle e_i | ) - e^{\lambda(u,v)}\Lambda_i\delta_{i,j}  \id \|<\delta,$$ so \eqref{matrix_convergence} holds.

It remains to show that 
$$
\lambda(u,v) = - \frac{F(u-v)^2}{8} - ia(u^2- v^2)
$$
with $F$ given by \eqref{eq.fisher.2}. For this we compute now the two terms in \eqref{eq.lambda.uv}. Firstly
\begin{eqnarray*}
(\id,T_{2,u,v}(\id))_{\theta_0} &=& 
\frac{u^2+v^2}{2}\sum_{i=1}^k 
 {\rm Re} (\Ddot{K}_i, K_i)_{\theta_0} 
+ 
i \frac{u^2-v^2}{2} {\rm Im} (\Ddot{K}_i, K_i)_{\theta_0} \\
&+& uv \sum_{i=1}^k(\dot{K}_i ,\dot{K}_i )_{\theta_0}.
\end{eqnarray*}
Note that $\sum_{i=1}^k K_{i;\theta}^*K_{i;\theta}=\id$ implies
$
 {\rm Re} (\Ddot{K}_i, K_i)_{\theta_0} = - (\dot{K}_i, \dot{K}_i)_{\theta_0}
$
such that 
$$
(\id,T_{2,u,v}(\id))_{\theta_0} =-\frac{(u-v)^2}{2} 
  \sum_{i=1}^k (\dot{K}_i, \dot{K}_i)_{\theta_0}  + i\frac{u^2-v^2}{2} \sum_{i=1}^k {\rm Im} (\Ddot{K}_i, K_i)_{\theta_0}  .
$$
We pass now to the second term in \eqref{eq.lambda.uv}. By differentiating $\sum_{i=1}^k K_{i;\theta}^*K_{i;\theta}=\id$ we get 
$\sum_{i=1}^k(\dot{K}_i^* K_i+K_i^*\dot K_i )=0$ so that   
\begin{equation}\label{T1eq}
T_{1,u,v}(\id) = (u-v) \sum_{i=1}^k  \dot{K}_i^* K_i =i(u-v)( -i\dot{V}^*V).
\end{equation}
This implies 
\begin{equation*}
(\id,T_{1,u,v} \circ \mathcal{R}\circ T_{1,u,v}(\id))_{\theta_0}
=i(u-v) {\rm Tr}\left[ T_{1,u,v}^*(\rho_{ss} )  S\right],
\end{equation*}
where $S:= \mathcal{R} (-i\dot{V}^*V )$. 
From \eqref{eq.t1} we get
$$
T_{1,u,v}^*(\rho_{ss} ) = (u+v) {\rm Re} ( K_i \rho_{ss}  \dot{K}_i^* ) + i(u-v) {\rm Im} ( K_i \rho_{ss}  \dot{K}_i^* )
$$
which we insert in the previous formula to obtain
\begin{eqnarray*}
(\id,T_{1,u,v} \circ \mathcal R\circ T_{1,u,v}(\id))_{\theta_0}=&-&
(u-v)^2  \sum_{i=1}^k {\rm Tr} ({\rm Im} ( K_i \rho_{ss}  \dot{K}_i^* ) S )\\
&+&i (u^2-v^2)  \sum_{i=1}^k {\rm Tr} ({\rm Re} ( K_i \rho_{ss}  \dot{K}_i^* )S)
\end{eqnarray*}

Finally, putting together the two terms we obtain
\begin{eqnarray*}
\lambda(u,v) =& -& \frac{(u-v)^2}{2}  \sum_{i=1}^k 
 \left[ (\dot{K}_i, \dot{K}_i)_{\theta_0}  +2 {\rm Tr} ({\rm Im} (K_i \rho_{ss}  \dot{K}_i^* ) S )\right] \\
 &+& 
 i\frac{u^2-v^2}{2}  \sum_{i=1}^k \left[ 
 {\rm Im} (\Ddot{K}_i, K_i)_{\theta_0}+2\sum_{i=1}^k {\rm Tr} ({\rm Re} ( K_i \rho_{ss}  \dot{K}_i^* )S)\right]\\
 &=&
- \frac{(u-v)^2 F}{8} -  ia(u^2-v^2),
\end{eqnarray*}
where $a$ is an unimportant constant, and $F$ is the quantum Fisher information given by 
\begin{equation}\label{eq.qfisher}
F=  4\sum_{i=1}^k \left[ ({\rm Tr}[\rho_{ss}\dot{K}_i^*\dot{K}_i]  + 2{\rm Tr} ({\rm Im} ( K_i  \rho_{ss}  \dot{K}_i^* ) S)\right].
\end{equation}

We will now show that $F$ can be expressed as a four times the variance of the (Markov) `generator' $G^*:= i \dot{V}V^*$, in analogy to the case of a unitary family of pure states. We have
$$
V^* GV= -i V^*V\dot{V}^*V= -i \dot{V}^*V = iV^*\dot V=V^*G^*V,
$$
because $\dot V^*V+V^*\dot V=0$. In particular, $V^*GV$ is selfadjoint. We can now write
$S= \mathcal{R}( V^* G^*V)$, which is also selfadjoint because ${\rm Id} -T$, and hence also $\mathcal R$, commutes with the map $A\mapsto A^*$. Now
\begin{eqnarray*}
\sum_{i=1}^k{\rm Tr}( {\rm Im} (K_i \rho_{ss} \dot{K}_i^*) S )&=&
\sum_{i=1}^k{\rm Im}\, {\rm Tr}(K_i \rho_{ss} \dot{K}_i^* S )\\
&=&{\rm Im}\, {\rm Tr}(\rho_{ss}\dot V^*(\id\otimes S)V )\\
&=&{\rm Im}\, {\rm Tr}(\rho_{ss}V^*iG(\id\otimes S)V)\\
&=&{\rm Re}\, {\rm Tr}[\rho_{ss}\mathcal E(G(\id\otimes\mathcal R\circ\mathcal E(G^*)))].
\end{eqnarray*}
On the other hand
\begin{eqnarray*}
\sum_{i=1}^k{\rm Tr}[\rho_{ss}\dot{K}_i^*\dot{K}_i]    
 &=& 
 {\rm Tr}(\rho_{ss}V^*V \dot{V}^* \dot{V}V^*V) =  {\rm Tr}( \rho_{ss} \mathcal E(GG^*)).
\end{eqnarray*}
Therefore
$$
F= 4 (G^*,G^*)_V
=4 {\rm Tr}\left\{ \rho_{ss} \mathcal{E}\left[ GG^* +  2{\rm Re} [ G (\id\otimes \mathcal{R}\circ\mathcal{E}(G^*) ] \right]\right\}.
$$

\subsection{Proof of Corollary \ref{cor.Fisher.zero}}
\label{proof.cor.Fisher.zero}
\begin{proof} Fix $\theta_0\in \real$, and denote $V=V_{\theta_0}$, with also other notations as before. According to Theorem \ref{characterization}, for each $\theta\in \Theta$ there exists a unitary operator $\tilde U_\theta$ on $\mathcal H$, and a phase factor $e^{i f(\theta)}$ such that
\begin{equation}\label{unitary_family}
V_\theta= e^{if(\theta)}( \tilde U_\theta \otimes \id ) V \tilde U_\theta^*.
\end{equation}
Since the statistical models associated to $V_\theta$ are not affected by multiplication by a phase factor, we may assume without loss of generality that $f=0$. Now we have $V=(\tilde U_{\theta_0} \otimes \id ) V \tilde U_{\theta_0}^*$, which implies that $K_i\tilde U_{\theta_0}=\tilde U_{\theta_0}K_i$ for each Kraus operator $K_i$. Since $V$ is a primitive isometry, it follows that for a large enough $n$, the Kraus operators ${\bf K}^{(n)}_{\bf i}$ span the whole space $\mathcal{B(H)}$ (see e.g. \cite{Wolf}). Hence $X=\tilde U_{\theta_0}X\tilde U_{\theta_0}^*$ for all $X\in \mathcal{B(H)}$. From this it follows that $\tilde U_{\theta_0} =e^{ig(\theta_0)}\id$, where $e^{ig(\theta)}:=\langle 1|\tilde U_\theta|1\rangle$. Similarly, since $\theta\mapsto V_\theta$ is analytic, it follows that $\theta\mapsto \tilde U_\theta K_i \tilde U_\theta^*$ is analytic for all Kraus operators $K_i$, so again by primitivity, $\theta\mapsto \tilde U_\theta X\tilde U_\theta^*$ is analytic for all $X\in \mathcal{B(H)}$.

Hence, the unitary family $\theta\mapsto U_\theta:=e^{-ig(\theta)}\tilde U_\theta$ is analytic, satisfies $U_{\theta_0}=\id$, and we have $V_\theta=(U_\theta \otimes \id ) V U_\theta^*$.

Since $U_\theta$ is an analytic family, we may differentiate; we let $\dot U$ denote the derivative of $U_\theta$ at $\theta=\theta_0$, and put $$H:=-i\dot U^*.$$ Since $U_\theta^*U_\theta =\id$, and $U_0=\id$, we have
$\dot U^*+\dot U=0$, which implies $H=-i\dot U^*=i\dot U=H^*$, i.e. $H$ is selfadjoint. Hence,
\begin{equation}\label{dV}
i\dot V =i(\dot U\otimes \id)V+iV\dot U^*=(\id\otimes H)V-VH.
\end{equation}

We first verify that $(T_{1,u,v}(\id), \id)_{\theta_0}=0$ so no phase adjustment is needed for the Kraus operators. Indeed, using equation \eqref{T1eq} from the preceding proof, we get
\begin{align*}
(\id,T_{1,u,v}(\id))_{\theta_0}&={\rm tr}[\rho_{ss}T_{1,u,v}(\id)] =i(u-v){\rm tr}[\rho_{ss}(-i\dot V^*V)]\\
&=i(u-v){\rm tr}[\rho_{ss}(V^*(\id\otimes H)V-HV^*V)]=i(u-v){\rm tr}[\rho_{ss}(T(H)-H)]=0,
\end{align*}
because $\rho_{ss}$ is the stationary state.
In order to find the Fisher information we compute
\begin{align*}
G^*&=i\dot VV^*=(\id\otimes H)VV^*-VHV^*,\\
\mathcal E(G^*) &=V^*G^*V =V^*(\id\otimes H)V-VH=T(H)-H =-({\rm Id}-T)(H),\\
\mathcal R\circ \mathcal E(G^*)&=-H,\\
\mathcal E(G(\id\otimes\mathcal{R}\circ\mathcal{E}(G^*))&=
-\mathcal E(G(\id\otimes H))=-V^*(VV^*(\id\otimes H)-VHV^*)(\id\otimes H)V\\
&=-T(H^2)+HT(H),\\
\mathcal E\left(2{\rm Re}((G(\id\otimes\mathcal{R}\circ\mathcal{E}(G^*))\right)&=
2{\rm Re} \mathcal E(G(\id\otimes\mathcal{R}\circ\mathcal{E}(G^*))=2{\rm Re}(-T(H^2)+HT(H))\\
&=-2T(H^2)+HT(H)+T(H)H,\\
\mathcal E(GG^*) &=V^*(VV^*(\id\otimes H)-VHV^*)((\id\otimes H)VV^*-VHV^*)V\\
&=(V^*(\id\otimes H)-HV^*)((\id\otimes H)V-VH)\\
&=V^*(H^2\otimes\id)V-HV^*(\id\otimes H)V-V^*(\id\otimes H)VH+H^2\\
&=T(H^2)-HT(H)-T(H)H+H^2.
\end{align*}
Hence,
\begin{align*}
F&=4 {\rm tr}\left\{ \rho_{ss} \mathcal{E}\left[ GG^* +  2{\rm Re} [ G (\id\otimes \mathcal{R}\circ\mathcal{E}(G^*) ] \right]\right\}\\
&=4{\rm tr}[\rho_{ss}(-T(H^2)+H^2)]=0,
\end{align*}
because $\rho_{ss}$ is the stationary state.
%
\end{proof}

\section{Conclusions}

We have shown that the dynamics of an ergodic quantum Markov chain can be identified up to a `change of coordinates' transformation consisting of a phase multiplication and a unitary conjugation of the Kraus operators. This result can be seen as a quantum analogue of the classical result by Petrie \cite{Petrie} on hidden Markov chains. An interesting open problem is to find an explicit algorithm for computing the equivalence class from the output state.

We have further shown that the output state, which is a `purely generated' finitely correlated state \cite{FannesNachtergaeleWerner}, satisfies the local asymptotic normality property, and we have provided an explicit expression of the limiting quantum Fisher information, which can be interpreted as a Markov variance of a certain generator, in analogy to the case of unitary rotation families of quantum state. This analysis can be pushed forward by analysing the statistical properties of certain measurements such as simple repeated measurements on the output \cite{Guta}. In general such measurements do not achieve the quantum Fisher information and it remains an open question to characterise the optimal measurement procedure. This is related to another open problem, that of establishing a general quantum 
Central Limit for the fluctuation operators $\mathbb{F}_n(X)$ defined in  section \ref{sec.markov.covariance}.

More generally, we conjecture that local asymptotic normality holds for Markov chains with mixed i.i.d. input states. On the other hand, Markov chains with several ergodic components may fail to satisfy local asymptotic normality, and even exhibit `Heisenberg scaling', with relevance for quantum metrology applications.   

\

\noindent {\bf Acknowledgment.} This work was supported by the EPSRC project EP/J009776/1.

%
%



\end{document}